\PassOptionsToPackage{nameinlink}{cleveref}
\documentclass[a4paper,UKenglish,cleveref,autoref,thm-restate]{lipics-v2021}

\nolinenumbers
\usepackage{latexsym,amsmath,amsthm,amssymb}

\usepackage{pifont}
\usepackage{enumerate}
\usepackage{subcaption}
\usepackage[T1]{fontenc}
\usepackage{bbm}
\usepackage{cite}
\usepackage{mathtools}
\usepackage{amsfonts}
\usepackage{pifont}
\usepackage{url}
\usepackage{multirow}
\usepackage{multicol}
\usepackage{fancybox}
\usepackage{colortbl}
\usepackage{soul}
\usepackage{color}
\usepackage{tikz}
\usepackage{tkz-berge}
\usetikzlibrary{calc} 
\usetikzlibrary{backgrounds}
\usetikzlibrary{arrows.meta}
\usepackage{rotating}
\usepackage{pdflscape}
\usepackage{afterpage}
\usepackage{comment}
\usepackage{etoolbox}
\usepackage{environ}
\usepackage{dirtytalk}
\usepackage[rgb]{xcolor}

\usepackage{cite}

\usepackage{algorithm}
\usepackage{algpseudocode}

\usepackage{todonotes}

\definecolor{mid-green}{rgb}{0.15,0.65,0.15}
\definecolor{dark-green}{rgb}{0.15,0.25,0.15}
\definecolor{dark-red}{rgb}{0.7,0.15,0.15}
\definecolor{dark-blue}{rgb}{0.15,0.15,0.9}
\definecolor{medium-blue}{rgb}{0,0,0.5}
\definecolor{gray}{rgb}{0.5,0.5,0.5}
\definecolor{color-Ig}{rgb}{0.15,0.7,0.15}
\definecolor{darkmagenta}{rgb}{0.30, 0.0, 0.30}
\definecolor{orange}{rgb}{1,0.5,0}

\hypersetup{
  colorlinks, linkcolor={dark-blue},
citecolor={mid-green}, urlcolor={dark-blue}}

\Crefname{remark}{Remark}{Remarks}
\Crefname{lemma}{Lemma}{Lemmas}

\newtheorem{invariant}{Invariant}
\Crefname{invariant}{Invariant}{Invariants}

\newtheorem{condition}{Condition}
\Crefname{condition}{Condition}{Conditions}

\newtheorem{vcvcrule}{Reduction Rule}

\Crefname{vcvcrule}{Rule}{Rules}

\newtheorem{isfvsrule}{Reduction Rule}

\Crefname{isfvsrule}{Rule}{Rules}

\newtheorem{istdrule}{Reduction Rule}

\Crefname{istdrule}{Rule}{Rules}

\newtheorem{isbdrule}{Reduction Rule}

\Crefname{isbdrule}{Rule}{Rules}

\Crefname{fvsfvsrule}{Rule}{Rules}

\usetikzlibrary{decorations,arrows,shapes}
\newcommand{\inners}{1.2pt}
\newcommand{\outers}{1pt}
\newcommand{\angled}[1]{\left\langle{#1}\right\rangle}
\newcommand{\pdkernelname}{PD}
\newcommand{\pdkernel}{{\pdkernelname{} kernel}}
\newcommand{\pdkernels}{{\pdkernelname{} kernels}}
\newcommand{\pdkernelization}{{\pdkernelname{} kernelization}}
\usepackage{complexity}
\newcommand{\Hard}{\text{hard}}
\newclass{\pNP}{paraNP}
\newcommand{\Hness}{hardness}
\newcommand{\NPH}{\NP\text{-}\Hard}
\newcommand{\NPHness}{\NP\text{-}\Hness}

\newclass{\Complete}{complete}
\newclass{\Total}{Total}
\newclass{\Delay}{Delay}
\newclass{\Cness}{completeness}
\newclass{\PPT}{PPT}
\newclass{\IncP}{IncP}
\newclass{\DelayP}{DelayP}
\newclass{\Pclass}{P}

\newfunc{\enum}{enum}
\newfunc{\tin}{in}
\newfunc{\tout}{out}
\newfunc{\bits}{bits}
\newfunc{\Conf}{conf}
\newcommand{\conf}[2]{\Conf_{#1}\left(#2\right)}
\newcommand{\confg}[3]{\Conf^{#3}_{#1}\left(#2\right)}
\newfunc{\Sol}{Sol}
\newfunc{\YES}{yes}
\newfunc{\NOi}{no}
\newfunc{\bdd}{bd}
\newfunc{\tw}{tw}
\newfunc{\td}{td}
\newfunc{\dcc}{dcc}
\newfunc{\dc}{dc}
\newfunc{\roott}{root}
\newfunc{\mul}{mul}
\newfunc{\fvs}{fvs}
\newfunc{\dirx}{dir}
\newfunc{\allx}{all}
\newfunc{\largex}{large}
\newfunc{\prop}{propagate}

\newfunc{\dist}{dist}
\newfunc{\diam}{diam}
\newfunc{\border}{border}
\newfunc{\dgr}{deg}
\newfunc{\degR}{degR}

\newfunc{\Rep}{Rep}
\newfunc{\DPx}{DP}

\newfunc{\leavesx}{leaves}

\newcommand{\pname}[1]{{\sc #1}}

\newfunc{\opt}{opt}
\newfunc{\rmcx}{rmc}

\newfunc{\ins}{ins}
\newfunc{\id}{id}
\newfunc{\shift}{shift}
\newfunc{\glue}{glue}
\newfunc{\proj}{proj}
\newfunc{\joinf}{join}
\newfunc{\idx}{index}
\newfunc{\Lift}{Lift}
\newfunc{\ePPT}{ePPT}
\newfunc{\up}{up}
\newfunc{\cbd}{cb}
\newfunc{\down}{down}

\def\A{\mathcal{A}}
\def\B{\mathcal{B}}
\def\C{\mathcal{C}}
\def\D{\mathcal{D}}
\def\F{\mathcal{F}}
\def\G{\mathcal{G}}
\def\Hc{\mathcal{H}}
\def\I{\mathcal{I}}
\def\P{\mathcal{P}}

\def\S{\mathcal{S}}
\def\T{\mathcal{T}}
\def\X{\mathcal{X}}
\def\Y{\mathcal{Y}}
\newcommand{\cb}[1]{#1_\cbd}
\newcommand{\bd}[1]{\bdd\left(#1\right)}
\newcommand{\xmark}{\ding{55}}
\newcommand{\mycmark}{\ding{52}}
\newcommand{\inpFPT}{input-\(\FPT\)}

\newcommand{\bigO}[1]{{\mathcal{O}\!\left(#1\right)}}

\newcommand{\probl}[3]{
  \begin{flushleft}
    \fbox{
      \begin{minipage}{0.97\linewidth}
        \noindent {\sc #1}\\
        {\bf Instance:} #2\\
        {\bf Question:} #3
      \end{minipage}}
    \medskip
  \end{flushleft}
}

\newcommand{\problenum}[3]{
  \begin{flushleft}
    \fbox{
      \begin{minipage}{0.97\linewidth}
        \noindent {\sc #1}\\
        {\bf Instance:} #2\\
        {\bf Enumerate:} #3
      \end{minipage}}
    \medskip
  \end{flushleft}
}

\NewEnviron{sproof}[1] {
    \begin{proof}[Proof of safeness of~#1]\BODY\end{proof}
}

\NewEnviron{invproof}[1] {
    \begin{proof}[Soundness of~#1]\BODY\end{proof}
}

\NewEnviron{cproof} {
    \begin{proof}[Proof of the claim]\BODY\end{proof}
}

\definecolor{goodred}{HTML}{CC6677}
\definecolor{goodblue}{HTML}{332288}
\definecolor{goodyellow}{HTML}{DDCC77}
\definecolor{goodgreen}{HTML}{117733}
\definecolor{goodcyan}{HTML}{88CCEE}
\definecolor{goodwine}{HTML}{882255}
\definecolor{goodteal}{HTML}{44AA99}
\definecolor{goodolive}{HTML}{999933}
\definecolor{goodpurple}{HTML}{AA4499}

\algnewcommand\algorithmicforeach{\textbf{for each}}
\algdef{S}[FOR]{ForEach}[1]{\algorithmicforeach\ #1\ \algorithmicdo}

\usetikzlibrary { decorations.pathmorphing, decorations.pathreplacing, decorations.shapes}

\newcommand{\titlemath}[1]{$#1$}

\title{A more versatile model for enumerative kernelization: a case study for  Vertex
Cover} 

\author{Marin Bougeret}{LIRMM, Université de Montpellier, CNRS, Montpellier,
France}{marin.bougeret@lirmm.fr}{https://orcid.org/0000-0002-9910-4656}{}

\author{Guilherme C. M. Gomes}{LIRMM, Université de Montpellier, CNRS, Montpellier,
  France\\
  Universidade Federal de Minas Gerais, Belo Horizonte,
Brazil}{gcm.gomes@dcc.ufmg.br}{ttps://orcid.org/0000-0002-7487-3475}{Funded by the
  European Union, project PACKENUM, grant number 101109317. Views and opinions expressed
  are however those of the author only and do not necessarily reflect those of the European
Union. Neither the European Union nor the granting authority can be held responsible for them.}

\author{Ignasi Sau}{LIRMM, Université de Montpellier, CNRS, Montpellier,
France}{ignasi.sau@lirmm.fr}{https://orcid.org/0000-0002-8981-9287}{French project ELIT
(ANR-20-CE48-0008-01).}

\authorrunning{ } 

\Copyright{} 

\ccsdesc[100]{Theory of computation~Design an analysis of algorithms~Parameterized
complexity and exact algorithms}

\ccsdesc[100]{Theory of computation~Design an analysis of algorithms~Graph algorithms analysis}

\keywords{Kernelization, Enumeration, Vertex Cover, Structural parameterization} 

\category{} 

\relatedversion{} 

\acknowledgements{}

\EventEditors{John Q. Open and Joan R. Access}
\EventNoEds{2}
\EventLongTitle{42nd Conference on Very Important Topics (CVIT 2016)}
\EventShortTitle{CVIT 2016}
\EventAcronym{CVIT}
\EventYear{2016}
\EventDate{December 24--27, 2016}
\EventLocation{Little Whinging, United Kingdom}
\EventLogo{}
\SeriesVolume{42}
\ArticleNo{23}

\begin{document}

\maketitle

\begin{abstract}
  Enumerative kernelization is a relatively recent and promising area sitting at the intersection of parameterized complexity and enumeration algorithms, with two main models being proposed. The first, known as enum-kernels and due to Creignou et al. [Theory Comput. Syst., 2017], was too permissive, leading to constant-sized kernels for every problem solvable with \FPT{}-delay.
To remedy this, Golovach et al. [J. Comput. Syst. Sci., 2022] proposed the polynomial-delay enumeration kernelization  model that, while addressing the shortcoming of the previous one, appears to be too strict, which we believe is a central reason for the slow development that the area has enjoyed so far.
In this paper, we propose a new model for enumeration kernels, which we have called polynomial-delay (PD) kernels. It is more flexible than Golovach et al.'s kernels while still preserving their qualities; informally, it allows us to ignore ``bad'' solutions of the compressed instance when producing the solution set of the input instance, but still requires that the ``good'' solutions are lifted with polynomial-delay. After discussing the main properties of our model, we design a generic framework for vertex-subset problems to adapt decision kernels into PD kernels of the same size.
We showcase our model's increased versatility and the  expressive power of our framework on the \textsc{Enum Vertex Cover} problem, where we want to list all vertex covers of size at most $k$ of a given graph. In particular, we manage to generalize the kernelization dichotomy by Bougeret et al. [SIAM J. Discrete Math., 2022] about the existence of polynomial kernels for \textsc{Vertex Cover} parameterized by the vertex deletion distance to a minor-closed graph class, as well as the solution size and feedback vertex number parameterizations.
The second one, in particular, is significantly simpler than the kernel designed by Bougeret et al. [IPEC, 2025], requiring only a few lines for its lifting algorithm.
Beyond our framework, we also show how to generalize to the enumeration setting the kernel of Bougeret et al. [Algorithmica, 2019] for the vertex-deletion distance to $c$-treedepth.
\end{abstract}

\section{Introduction}

Enumeration problems are central objects in both practical and theoretical computer science, with the main example of the former being the branch-and-bound method~\cite{branch_bound} (and its relatives), universally used in integer programming solvers.
Instead of answering a question in either the positive or the negative, in this class of problems the goal is to generate all witnesses, or \emph{solutions}, that satisfy the given problem's constraints, or decide that no solution exists; we denote by $\Sol(x)$ the solution set of an instance $x$ of some fixed problem of interest.
Typically, the number of solutions will be exponential in the input size $n$, and thus \emph{input-sensitive} algorithms, whose running times are only analyzed with respect to $n$, are not powerful enough tools to capture all nuances intrinsic to enumeration problems; in particular, \emph{input-polynomial} algorithms, whose running times are of the form $n^\bigO{1}$, are inadequate models of efficient enumeration.
Consequently, enumeration algorithms are usually analyzed in the \emph{output-sensitive} context, i.e., their running times depend on both $n$ and the size of the output; in this setting, the most popular classes of efficiently solvable enumeration problems are those that admit \emph{incremental-polynomial time}~\cite{frequent_incremental,triangle_incremental} or \emph{polynomial-delay}~\cite{bagan_delay,fo_query_delay,maximal_ind_set_delay} algorithms, named $\IncP$ and $\DelayP$, respectively, for which the relation $\DelayP \subsetneq \IncP$ holds~\cite{strozecki_thesis}.
In $\IncP$, the time taken to output the $i$-th solution should be polynomial in $i + n$;  whereas in $\DelayP$, we want that, between startup and the first output (the precalculation), any two outputs, and between the last output and termination (postcalculation), the running time is bounded by $\poly(n)$.
These classes were extended by Creignou et al.~\cite{creignou_hard}, which established links between enumeration complexity and the polynomial hierarchy of Stockmayer~\cite{polynomial_hierarchy} to derive a hierarchy of hard enumeration problems and appropriate notions of reductions between them.
Incremental-polynomial time and polynomial-delay algorithms are unfeasible goals in many interesting cases; in particular, enumeration problems for which the existence of a single solution is an $\NPH$ problem do not admit such algorithms unless $\Pclass = \NP$.

Orthogonally to enumeration, the theory of parameterized complexity gained significant traction as a means of coping with \NPHness, with thousands of papers and some books~\cite{flum_grohe,cygan_parameterized,niedermeier_book,downey_fellows,book_kernels} dedicated to the subject.
In this framework, algorithms are analyzed according to both the total input size $n = |x|$ and a \emph{parameter} of interest $k = \kappa(x)$, that is also part of the input, with the efficient algorithms, known as \emph{fixed-parameter tractable} (\FPT), being those of running time $f(k) \cdot n^\bigO{1}$, for some computable function $f$.
This increase in computational power allows for the existence of efficient algorithms for several important $\NPH$ problems, such as \pname{Vertex Cover} and \pname{Feedback Vertex Set}, when parameterized by the solution size.
Complementarily, a rich lower bound theory capable of showing fine grained limits to this broader notion of tractability has also been developed, offering further insights between parameterized and non-parameterized decision problems.
An equivalent definition of tractability in this framework, and central to our work, is that of \emph{kernelization}. 
A \emph{kernel} is a polynomial-time algorithm that, given an instance $(x,k)$, outputs an equivalent instance $(y,\ell)$  such that $y,\ell \leq g(k)$ for some computable function $g$, known as the \emph{size} of the kernel.
Kernels can be seen as mathematically guaranteed \emph{pre-processing} algorithms, and as a way to explain the tremendous success that simple clean-up routines have in real-world applications, such as in integer optimization.
Polynomial size kernels are of particular interest, as they typically result in significant reductions in input size.

A very successful case study in kernelization is that of \pname{Vertex Cover}, arguably the most fundamental problem in parameterized complexity.
Abu-Khzam et al.~\cite{crown_decomp_fellows} presented the first kernel with $\bigO{k}$ vertices for the problem when parameterized by the solution size $k$, using the then-novel concept of \emph{crown decompositions} to improve on the previously best known kernel with $\bigO{k^2}$ vertices, implied by a result of Buss and Goldsmith~\cite{buss_kernel}.
\pname{Vertex Cover} has been the paradigmatic problem in the very active area of kernelization with {\em structural parameters}, that is, parameters that take into account structural properties of the input, and that can be much smaller than the solution size. The ultimate goal is to unveil the ``smallest'' parameters for which the problem admits a polynomial kernel. This research program was triggered by the breakthrough result of Jansen and Bodlaender~\cite{vc_fvs} showing the existence of a kernel with $\bigO{t^3}$ vertices for \pname{Vertex Cover} parameterized by the size $t$ of a given feedback vertex set.
Not long after, it became a textbook example to show that no polynomial kernel exists when parameterizing by the treewidth of the input graph~\cite{cross_composition,cygan_parameterized}, and so the main question was now how deep into the graph parameter hierarchy one could go, particularly when the parameterization was on the vertex-deletion distance to minor-closed graph classes.
In 2019, Bougeret and Sau~\cite{bougeret_is_td} took another step in this direction by presenting a polynomial kernel under the parameterization of vertex-deletion distance to a graph of treedepth $c$, for some constant $c$; see also~\cite{MajumdarRR18,FominS16,HolsK17}. Finally, Bougeret et al.~\cite{bridgedepth} proved that, under standard complexity hypotheses, \pname{Vertex Cover} admits a polynomial kernel when parameterized by the vertex-deletion distance to some minor-closed graph class $\F$ if and only if $\F$ has bounded \emph{bridgedepth}, a new parameter they introduced to this end.

Given the growth of the parameterized complexity field, it is no surprise then that there have been mounting efforts to extend it to the enumeration setting~\cite{meeks_oracle,fernau_param_enum,damaschke_cluster_editing,damaschke_full_kernels,golovach2022refined,enum_long_path,multicut_ipec,oscar_degeneracy,enum_dcut,creignou2017enum}, with \emph{\inpFPT}, \emph{incremental-\FPT}, and \emph{\FPT-delay} having their expected definitions; we also highlight the work of Creignou et al.~\cite{creignou2017enum} as being particularly influential in this endeavor.
Satisfactory notions of kernelization for enumeration problems, which we broadly call \emph{enumerative kernelization}, were not as quickly defined, even though kernelization itself did play a significant role in several works.
In the early 2000s, Fernau~\cite{fernau_param_enum} presented an \inpFPT{} algorithm for the enumeration of \emph{minimum} vertex covers when parameterized by the solution size $k$, one of the first parameterized enumeration algorithms in the literature.
Interestingly for our work, Fernau's algorithm uses a kernelization strategy based on the classical Buss kernel~\cite{buss_kernel,downey_fellows} to argue that we can reduce this problem to the enumeration of minimum vertex covers in a graph with $\bigO{k^2}$ vertices; the algorithm, however, uses standard parameterized decision complexity language and mechanisms, and so does not offer many other insights beyond its own existence.
A few years later, Damaschke~\cite{damaschke_full_kernels} investigated subset minimization problems, i.e., where the goal is to enumerate every subset of size at most $k$ of a given ground set that satisfies the desired property.
For example, \pname{Enum Vertex Cover} is the problem where, given a graph $G$ and integer $k$, the goal is to enumerate every subset of vertices of size at most $k$ that hits all edges of $G$.
For this type of problem, Damaschke introduced the notion of \emph{full kernels}: in a subset minimization problem, a full kernel is a set that contains the union of all minimal solutions of size at most $k$.
To the best of our knowledge, this is the first model of enumerative kernelization, even if it is not a general purpose one. 
Nevertheless, Damaschke applied this new formalism to \pname{Enum $c$-Hitting Set} parameterized by the maximum solution size $k$, thus showing an \inpFPT{} algorithm under this parameterization.
In a subsequent paper, Damaschke~\cite{damaschke_cluster_editing} successfully applied full kernels to \pname{Enum Cluster Edge-Editing}: the problem of enumerating all ways of making at most $k$ editions, with each edition being an edge addition or removal, such that every connected component of the resulting graph is a clique, when parameterized by $k$.
All of these results, in the end, imply that the target problems admit \inpFPT{} algorithms; in this direction, only very recently did Golovach et al.~\cite{golovach2022refined} define \emph{fully-polynomial enumeration kernels} (FPE kernels), which are equivalent to the existence of an \inpFPT{} algorithm, much like \FPT{} algorithms and kernels are equivalent in the decision world; we will not discuss input-sensitive enumeration algorithms or kernels further, as our focus is on delay-based algorithms.

Output-sensitive parameterized enumeration was little explored until Creignou et al.~\cite{creignou2017enum} defined \emph{enum-kernels}: a general purpose kernelization framework for parameterized enumeration problems that is applicable to a problem if and only if it admits an \FPT-delay algorithm.
Intuitively, the kernel has two parts: the first, $\A_1$, is the \emph{compression algorithm} and has the same constraints as a decision kernel, while the second $\A_2$, known as the \emph{lifting algorithm}, receives the input $(x,k)$, the output $(y,\ell)$ of $\A_1$, and a solution $Y \in \Sol(y)$, and must output, with $\left(f(k + \ell) \cdot \poly(|x| + |y| + |Y|)\right)$-delay, where $f$ is a computable function, a (possibly empty) subset $\S_Y \subseteq \Sol(x)$; additionally, the $\S_Y$'s must be disjoint and their union must be precisely $\Sol(x)$.
The authors then showcased their proposed kernelization model on \pname{Enum Vertex Cover} parameterized by the maximum solution size (called \pname{All-Vertex-Cover} in~\cite{creignou2017enum}) by proving that the classical Buss kernel~\cite{buss_kernel,downey_fellows} is an enum-kernel.
This model, however, has a fatal limitation.
As shown by Golovach et al.~\cite{golovach2022refined}, a parameterized enumeration problem admits an \FPT-delay algorithm if and only if it admits an enum-kernel of \emph{constant} size.
As such, there is little point in developing small enum-kernels besides as a certificate that the target problems has an \FPT-delay algorithm.
To overcome this issue, Golovach et al. proposed a new model for enumerative kernelization, which they called \emph{polynomial-delay enumeration kernel}: instead of \FPT-delay, the lifting algorithm is restricted to \emph{polynomial-delay} but must always output at least one solution.
With this alteration, they are still capable of proving that their kernels and \FPT-delay algorithms are equivalent, while gaining the critical property that a problem admits a kernel of constant size if and only if it admits a polynomial-delay algorithm.

In the same paper that introduces the kernelization model, Golovach et al.~\cite{golovach2022refined} provide these types of kernels for several enumeration variants of the \pname{Matching Cut} problem under different structural parameterizations; it is also indirectly discussed how the Buss kernel is, in fact, a polynomial-delay enumeration kernel for \pname{Enum Vertex Cover} of quadratic size.
Their work has since been followed by some further examples of these kernels.
In particular, Komusiewicz and Majumdar~\cite{enum_dcut} studied structural parameterizations of \pname{Enum $d$-Cut}, whose decision version is a direct generalization of \pname{Matching Cut}.
Simultaneously, Gomes et al.~\cite{multicut_ipec} generalized \pname{Matching Cut} in a new direction, known as \pname{Matching Multicut}, as well as some results of Golovach et al.~\cite{golovach2022refined} to \pname{Enum Matching Multicut}.
Outside of \pname{Matching Cut}-adjacent problems, we are only aware of two works on the subject. The first, by Komusiewicz et al.~\cite{enum_long_path}, is on an enumeration variant of \pname{Longest Path} under structural parameterizations.
In the second one, which only came out very recently, Bougeret et al.~\cite{vc_fvs_strong_pd_kernel} developed polynomial-sized polynomial-delay enumeration kernels for \pname{Enum Vertex Cover} and \pname{Enum Feedback Vertex Set} under their natural parameters $k$; for the first, they developed a lengthy lifting algorithm for the classical crown decomposition-based kernel to obtain a kernel with $2k$ vertices, while for the latter they designed a novel $\bigO{k^3}$ kernel.

While the enum-kernels of Creignou et al.~\cite{creignou2017enum} had too much lifting power,
Golovach et al.'s~\cite{golovach2022refined} polynomial-delay enumeration kernels have proven quite complex to be designed, with even what should be the simplest example, \pname{Enum Vertex Cover}, demanding long and cumbersome lifting algorithms.
By closely examining the latter model's examples, much of the difficulty seems to originate in the requirement that \emph{every} solution $Y$ is lifted to a \emph{non-empty} set $S_Y$.
Informally, this imposes the constraint that no ``noise'' is introduced in $\Sol(y)$ when compressing $x$ (typically, it implies that $|\Sol(y)| \le |\Sol(x)|$).
This is a critical limitation in our view, as it implies that polynomial-delay enumeration kernels cannot handle failure states, a common strategy in enumeration algorithms.
For example, in the ordered generation framework of Eiter et al.~\cite{eiter2003new}, initially developed for the \pname{Minimal Hypergraph Transversal Enumeration} problem, known as \pname{Trans-Enum} in the enumeration community, bad intermediate solutions exist and are pruned iteratively, and these may only be found deep within the algorithm's recursion tree.
Furthermore, the $e$-, $D$-, and $I$-reductions used by Creignou et al.~\cite{creignou_hard} to characterize their hierarchy of hard enumeration problems heavily rely on the possibility that the reduced instance has solutions unsuited to be lifted back to solutions of the input.

\smallskip
\noindent\textbf{Our contributions.} 
We propose a new enumerative kernelization model that sits between Creignou et al.'s and Golovach et al.'s; while it requires that lifting be performed with polynomial-delay (as in the latter), it also handles failure states, by allowing that some solutions of the compressed instance be ignored by the lifting procedure (as in the former). As evidenced by our results, this allowance leads to simpler algorithms while having the same nice theoretical guarantees of polynomial-delay enumeration kernels.
Our new enumerative kernelization model, which we call \emph{polynomial-delay kernel}, \pdkernel{} for short, can also be seen as a generalization of Golovach et al.'s polynomial-delay enumeration kernels; for convenience we rename the latter as \emph{strong \pdkernels{}}.
We begin by showing some fundamental properties of our model.
In particular, we prove that its existence indeed corresponds to \FPT-delay algorithms and that the size of the kernel is a meaningful metric to optimize (cf. \cref{thm:kernel_fpt_delay_equiv}).
To showcase our model's versatility, we introduce a generic framework for graph vertex-subsets problems, which allows for the systematic translation of decision kernels to their enumeration counterparts. This framework is based on
\cref{cond:good_decision_kernel,cond:decidable_trace,cond:poly_delay}, which encapsulate sufficient properties that, if satisfied by a decision kernel and the problem $\Pi$ at hand, immediately imply a \pdkernel{} of the same size for the problem of enumerating all solutions of $\Pi$; this is a highly desired property, as it is natural to try to leverage the extensive body of knowledge of decision kernelization to the enumeration setting. Intuitively, \cref{cond:good_decision_kernel} asks that the solution set of the input can be partitioned into similarity classes and that the kernelization preserves enough information to distinguish these classes; \cref{cond:decidable_trace} asks for an efficient algorithm that decides if an output solution should be a \emph{canonical solution} for a given class of input solutions; finally, \cref{cond:poly_delay} asks for a polynomial-delay algorithm that takes a canonical solution and lists all input solutions of the associated class.
We prove in the sufficiency of these conditions in \cref{thm:framework_kernel}.
As an evidence of our model's power, we apply the proposed framework in our main technical contribution: proving that the dichotomy for minor-closed graph classes of decision kernelization~\cite{bridgedepth} also holds in the enumeration setting.
That is, we show that, under standard complexity hypotheses, \pname{Enum Vertex Cover} admits a polynomial-sized \pdkernel{} when parameterized by the size of a given modulator $X$ of $G$ to a minor-closed graph class $\F$ (i.e. $G \setminus X$ has no minor forbidden by $\F$) if and only if $\F$ has bounded bridgedepth; this is formally stated in \cref{thm:dichotomy}.

\begin{restatable}{theorem}{thmdichotomy}
    \label{thm:dichotomy}
    Let $\F$ be a minor-closed graph class and assume $\NP \nsubseteq \coNP/\poly$.
    \pname{Enum Vertex Cover} parameterized by the size of a given modulator to $\F$ admits a \pdkernel{} of polynomial size if and only if $\F$ has bounded bridgedepth.
\end{restatable}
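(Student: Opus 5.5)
The plan is to prove the two directions separately. The negative direction will be transferred from the decision lower bound, and the positive direction will be obtained by feeding the decision kernel of Bougeret et al.~\cite{bridgedepth} into the framework of \cref{thm:framework_kernel}.

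\emph{Lower bound (unbounded bridgedepth).} I would show that a \pdkernel{} of polynomial size for \pname{Enum Vertex Cover} yields a polynomial compression for the decision problem \pname{Vertex Cover} with the same parameter. Given $(G,X,k)$, run the compression algorithm $\A_1$ to get $(G',X',k')$ of size polynomial in $|X|$. The \pdkernel{} guarantees that $\Sol(G,k)$ is the union of the lifted sets $\S_{Y'}$, so $\Sol(G,k)\neq\emptyset$ implies $\Sol(G',k')\neq\emptyset$. For the converse, I would use the model's requirement that lifting is done with polynomial delay and outputs a solution whenever $\Sol(x)\neq\emptyset$. I expect the model to impose, via a precomputation or first-output condition, that $\Sol(y)$ is nonempty iff $\Sol(x)$ is. If it does not, I would instead view the compressed instance together with the lifting algorithm as an instance of an \NP{} language: ``is there $Y'\in\Sol(y)$ whose lifting outputs something''. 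This is decidable in \NP{} because each lifting step runs in polynomial time in $|x|$ and the kernel size, and that yields a polynomial compression into an \NP{} language. Either way, the decision lower bound of~\cite{bridgedepth} under $\NP\nsubseteq\coNP/\poly$ for classes of unbounded bridgedepth gives the contradiction. I would also check that the lower bound in~\cite{bridgedepth} rules out compressions and not only kernels, which it does since it goes through cross-composition.

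\emph{Upper bound (bounded bridgedepth).} Here I would take the polynomial decision kernel of~\cite{bridgedepth} and verify \cref{cond:good_decision_kernel,cond:decidable_trace,cond:poly_delay}. The similarity classes of input solutions would be defined by their trace on the modulator $X$ together with enough data to fix their behaviour on the ``reduced'' components. The natural choice is the trace $S\cap X$, possibly refined by which conflict structures are hit.

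The key steps, in order, are:
\begin{enumerate}
\item Check that the kernel only deletes or replaces components of $G\setminus X$ by gadgets while keeping $X$. For each $S_X\subseteq X$ independent in the complement sense, the minimum cost of extending $S_X$ inside the removed parts must be preserved up to an additive shift computed by the kernel. This gives \cref{cond:good_decision_kernel}.
\item Define the canonical solutions of the kernel as those $Y'$ whose part on each gadget is a fixed canonical minimum extension for $Y'\cap X$. Deciding canonicity is polynomial, because each gadget or component has bounded bridgedepth and hence admits polynomial computation of optimal extensions. This gives \cref{cond:decidable_trace}.
\item Given a canonical $Y'$, enumerate all vertex covers of $G$ of size at most $k$ with the same trace. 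This is an enumeration of vertex covers of $G\setminus X$, restricted so that $N(X\setminus Y')$ is forced, under a budget. Vertex cover extension with a size bound has polynomial delay by flashlight search, since feasibility of partial assignments is checkable in polynomial time on the bounded-bridgedepth part. This gives \cref{cond:poly_delay}.
\end{enumerate}

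The main obstacle is step 1. The kernel of~\cite{bridgedepth} is intricate, using iterated reductions along the bridgedepth decomposition, and the traces must be preserved through every rule, not just the optimum value. My first attempt would be to argue rule by rule that each reduction rule is ``trace-preserving with shift''. Where a rule merges or deletes vertices of $X$ (for example, deleting modulator vertices that are forced into every solution), I would refine the similarity classes to fix those vertices accordingly.
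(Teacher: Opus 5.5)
Your lower-bound direction is the paper's argument. \cref{def:enum_kernels} explicitly requires $\Sol(y)\neq\emptyset$ if and only if $\Sol(x)\neq\emptyset$, so the compression algorithm of a polynomial \pdkernel{} is already a polynomial decision kernel, and the lower bound of~\cite{bridgedepth} applies verbatim. Your fallback via an \NP{} language is therefore unnecessary, and it would not work as written, since the lifting algorithm takes $x$ as input.

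The architecture of your upper bound also matches the paper: the core is the original modulator $X$, there is one canonical kernel solution per good trace, and lifting uses flashlight search via \cref{thm:generic_lifting}. No refinement by conflict structures is needed. However, Step~1 has a genuine gap.

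The property you want to verify is that the best extension of \emph{every} $S_X$ is preserved up to a uniform shift. This is false for the kernel of~\cite{bridgedepth}. \cref{rrule:isbd_many_confs} deletes all edges between $X$ and a component $R'$, which increases the best extension of exactly those traces with $\conf{R'}{S_X}>0$. That alone is harmless in this model: \cref{cond:good_decision_kernel} only asks that every \emph{good} trace survive, and output solutions with bad traces are simply rejected.

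The decision safeness proofs do not give even this forward property, though. They handle solutions whose modulator part contains a heavily conflicting chunk by exchanging that part for a maximum independent set of $G\setminus X$, which changes the trace. For instance, under \cref{rrule:isbd_type2}, take a solution containing $u_1,u_2$ and $X$-neighbours of both $v_1$ and $v_2$. It cannot simply be restricted to the output, since after the identifications both $u_1$ and $u_2$ are adjacent to its modulator part while $t$ drops only by one. The paper notes the same phenomenon for the rule of~\cite{vc_fvs} that deletes a modulator vertex with at least $|X|$ conflicts, even though that vertex may lie in solutions. Refining the similarity classes, as you propose, does not repair this.

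The missing idea is the trivial-instance rule, \cref{rrule:isbd_easy}, stated for the dual \pname{Enum Independent Set} as the paper does. If $\alpha(G\setminus X)\ge t$, output $(G[X],X,0)$, and let the lifting test each independent $S'\subseteq X$ for extendability on the bounded-treewidth graph $G\setminus(X\cup N(S'))$. This is exactly where the model's freedom to ignore unliftable kernel solutions is used.

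Otherwise, one maintains $\alpha(G\setminus X)<t$ through all rules (\cref{inv:isbd_no_lonely_r_is}). Then \cref{lem:isbd_large_alpha_r} shows that no solution $S$ can have $\conf{R}{S\cap X}\ge|S\cap X|$. Hence traces containing a heavily conflicting chunk are automatically bad, and each rule's safeness argument becomes trace-preserving without any exchange.

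You also need the invariant that no rule, including the identification rules, increases $\bd{G\setminus X}$ (\cref{inv:isbd_controlled_bd}). The kernel does not replace components by gadgets; it moves lowering trees into the modulator, so the output satisfies $V(G')=X'$. Computing canonical solutions in your Step~2 therefore relies on $G'[X'\setminus X]$ having bridgedepth at most $c$.
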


The positive part of our proof of \cref{thm:dichotomy} essentially boils down to showing that, modulo a very simple rule applied at the beginning of the compression step to identify trivial instances, \cref{cond:good_decision_kernel,cond:decidable_trace,cond:poly_delay} are readily satisfied by the kernel developed by Bougeret et al.~\cite{bridgedepth} to prove the decision version of \cref{thm:dichotomy}.
The hardness part of the proof follows immediately from the matching decision kernelization lower bound for \pname{Vertex Cover} proved in~\cite{bridgedepth}.

Beyond \cref{thm:dichotomy}, to illustrate how our model and framework can be used, we engage in the more interesting journey of reproving some of the most influential kernelization results for \pname{Vertex Cover} that led to the analogous dichotomy theorem.
Namely, we also revisit the parameterizations by: the size of the maximum size of the wanted vertex cover~\cite{buss_kernel,crown_decomp_fellows}, the feedback vertex number~\cite{vc_fvs}, and the vertex-deletion distance to graphs of treedepth at most $c$~\cite{bougeret_is_td}.
With the exception of the latter, all of our proofs are based on the developed framework.
This is due to the fact that the decision kernel of Bougeret and Sau~\cite{bougeret_is_td} relies on a bikernelization strategy; while it is possible to generalize our framework to encompass this type of strategy, it becomes overly cumbersome, and it is much simpler to prove that it can be accomplished in an ad-hoc manner.
We do so by introducing what we believe to be the first  bikernelization algorithm for an enumeration problem, as well as appropriate notions of reductions between enumeration problems that enable such an algorithm to exist.

\smallskip
\noindent\textbf{Further research.}
We hope that the added versatility of the introduced model will further stimulate research on enumerative kernelization.
We are interested in investigating enumeration versions of classical problems, such as \pname{Feedback Vertex Set}, and determining if and how other decision kernels can be adapted to our model.
Specifically for \pname{Feedback Vertex Set}, we are interested in the existence of quadratic \pdkernels{}, which would directly improve the cubic strong \pdkernel{} recently given by Bougeret et al.~\cite{vc_fvs_strong_pd_kernel}.
We also point out that a lower bound theory for enumerative kernelization (and for enumerative parameterized complexity as a whole!) would be of extreme benefit to the area.
Currently, we rely on lower bounds for related parameterized {\sl decision} problems to obtain enumeration lower bounds for both \FPT-delay and \pdkernelization{} algorithms.
It is highly unlikely, however, that this strategy is powerful enough to distinguish which parameterized enumeration problems are tractable or not, and which admit kernels of polynomial size or not.

\smallskip
\noindent\textbf{Organization.} In 
\cref{sec:new-model} we introduce our new model for enumerative kernelization and prove that it indeed satisfies the natural desirable properties that one may expect.
In \cref{sec:framework} we present the generic framework that allows to systematically lift decision kernels to enumeration kernels, based on 
\cref{cond:good_decision_kernel,cond:decidable_trace,cond:poly_delay}.
To illustrate the versatility of our model, in particular via the application of the generic framework, in \cref{sec:vcvc} we obtain a simple linear kernel for \pname{Enum Vertex Cover} parameterized by the solution size ({\sl much} simpler than the linear strong \pdkernel{} recently obtained by Bougeret et al.~\cite{vc_fvs_strong_pd_kernel}).
Before our structural parameterizations, we present a polynomial-delay algorithm for \pname{Enum Independent Set} that outputs the solutions in lexicographic order using flashlight search in \cref{sec:enum_is}; the output ordering is a key feature that we heavily rely on for our more complex kernels.
In \autoref{sec:isfvs}, we present a polynomial kernel for \pname{Enum Vertex Cover} by the feedback vertex number, also using our generic framework. 
Our polynomial kernel for \pname{Enum Vertex Cover} by the 
vertex-deletion distance to graphs of bounded treedepth, which does not rely on our framework, is presented in \cref{sec:vc_td}; this example shows that the applicability of our kernelization model is not limited to that of our framework, with the latter being one of many tools available.
In \cref{sec:isbd}, we prove our polynomial \pdkernel{} for \pname{Enum Vertex Cover} by the 
vertex-deletion distance to graphs of bounded bridgedepth, which, together with the decision lower bound of Bougeret et al.~\cite{bridgedepth}, implies \cref{thm:dichotomy}.
The kernel of \cref{thm:dichotomy} follows a very similar idea to the feedback vertex set parameterization, but it is much longer and the arguments a bit more involved.
Standard preliminaries about graphs~\cite{murty} and parameterized complexity~\cite{cygan_parameterized} can be found in \cref{sec:prelim}.

\subsection{Standard preliminaries}
\label{sec:prelim}

We denote $\{1, 2, \dots, n\}$ by $[n]$.
We use standard graph-theoretic notation, and we consider simple undirected graphs without loops or multiple edges; see~\cite{murty} for any undefined terminology.
When the graph is clear from the context, the degree (that is, the number of neighbors) of a vertex $v$ is denoted by  $\deg(v)$, and the number of neighbors of a vertex $v$ in a set $A \subseteq V(G)$ and its neighborhood in it are denoted by $\deg_A(v)$ and $N_A(v)$; we also define $N(S) = \bigcup_{v \in S} N(v) \setminus S$.
A \textit{matching} $M$ of $G$ is a subset of edges of $G$ such that no vertex of $G$ is incident to more than one edge in $M$; for simplicity, we define $V(M) = \bigcup_{uv \in M} \{u,v\}$ and refer to it as the set of \textit{$M$-saturated vertices}.
The \textit{subgraph of $G$ induced by $X$} is defined as $G[X] = (X, \{uv \in E(G) \mid u,v \in X\})$.
A \emph{feedback vertex set} $X \subseteq V(G)$ is such that $G \setminus X$ is a forest; the \emph{feedback vertex number} of $G$ is the size of a feedback vertex set of minimum size.
The vertex-deletion distance to $\mathcal{G}$ is the size of a minimum cardinality set $U \subseteq V(G)$ such that $G \setminus U = G[V(G) \setminus U]$ belongs to class $\mathcal{G}$; in this case, $U$ is called the $\mathcal{G}$-modulator.

We refer the reader to~\cite{downey_fellows,cygan_parameterized} for basic background on parameterized complexity, and we recall here only some basic definitions.
A \emph{parameterized problem} is a tuple $(L, \kappa)$ where $L \subseteq \Sigma^*$ is a decidable language and $\kappa: \Sigma^* \mapsto \mathbb{N}$ is a function called parameterization.
An \textit{instance} of a parameterized problem is string $x \in \Sigma^\star$, with $k = \kappa(x)$ being called its \emph{parameter}.
A parameterized problem is \emph{fixed-parameter tractable} (\FPT) if there exists an algorithm $\mathcal{A}$, a computable function $f$, and a constant $c$ such that, when given $(x,k)$,
$\mathcal{A}$ (called an \FPT\ \emph{algorithm}) correctly decides whether $x \in L$ in time bounded by $f(k) \cdot |x|^c$.
A fundamental concept in parameterized complexity is that of \emph{kernelization}; see~\cite{book_kernels} for a recent book on the topic. A kernelization
algorithm, or just \emph{kernel}, for a parameterized problem $\Pi $ takes as input~$(x,k)$ and, in time polynomial in $|x| + k$, outputs~$(y,\ell)$ such that $y, \ell \leq g(k)$ for some
function~$g$, and $x \in L$ if and only if $y \in L$.
The function~$g$ is called the \emph{size} of the kernel.
A kernel is called \emph{polynomial} (resp. \emph{quadratic, linear}) if the function $g(k)$ is a polynomial (resp. quadratic, linear) function in $k$.




\section{A new model for enumerative kernelization}
\label{sec:new-model}

Let $U$ be a finite set and define $2^U$ as the \emph{powerset} of $U$, i.e., the set of all subsets of $U$.
A collection $\P = \{P_1, \dots, P_k\}$ is a \emph{($k$-)partition} of $U$ if $\P \subseteq 2^U \setminus \{\emptyset\}$, $\bigcup_{i \in [k]} P_i = U$, and for every $P_i,P_j \in \P$, we have $P_i \cap P_j \neq \emptyset$ if and only if $i = j$.
We start with basic definitions.

\begin{definition}[Parameterized enumeration problem (Creignou et al.~\cite{creignou2017enum}, Golovach et al.~\cite{golovach2022refined})]
    \label{def:enum_problem}
    A \emph{parameterized enumeration problem} over a finite alphabet $\Sigma$ is a tuple $\Pi = (L, \Sol, \kappa)$, shorthanded by $\Pi^\kappa$, such that:
    \begin{enumerate}[i.]
        \item $L \subseteq \Sigma^\star$ is a decidable language.
        \item $\Sol \colon \Sigma^\star \mapsto 2^{\Sigma^\star}$ is a computable function such that, for every \emph{instance} $x \in \Sigma^\star$, $\Sol(x)$ is finite and we have $\Sol(x) \neq \emptyset$ if and only if $x \in L$.
        \item $\kappa\colon \Sigma^\star \mapsto \mathbb{N}$, is the \emph{parameterization}.
    \end{enumerate}
    An instance $x$ of $\Pi^\kappa$ is a \emph{\NOi-instance} if $\Sol(x) = \emptyset$, and is a \emph{\YES-instance} otherwise.
\end{definition}

\begin{definition}[\FPT-delay algorithm (Creignou et al.~\cite{creignou2017enum}, Golovach et al.~\cite{golovach2022refined})]
    \label{def:fpt_delay}
    Let $\Pi^\kappa$ be a parameterized enumeration problem.
    A \emph{fixed-parameter-tractable-delay algorithm}, or simply \emph{\FPT-delay} algorithm, is an algorithm $\A$ that, given an instance $x$ of $\Pi^\kappa$ and $k = \kappa(x)$, outputs $\Sol(x)$ such that, the precalculation time, the time between any two consecutive outputs, and the postcalculation time, are bounded by $f(k)\cdot|x|^\bigO{1}$ for some computable function $f$.
    If $f(k) \in \bigO{1}$, then we say that $\A$ is a \emph{polynomial-delay} algorithm.
\end{definition}

We are now ready to formally define our kernelization model.

\begin{definition}[Polynomial-delay kernel]
    \label{def:enum_kernels}
    Let $\Pi^\kappa$ be a parameterized enumeration problem.
    A \emph{polynomial-delay kernel}, \emph{PD kernel} for short, is a pair of algorithms $\A_1, \A_2$ that, given an instance $x$ of $\Pi^\kappa$ and $\kappa(x)$:
    \begin{itemize}
        \item Algorithm $\A_1$ runs in $\poly(|x| + \kappa(x))$-time and outputs an instance $y$ of $\Pi^\kappa$  satisfying $\kappa(y), |y| \leq f(\kappa(x))$, with $f$ a computable function, and $\Sol(y) \neq \emptyset$ if and only if $\Sol(x) \neq \emptyset$;
        \item Algorithm $\A_2$ receives $x$, $y = \A_1(x,\kappa(x))$, $\kappa(x)$, $\kappa(y)$, some $Y \in \Sol(y)$, and outputs, with polynomial-delay $g$ on its total input size, $S_Y \subseteq \Sol(x)$.
        Moreover, we require that $\{S_Y \mid Y \in \Sol(y), S_Y \neq \emptyset\}$ is a partition of $\Sol(x)$.
    \end{itemize}
    We call $\A_1$ the \emph{compression algorithm} and $\A_2$ the \emph{lifting algorithm}.
    We say that $f$ is the \emph{size} of the kernel, and $g$ is its \emph{delay}.
    The pair $(\A_1, \A_2)$ is a \emph{strong \pdkernel} if, for every $Y \in \Sol(y)$, $\A_2$ never outputs the empty set.
\end{definition}

\begin{remark}
    \label{rmk:strong_pde_equivalence}
    Strong \pdkernels{} and polynomial-delay enumeration kernels, as defined by Golovach et al.~\cite{golovach2022refined}, are the same objects. Moreover, in their definition, the $\Sol(x) \neq \emptyset \Leftrightarrow \Sol(y) \neq \emptyset$ condition is implied by $\A_2$ only being allowed to output non-empty sets.
\end{remark}

Observe that, \emph{a priori}, we could alter the definition of the compression algorithm to only require that $\Sol(x) \neq \emptyset$ implies $\Sol(y) \neq \emptyset$; the lifting algorithm could still be used to correctly identify that $\Sol(x) = \emptyset$ even if $\Sol(y) \neq \emptyset$. 
This, however, would configure a situation that to us seems unnatural, as we highlight in \cref{remark:something-bad-would-happen}.

\begin{remark}
\label{remark:something-bad-would-happen}
     Take the problem of enumerating all independent sets of size at least $t$ of a graph $G$ when parameterized by size of a modulator $X$ to graphs of treewidth two , with $(G, X, t)$ as its input. Outputting $(G[X], X, 0)$ is enough: given $A \in \Sol(G[X], X, 0)$ it suffices to use Courcelle's Theorem~\cite{courcelle_book} and flashlight search~\cite{strozecki_eatcs} to check if $G \setminus (N(A) \cup X)$ has an independent set of appropriate size and list all of the appropriate ones. However, it is believed that no kernel exists for the decision version of this problem \cite{CyganLPPS14}.
\end{remark}

Our next theorem shows that \pdkernels{} and \FPT-delay algorithms are equivalent. Our proof is similar to the one provided by Golovach et al.~\cite{golovach2022refined} for strong \pdkernels, but must account for the case where the output instance has more solutions than the input.

\begin{restatable}{theorem}{thmkernelfptdelayequiv}
    \label{thm:kernel_fpt_delay_equiv}
    A parameterized enumeration problem $\Pi^\kappa_\enum$ admits an $\FPT$-delay algorithm if and only if it admits a \pdkernel{}.
    Moreover, $\Pi^\kappa_\enum$ can be solved with polynomial-delay if and only if it admits a constant-sized \pdkernel{}. 
\end{restatable}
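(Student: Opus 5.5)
We prove the two directions of each equivalence separately, following Golovach et al.'s argument for strong \pdkernels{} and adapting it where the kernel may produce solutions that are not lifted.

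\textbf{From kernel to algorithm.} Let $(\A_1,\A_2)$ be a \pdkernel{} of size $f$ and delay $g$. Given $x$, we run $\A_1$ in time $\poly(|x|+\kappa(x))$ and obtain $y$ with $|y|,\kappa(y)\le f(\kappa(x))$. Since $\Sol$ is computable, we can compute the whole set $\Sol(y)$ by brute force in time $h(\kappa(x))$ for some computable $h$. This is the only place where the new definition needs care: some $Y\in\Sol(y)$ have $S_Y=\emptyset$, and their runs of $\A_2$ may sit between useful outputs. For each $Y$ in turn we run $\A_2(x,y,\kappa(x),\kappa(y),Y)$ and forward its outputs. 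Because the nonempty $S_Y$ partition $\Sol(x)$, every solution is output exactly once.

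For the delay, a run of $\A_2$ that outputs nothing still terminates within $g(|x|+|y|+|Y|+\ldots)$, since precalculation plus postcalculation with zero outputs is bounded by the delay. Hence between two consecutive outputs we traverse at most $|\Sol(y)|\le h'(\kappa(x))$ runs, each taking polynomial time in $|x|+f(\kappa(x))$. This gives \FPT-delay $h'(\kappa(x))\cdot \poly(|x|)\cdot$(poly in $f(\kappa(x))$). If $f$ is constant, then $|y|$ and $|\Sol(y)|$ are constant, and the same bound is polynomial delay.

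\textbf{From algorithm to kernel.} Suppose $\A$ enumerates $\Sol(x)$ with delay $f(k)\cdot|x|^c$. We define $\A_1$ as follows: run $\A$ for $|x|^{c+1}$ steps.
\begin{itemize}
\item Case 1: $\A$ terminates within this budget. Then $\A_1$ has computed the answer, and it outputs a constant-size trivial instance: a fixed yes-instance $y_1$ if $\Sol(x)\neq\emptyset$, and a fixed no-instance $y_0$ otherwise. If one of these does not exist, the problem is trivial. In this case $\A_2$, on one fixed designated $Y\in\Sol(y_1)$, reruns $\A$ on $x$ with polynomial delay, and outputs $\emptyset$ on every other $Y$. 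This step uses the relaxed model, and it is also why constant size keeps working.
\item Case 2: $\A$ does not terminate. Then $|x|^{c+1}\le f(k)|x|^c\cdot(\#\text{outputs}+1)$ fails to bound things as needed, so we use the standard trick instead: if $|x|> f(k)$, then $\A$ runs with delay at most $|x|^{c+1}$. In that case the step is handled exactly as in Case 1, except that no early decision is made. $\A_1$ instead decides emptiness by running $\A$ until the first output, which takes at most $|x|^{c+1}$ steps, i.e.\ polynomial time. If $|x|\le f(k)$, then $\A_1$ outputs $x$ itself and $\A_2$ reruns $\A$ on $x$ for one designated $Y$. As $\kappa(x)\le|x|\le f(k)$, the size is bounded.
\end{itemize}
To keep the partition property, $\A_2$ on $y=x$ lifts $Y$ to $\{Y\}$, and on the trivial instance it lifts the designated solution to all of $\Sol(x)$, enumerated via $\A$. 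For polynomial delay, take $f$ constant: $\A_1$ always emits a constant-size instance and runs $\A$ up to the first output, giving a constant-sized kernel.

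The main obstacle is accounting for the empty lifts in the first direction. The key point to verify is that a run of $\A_2$ with empty output halts in time bounded by $g$, so that at most $|\Sol(y)|$ such runs can be chained between outputs.
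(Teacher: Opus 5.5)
\textbf{Gap in the converse: $\A_1$ cannot evaluate the test $|x| > f(k)$ as you describe it.}

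Your kernel $\Rightarrow$ FPT-delay direction is the paper's argument. For the converse, your lifting is simpler than the paper's, and that part is fine. The paper runs $\D$ until it has $t=\min\{|\Sol(y)|,|\Sol(x)|\}$ solutions. It then maps $Y_i\mapsto\{X_i\}$ for $i<t$, $Y_t$ to the rest of $\Sol(x)$, and the surplus $Y_i$ to $\emptyset$. You instead give all of $\Sol(x)$ to one designated solution of the trivial yes-instance and $\emptyset$ to all others, which the relaxed definition allows.

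You correctly note that the decision-style trick (run $\A$ for $|x|^{c+1}$ steps and conclude $|x|<f(k)$ on timeout) certifies nothing here. The total running time of $\A$ is not bounded by $f(k)|x|^c$. You then branch on the predicate ``$|x|>f(k)$'' without saying how $\A_1$ decides it in polynomial time. The function $f$ is only computable, so computing $f(k)$ may take time that is not polynomial in $|x|$.

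A budgeted run of $\A$ until its first output does not rescue this. Exceeding the budget certifies $f(k)>|x|$. Staying within it does not certify $f(k)<|x|$, and that inequality is exactly what makes rerunning $\A$ inside $\A_2$ polynomial-delay. If you output the trivial instance whenever the first output arrives quickly, the delay of $\A_2$ can be $f(k)|x|^c$, which is not polynomial in $\A_2$'s input size.

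The paper's fix is to simulate a machine computing $f(k)$ for $|x|$ steps:
\begin{itemize}
\item If it does not halt, then $|x|$ is at most its running time $T(k)$, a computable function, and you output $x$ itself.
\item Otherwise, compare $f(k)$ with $|x|$ directly.
\end{itemize}
The kernel size then becomes $\max\{f(k),T(k),k\}$. Bound $\kappa(y)=\kappa(x)$ by $k$ directly; your claim $\kappa(x)\le|x|$ is not given in general.

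\textbf{Smaller point: the branch $y=x$.} Only the identity lifting $Y\mapsto\{Y\}$ works there. Your alternative, ``$\A_2$ reruns $\A$ on $x$ for one designated $Y$'', has delay $f(k)|x|^c$. When $|x|\le f(k)$, this need not be polynomial in $|x|+\kappa(x)$; for example, take $f(k)=2^{2^k}$. Your closing paragraph already uses the identity lifting, so just drop the other version.

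The constant-size part is fine as written: with $f$ constant, the threshold can be hard-coded.
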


\begin{proof}
    Let us first prove the converse, i.e., that the existence of a \pdkernel{} $(\A_1, \A_2)$ implies the existence of an \FPT-delay algorithm.
    Let $x$ be our input instance and $y = \A_1(x, \kappa(x))$.
    As $\Sol$ is a computable function, it follows that there exists an algorithm that enumerates $\Sol(y)$ whose complexity is a function of $|y|$; let $g(|y|)$ be the total time spent in enumerating $\Sol(y)$.
    By definition, this means that $\Sol(y)$ can be enumerated in \FPT-time as $|y|,\kappa(y) \leq f(\kappa(x))$ for some computable function $f$ as guaranteed by $\A_1$.
    If the algorithm outputs nothing, then $\Sol(y) = \Sol(x) = \emptyset$ and the algorithm only used \FPT-time in total.
    Now, if $\Sol(y) \neq \emptyset$ and since $(\A_1, \A_2)$ is a \pdkernel, $\{\A_2(x,y,\kappa(x), \kappa(y), Y) \neq \emptyset \mid Y \in \Sol(y)\}$ is a partition of $\Sol(x)$.
    Note that, if the kernel is strong, then we are done, as the total delay will be at most the cost of computing $\Sol(y)$, as each $Y \in \Sol(y)$ can be lifted with polynomial-delay.
    If not, then it could be the case that several elements $Y \in \Sol(y)$ have $\A_2(x,y, \kappa(x), \kappa(y),Y) = \emptyset$ but, as $|\Sol(y)| \leq g(|y|) \leq g\left(f(\kappa(x))\right)$, the delay between outputting any two consecutive solutions of $x$ is at most \FPT.

    Before proceeding to the other direction of the proof, 
    let us define some important types of instances for the remainder of the proof.
    An instance $x$ of $\Pi$ is a \emph{trivial \NOi-instance} if $x$ is of minimum size while being a \NOi-instance; a \emph{trivial \YES-instance} is defined similarly.
    Note that, if $\Pi^\kappa$ admits both \NOi- and \YES-instances, then a trivial \NOi-instance and a trivial \YES-instance both have constant sizes that depend only on $\Pi^\kappa$ and both can be computed in constant time: for each instance size $t$, it suffices to check, for all strings $s \in \Sigma^t$, if $\Sol(s) = \emptyset$; as $\Pi_k$ is a fixed problem, $\Sigma$ is fixed and finite, and $\Sol$ is a computable function, such a process gives us our trivial instances.
    
    For the other direction, we proceed in a similar fashion to the classic proof that \FPT\ algorithms and kernels are equivalent~\cite{cygan_parameterized}.
    Let $x$ be an input instance to $\Pi^\kappa$, $k = \kappa(x)$, and $\D$ be an \FPT-delay algorithm that enumerates $\Sol(x)$ in $(f(k)\cdot|x|^c)$-delay, where $f$ is computable and $c = \bigO{1}$; due to the former, we may assume that $f(k)$ can be computed by an algorithm $\F$ in $g(k)$-time, and let $h(k) = \max\{f(k), g(k)\}$.
    We begin by running $\F$ for $n = |x|$ steps.
    If $\F$ does not compute $f(k)$ in $n-1$ steps, we have that $g(k) \geq n$ and so $x$ has bounded size.
    Consequently, if $\A_1(x)$ outputs $x$ and $\A_2(x,\kappa(x),x,\kappa(x))$  outputs $X$ for every $X \in \Sol(x)$ we have our kernelization algorithm.
    For the remainder of the proof, we may assume that $f(k)$ was computed in less than $n$ steps, i.e., $g(k) < n$.
    
    Note that a similar analysis is applicable if $f(k) \geq n$, so we may additionally assume that $f(k) < n$.
    In this case, however, we have that $\D$ runs with $n^{c+1}$-delay and correctly enumerates $\Sol(x)$.
    Thus, we run $\D$ until it either outputs a first solution of $x$, or terminates.
    In the latter case, $\A_1$ computes a trivial \NOi-instance and outputs it, with $\A_2$ being allowed to be arbitrary.
    If, however, $\Sol(x) \neq \emptyset$, $\A_1$ computes a minimum \YES-instance $y$ in constant time. Note that $|\Sol(y)|$ is constant and so $\Sol(y)$ can also be computed in $\bigO{1}$-time.
    Now, we run $\D$ until it outputs $t = \min\{|\Sol(y)|, |\Sol(x)|\}$ solutions of $x$.
    \begin{itemize}
        \item If $t = |\Sol(y)|$, then $|\Sol(y)| \leq |\Sol(x)|$, and $\A_2(x,\kappa(x),y,\kappa(y),Y)$ works as follows: (\textit{i}) first, compute $\Sol(y)$ in constant time and order it according to the computation order, with $Y_i$ being the $i$-th solution in this order; (\textit{ii}) if $Y = Y_i$ and $i < t$, output $\{X_i\}$, the singleton set with the $i$-th solution produced by $\D$, and note that this takes at most $(n^{c+1}\cdot i)$-time as we can just run $\D$ until it outputs $X_i$, which happens to be its $i$-th output; (\textit{iii}) finally, if $Y_i = Y_t$, then we use $\D$ to generate $\Sol(x) \setminus \{X_1, \dots, X_{t-1}\}$ with $(n^{c+1}\cdot t)$-delay using a similar strategy to the previous one.
        \item If, however $t = |\Sol(x)|$, then $t < |\Sol(y)|$ and we proceed with the construction of $\A_2$ as in the above case, except that, for $Y_i \in \Sol(y)$ with $i > t$, $\A_2$ outputs the empty set.
    \end{itemize}
    Observe that the above analysis implies that we output an instance of size at most $\bigO{h(k)}$. 
    For the second statement, we proceed as above.
    That is, given a \pdkernel{} and $|y|$ of constant size, we get that $|\Sol(y)|$ is bounded by a constant that depends only on $\Pi_k$.
    As such, the delay of running $\A_2$ for each $Y \in \Sol(y)$ is at most polynomial in $|x| + \kappa(x)$: between any two solutions of $y$ for which $\A_2$ does not output the empty set, we run $\A_2$ in polynomial time a constant number of times and then enumerate with polynomial-delay.
    In the other direction, the arguments are the same, sufficing to note that $f(k)$ and $g(k)$ are constants.
\end{proof}

In~\cite{enum_long_path}, it was proved that every strong \pdkernel{} admits \FPT{} precalculation time and polynomial-delay; intuitively, their result guarantees that searching for small strong \pdkernels{} with fast lifting algorithms is a worthwhile pursuit.
We generalize their result with \cref{rmk:new_kernel_size_time}.
This coincides with the proof of Creignou et al.~\cite{creignou2017enum} that every problem solvable by an \FPT-delay algorithm can be solved with \FPT{} precalculation time and polynomial-delay.

\begin{restatable}{remark}{rmknewkernelsizetime}
    \label{rmk:new_kernel_size_time}
    Let $\Pi^\kappa_\enum = (L, \Sol, \kappa)$ be a parameterized enumeration problem such that there exists a $T(|x|)$-time algorithm $\F$ to compute $\Sol(x)$, where $x$ is an instance of $\Pi^\kappa_\enum$.
    If $\Pi^\kappa_\enum$ admits a \pdkernel{} of size $f$ and delay $g$, then $\Pi^\kappa_\enum$ admits an algorithm with $T(f(\kappa(x))) \cdot g(|x| + \kappa(x) + 2f(\kappa(x)))$ precalculation time and delay $g$.
\end{restatable}

\begin{proof}
    Let $k = \kappa(x)$ and $(\A_1, \A_2)$ be the algorithms making up our kernel of size $f(k)$ and delay $g(\cdot)$.
    We devise our \FPT-delay algorithm as follows:
    \begin{enumerate}
        \item Initialize $\Y = \emptyset$.
        \item Run $\A_1(x,k)$ to obtain $(y,\ell)$ in $\poly(|x| + k)$-time.
        \item Run $\F(y)$ to obtain $\Sol(y)$ in $T(|y|)$-time, which is at most $T(f(k))$.
        \item For each $Y \in \Sol(y)$, we run $\A_2(x,y,k,\ell,Y)$ until it either produces a first output or terminates; for each $Y$ in the first case, we set $\Y \gets \Y \cup \{Y\}$.
        As $\Sol(y)$ can be computed in $T(f(k))$-time, it follows that $|\Sol(y)| \leq T(f(k))$; as $\A_2$ only runs until its first output or termination, we have that it only used at most $g(|x| + |y| + k + \ell)$-time, which is at most $\poly(|x| + k + 2f(k))$-time, in each call.
        Consequently, this step was performed in $T(f(k)) \cdot g(|x| + \kappa(x) + 2f(\kappa(x)) + |Y|)$ time; by assuming that $|Y|$ is linearly bounded by $f(|x|)$, we have that the previous running time is equivalent to $T(f(k)) \cdot g(|x| + \kappa(x) + 2f(\kappa(x)))$.
        This concludes the proof of the bound on the precalculation time.
        \item For each $Y \in \Y$, we now run $\A_2(x,y,k,\ell,Y)$, which has $g(|x| + |y| + k + \ell +|Y|)$-delay, which concludes the proof. \qedhere
    \end{enumerate}
\end{proof}

In light of \cref{thm:kernel_fpt_delay_equiv} and \cref{rmk:new_kernel_size_time}, we note that, if $\Pi^\kappa_\enum$ admits a polynomial-delay algorithm $\P$, then $f(k) \in \bigO{1}$, which implies that $T(f(\kappa(x))) \cdot g(|x| + \kappa(x) + 2f(\kappa(x))) \in \bigO{g(|x|)}$; that is, after we kernelize the input into $y$, we obtain the following polynomial-delay algorithm: \textit{(i)} run $\P$ until termination to obtain a list $\Sol(y)$ ($\bigO{1}$); \textit{(ii)} for each $S \in \Sol(y)$, run the lifting algorithm until termination or first output, with $\L$ being the list of solutions that fall in the latter case ($\bigO{g(|x|)}$); \textit{(iii)} finally, for each surviving solution in $L$, run the lifting algorithm itself. Since $|L| \in \bigO{1}$, step \textit{(iii)} runs with polynomial-delay $g$.
We also point out that \textit{(ii)} is the reason for the $g(\cdot)$ precalculation overhead in \cref{rmk:new_kernel_size_time}.

\section{A framework for translating decision kernels into \pdkernels{}}
\label{sec:framework}

While decision kernelization has a large literature from which to draw when designing a new kernel, the same is yet untrue for enumerative kernelization.
As such, it is highly desired to understand when one can translate a kernel for a decision problem to one of its enumeration counterparts.
In this section, we take a first step in this direction, identifying three sufficient conditions that allow the design of a \pdkernel{} from an existing decision kernel.
We concern ourselves only with \emph{graph vertex-subset problems}, i.e., problems where the solutions are precisely subsets of vertices.
While this may seem restrictive, several cornerstone problems in parameterized complexity belong to this class; as examples, we point to both \pname{Vertex Cover} and \pname{Clique}.
For this section, let $\Pi^\kappa$ be a parameterized graph vertex-subset problem with parameterization $\kappa$ and $\Pi^\kappa_{\enum}$ be the problem of enumerating all solutions of $\Pi^\kappa$. %

Formally, both $\Pi^\kappa$ and $\Pi^\kappa_\enum$ have as input the triple $(G, \I, k)$, where $G$ is the graph of interest, $k = \kappa(G, \I)$ is the parameter, and $\I$ is any other meaningful problem input, e.g. $\I$ could include a feedback vertex set of $G$ and/or the threshold size $t$ of the desired solution (for instance, of an independent set).
To derive a \pdkernel{} for $\Pi^\kappa_\enum$, it is necessary that a kernel exists for $\Pi^\kappa$, as a kernel for the former implies one for the latter.
In this spirit,  \cref{cond:good_decision_kernel,cond:decidable_trace,cond:poly_delay} are a set of sufficient properties that enable us to derive a kernel for $\Pi^\kappa_\enum$ from a kernel to $\Pi^\kappa$.
\cref{cond:good_decision_kernel} is the most technical one, but intuitively we ask that the $\Pi^\kappa$ kernel preserve some key information to the enumeration process, which we capture by the concepts of \emph{core} of the kernel and its \emph{good traces}.
\cref{cond:decidable_trace} allows us to efficiently separate the good traces from the bad ones; moreover, it allows us to identify a representative solution for each good trace.
Finally, given a good trace $Y$ of the core in the original instance, \cref{cond:poly_delay} allows us to list all solutions that intersect its core precisely at~$Y$.
We refer to \cref{fig:cond_diagram} for a diagram depicting these interactions.

\begin{condition}
    \label{cond:good_decision_kernel}
    $\Pi^\kappa$ admits a compression algorithm $\A_1$ for which the following holds.
    \begin{enumerate}
        \item Given $(G, \I, k)$, $\A_1$ outputs an equivalent $(H, \mathcal{L}, \ell)$ with $|H| \leq |G|$.
        \item \label{item:special_condition} Let $\Lambda_H \subseteq V(H)$, $\lambda\colon \Lambda_H \to V(G)$ be an injection, and $\Lambda_G = \lambda(\Lambda_H) = \bigcup_{v \in \Lambda_H} \{\lambda(v)\}$, i.e., $\lambda$ is a bijection from $\Lambda_H$ to $\Lambda_G$.%
        We say that $(\Lambda_H, \lambda)$ is a \emph{core} for $\A_1$ if, for every $Y_G \subseteq \Lambda_G$ for which there exists $S \in \Sol(G, \I, k)$ with $S \cap \Lambda_G = Y_G$, there exists $S' \in \Sol(H, \mathcal{L}, \ell)$ with 
        $Y_H = S' \cap \Lambda_H$ and $\lambda(Y_H) = Y_G$. Moreover, $\Lambda_H$ and $\lambda$ can be computed in $\poly(|G|, k, |\I|, |H|, \ell, |\mathcal{L}|)$-time.
        All such $Y_H \subseteq \Lambda_H$ and $\lambda(Y_H) \subseteq \Lambda_G$ that satisfy \cref{item:special_condition} are the \emph{good traces} of the core.
    \end{enumerate}
\end{condition}

We remark that, if in a kernel for $\Pi^\kappa$ we have that $V(H) \subseteq V(G)$ (which is the case for all kernels of this paper except for the one in \autoref{sec:isbd}) and take $\lambda$ as the identity function, \cref{cond:good_decision_kernel} just asks for the existence of a set $\Lambda_H \subseteq V(H)$ such that each of its subsets that is contained in a solution of $(G, \I, k)$ must also belong to some solution of $(H, \mathcal{L}, \ell)$.

\begin{condition}
    \label{cond:decidable_trace}
    Let $\A_1$ be the algorithm of \cref{cond:good_decision_kernel}, $(G, \I, k)$ be an input of $\Pi^\kappa_\enum$, and $(H, \mathcal{L}, \ell)$ be the output of $\A_1(G, \I, k)$.
    There exists a \emph{choosing algorithm} $\A_c$ that:
    \begin{enumerate}
        \item Runs in $\poly(|G|, k, |H|, \ell)$-time.
        \item Given $(G, \I, k, H, \mathcal{L}, \ell)$ and a single $S' \in \Sol(H, \mathcal{L}, \ell)$, $\A_c$ returns either \YES{} or \NOi.
        \item For every good trace $Y_H \subseteq \Lambda_H$, there is a unique $S_{Y_H} \in \Sol(H, \mathcal{L}, \ell)$ with $S_{Y_H} \cap \Lambda_H = Y_H$ and $\A_c(G, \I, k, H, \mathcal{L}, \ell, S_{Y_H})$ returning \YES; such an $S_{Y_H}$ is the \emph{canonical solution} for $Y_H$.
        \item For every $Y_H \subseteq \Lambda_H$ and $S' \in \Sol(H, \mathcal{L}, \ell) \setminus \{S_{Y_H}\}$ with $S' \cap \Lambda_H = Y_H$, $\A_c$ returns \NOi.
    \end{enumerate}
\end{condition}

\begin{condition}
    \label{cond:poly_delay}
    Let $(G, \I, k)$ be an input of $\Pi^\kappa_\enum$, $(H, \mathcal{L}, \ell)$ be the output of $\A_1(G, \I, k)$, and $\Lambda_G$ be defined as in \cref{cond:good_decision_kernel}.
    There exists an \emph{enumeration algorithm} $\A_e$ that, given a good trace $Y_G \subseteq \Lambda_G$, $\A_e$ produces the set $\{S \in \Sol(G, \I, k) \mid S \cap \Lambda_G = Y_G\}$ with $\poly(|G|, k, \I, |H|, \ell, \mathcal{L}, |Y_G|)$-delay.
\end{condition}

  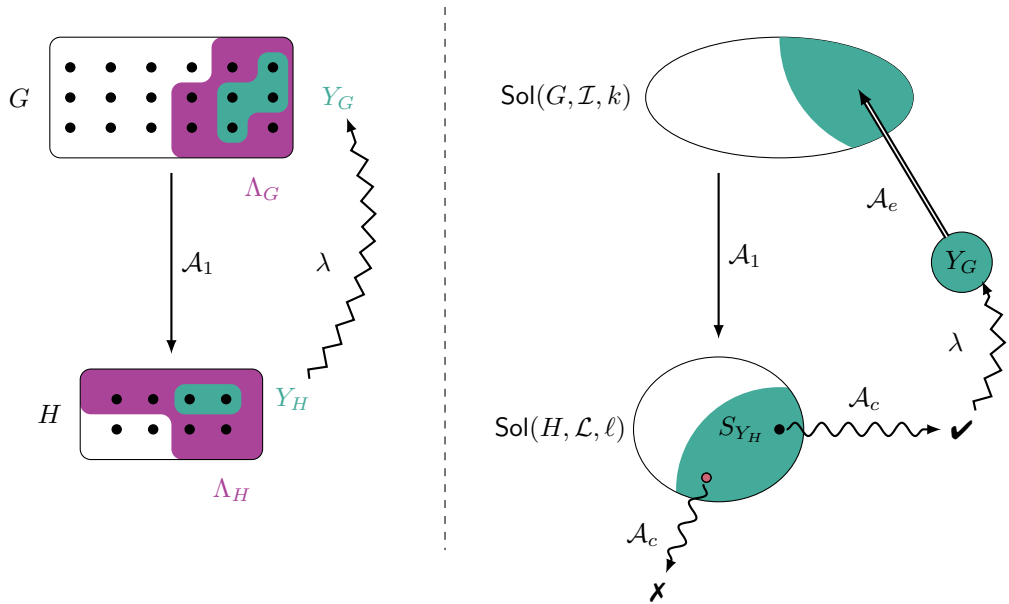
\begin{figure}[!htb]
    \centering
    \begin{tikzpicture}[scale=0.8]
      \GraphInit[unit=3,vstyle=Normal]
      \SetVertexNormal[Shape=circle, FillColor=black, MinSize=2pt]
      \tikzset{VertexStyle/.append style = {inner sep = \inners, outer sep = \outers}}
      \SetVertexLabelOut

      \newcommand{\goodcolorb}{goodteal}
      \newcommand{\goodcolora}{goodpurple}
      \newcommand{\goodcolorc}{goodyellow}

      \begin{scope}
          \draw[rounded corners] (0,0) rectangle (4,2);
          \node at (-0.5, 1) {$G$};
          \node at (3.5, -0.5) {\textcolor{\goodcolora}{$\Lambda_G$}};
          
          \node (yg) at (4.75, 1) {\textcolor{\goodcolorb}{$Y_G$}};
          
          \foreach \j in {1,...,6} {
                \pgfmathsetmacro{\x}{2*\j/3 - 0.33}
            \foreach \i in {1,...,3} {
                \pgfmathsetmacro{\y}{\i/2}
                \Vertex[NoLabel,x=\x, y=\y]{i\i\j}
            }
          }

          \begin{scope}[on background layer]
            \fill[\goodcolora, rounded corners] (2,0) -- (2,1.25) -- (2.66, 1.25) -- (2.66, 2) -- (4,2) -- (4,0) -- (2,0) -- (2,1.25);
            \fill[\goodcolorb, rounded corners] ($(i15)-(0.25, 0.25)$) -- ($(i25)+(-0.25, 0.25)$) -- ($(i26)+(-0.25, 0.25)$) -- ($(i36)+(-0.25, 0.25)$) -- ($(i36)+(0.25, 0.25)$) -- ($(i26)+(0.25, -0.25)$) -- ($(i15)+(0.25, 0.25)$) -- ($(i15)+(0.25, -0.25)$) -- ($(i15)-(0.25, 0.25)$) -- ($(i25)+(-0.25, 0.25)$);
          \end{scope}

          \draw[thick,-latex] (2, -0.25) -- (2, -3.25) node[midway,right] {$\A_1$};
      \end{scope}

      \begin{scope}[yshift=-5cm,xshift=0.5cm]
          \draw[rounded corners] (0,0) rectangle (3,1.5);
          \node at (-0.5, 0.75) {$H$};
          \node at (2.5, -0.5) {\textcolor{\goodcolora}{$\Lambda_H$}};
          
          \node (yh) at (3.5, 1) {\textcolor{\goodcolorb}{$Y_H$}};

          \Edge[style={decorate, -latex, decoration={zigzag, post length = 2mm}, bend right}](yh)(yg)
          \node at (4, 3.27) {$\lambda$};
          
          \foreach \j in {1,...,4} {
                \pgfmathsetmacro{\x}{\j*0.6}
            \foreach \i in {1,...,2} {
                \pgfmathsetmacro{\y}{\i/2}
                \Vertex[NoLabel,x=\x, y=\y]{o\i\j}
            }
          }
          
          \begin{scope}[on background layer]
            \fill[\goodcolora, rounded corners] (0, 1.5) -- (0, 0.75) -- (1.5, 0.75) -- (1.5, 0) -- (3,0) --(3, 1.5) -- (0, 1.5) -- (0, 0.75);
            \fill[\goodcolorb, rounded corners] ($(o23)-(0.25, 0.25)$) rectangle ($(o24)+(0.25, 0.25)$);
          \end{scope}
      \end{scope}

      \begin{scope}[xshift=6.5cm]
        \draw[dashed] (0, 2.5) -- (0,-6.5);
      \end{scope}

      \begin{scope}[xshift=12cm]
        \begin{scope}[yshift=1cm]
          \draw ellipse (2.2cm and 1cm) (0, 0);
          \node at (-3.5, 0) {$\Sol(G, \I, k)$};
          \begin{scope}
              \clip ellipse (2.2cm and 1cm) (0, 0);
              \fill[\goodcolorb] (2cm,1cm) circle (2cm);
          \end{scope}

          \node (t1) at (1.2cm, 0.3cm) {};
          
          \draw[thick,-latex] (-1cm, -1.25cm) -- (-1cm, -4cm) node[midway,right] {$\A_1$};
        \end{scope}

        \begin{scope}[yshift=-4.5cm,xshift=-1cm]
          \draw ellipse (1.4cm and 1.2cm) (0, 0);
          \node at (-2.6,0) {$\Sol(H, \mathcal{L}, \ell)$};
          \begin{scope}[on background layer]
              \clip ellipse (1.4cm and 1.2cm) (0, 0);
              \fill[\goodcolorb] (1cm,-1cm) circle (1.7cm);
          \end{scope}


          \Vertex[NoLabel,x=-0.2,y=-0.8]{s1}
          \AddVertexColor{goodred}{s1}
          \node (m1) at (-1cm, -2.7cm) {\xmark};
          
          \Edge[style={decorate, decoration={snake, post length=1mm},-latex}](s1)(m1)
          \node at (-1.25cm, -1.75cm) {$\A_c$};
          
          \Vertex[x=1,y=0,L={{$S_{Y_H}$}}, Lpos=180, Ldist=-1pt]{s3}

          \node (c1) at (4cm,0cm) {\mycmark};
          
          \Edge[style={decorate, decoration={snake, post length=1mm},-latex}](s3)(c1)
          \node at (2.4cm, 0.5cm) {$\A_c$};

          \node (ygr) at (4, 2.77) {};
          \node (ygrb) at ($(ygr)+(-30:0.3)$) {};
          \node (ygru) at ($(ygr)+(120:0.3)$) {};
          \draw[fill=\goodcolorb] (ygr) circle (0.5cm);
          \node at (ygr) {$Y_G$};
          \Edge[style={decorate, -latex, decoration={zigzag, post length = 2mm}, bend right}](c1)(ygrb)
          \node at (3.9cm, 1.45cm) {$\lambda$};

          \Edge[style={-latex, double}](ygru)(t1)
          \node at (2.7cm, 3.81cm) {$\A_e$};
        \end{scope}
      \end{scope}
    \end{tikzpicture}
    \caption{Interactions between the components of \cref{cond:good_decision_kernel,cond:decidable_trace,cond:poly_delay} on the input and output and on the solutions and solution sets. The lower purple-shaded area indicates the set $\Lambda_H$ of the output graph $H$, and the upper area $\Lambda_G = \lambda(\Lambda_H)$; teal-shaded areas correspond to the good traces $Y_H \subseteq \Lambda_H$ and $Y_G = \lambda(Y_H)$ and to the solutions with $Y_H$ or  as their trace.
    Straight arrows represent the compression algorithm given by \cref{cond:good_decision_kernel} on both the input and solution sets; wavy arrows the choosing algorithm of \cref{cond:decidable_trace}; zigzagging arrows the $\lambda$ function of kernel's core; double arrow the action of the enumeration algorithm of \cref{cond:poly_delay} upon the good trace $Y_G$. \label{fig:cond_diagram}}
    \end{figure}

Note that, by combining all three conditions, we have that $\{\{S \in \Sol(G, \I, k) \mid S \cap \Lambda_G = \lambda(Y)\} \mid Y \text{ is a good trace of } \Lambda_H\}$ is a partition of $\Sol(G, \I, k)$.
Our conditions illustrate the two main challenges when working with enumerative kernelization.
Namely, having a large core is helpful, as it becomes easy to preserve good traces and lift them back to the input instance; however, we want to compress the instance as much as possible, so, in particular, we must work with a core that is much smaller than the input's.

Before proceeding to the  main theorem of this section, we highlight that a similar set of conditions can be obtained for strong \pdkernels{}, but there are two main challenges in specifying such a collection.
The first is on the compression algorithm: besides preserving all the good traces, the kernel of \cref{cond:good_decision_kernel} would also have to exclude \emph{every} non-good trace, which is a feature rarely found in decision kernels; in particular, when applying reduction rules, we typically restrict ourselves to only preserving some maximal/minimal solutions.
The second is on the lifting algorithm: by requiring that it always produces some output, we must also be able to partition the set of solutions with a same trace; this puts additional strain on the algorithm, making its design significantly more complicated.

\begin{restatable}{theorem}{thmframeworkkernel}
    \label{thm:framework_kernel}
    Let $\Pi^\kappa$ be parameterized decision graph vertex-subset problem and $\Pi^\kappa_\enum$ its parameterized enumeration version.
    If there exist $\A_1, \A_e, \A_c$ that satisfy \cref{cond:good_decision_kernel,cond:decidable_trace,cond:poly_delay} and $\A_1$ is a kernel, then $\Pi^\kappa_\enum$ admits a \pdkernel{} of the same size as $\A_1$ and delay bounded by the maximum between the delay of $\A_e$ and the running time of $\A_c$.
\end{restatable}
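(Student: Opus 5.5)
We take the compression algorithm of the \pdkernel{} to be $\A_1$ itself, and we build the lifting algorithm $\A_2$ from $\A_c$, $\lambda$ and $\A_e$. Since $\A_1$ is a (decision) kernel, it runs in polynomial time and outputs $(H,\mathcal{L},\ell)$ with $|H|,\ell \le f(k)$. It also outputs an equivalent instance, so $\Sol(H,\mathcal{L},\ell)\neq\emptyset$ if and only if $\Sol(G,\I,k)\neq\emptyset$. This settles the first bullet of \cref{def:enum_kernels} and fixes the size.

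Given $S'\in\Sol(H,\mathcal{L},\ell)$, the lifting algorithm $\A_2$ proceeds in three steps.
\begin{enumerate}
\item It computes the core $(\Lambda_H,\lambda)$ in polynomial time, as guaranteed by \cref{cond:good_decision_kernel}.
\item It runs $\A_c$ on $S'$. If the answer is \NOi, it outputs the empty set.
\item If the answer is \YES, it sets $Y_H = S'\cap\Lambda_H$ and $Y_G=\lambda(Y_H)$, and runs $\A_e$ on $Y_G$.
\end{enumerate}
The delay is the running time of $\A_c$ plus the delay of $\A_e$ (the core computation is polynomial), which is bounded by their maximum up to a constant factor.

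The key point is why $Y_H$ in step 3 is a \emph{good} trace, since \cref{cond:poly_delay} only applies to good traces. I would not derive this from item 4 of \cref{cond:decidable_trace}: that item makes $\A_c$ reject non-canonical solutions, but it does not by itself say what $\A_c$ does on a solution whose trace is not good. My plan is to read \cref{cond:decidable_trace} as saying that \YES{} is returned exactly on the canonical solutions of good traces. This is the intended meaning of ``separating good traces from bad ones''. Alternatively, one notes that if $\A_e$ were run on a non-good trace the natural output $\{S\in\Sol(G,\I,k)\mid S\cap\Lambda_G=Y_G\}$ is empty, which is harmless. I expect pinning down this interpretation to be the main subtle step. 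Everything else is bookkeeping.

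It remains to check that the nonempty outputs partition $\Sol(G,\I,k)$.
\begin{itemize}
\item \emph{Coverage.} Take $S\in\Sol(G,\I,k)$ and let $Y_G=S\cap\Lambda_G$. By \cref{cond:good_decision_kernel}, $Y_H=\lambda^{-1}(Y_G)$ is a good trace. By \cref{cond:decidable_trace}, it has a canonical solution $S_{Y_H}$ accepted by $\A_c$. Lifting $S_{Y_H}$ runs $\A_e$ on $Y_G$, which outputs $S$.
\item \emph{Disjointness.} For each good trace, $\A_c$ accepts exactly one solution with that trace. Because $\lambda$ is injective, distinct good traces $Y_H$ give distinct $Y_G$. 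The sets $\{S\mid S\cap\Lambda_G=Y_G\}$ for distinct $Y_G$ are pairwise disjoint.
\item \emph{Soundness.} Every set output by $\A_e$ is contained in $\Sol(G,\I,k)$.
\end{itemize}
Hence $\{S_Y\neq\emptyset\}$ is a partition of $\Sol(G,\I,k)$, and the theorem follows.
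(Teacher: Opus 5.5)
Your proposal is correct and follows the paper's proof essentially step for step. The paper also takes $\A_1$ as the compression algorithm and lifts $S'$ by running $\A_c$ and then $\A_e$ on $\lambda(S'\cap\Lambda_H)$. It obtains coverage from \cref{cond:good_decision_kernel} together with the canonical solutions, and disjointness from the uniqueness of canonical solutions. Your reading of \cref{cond:decidable_trace}, that $\A_c$ accepts only the canonical solutions of good traces, is exactly what the paper implicitly assumes when it says that $\A_c$ ``only returns \YES{} on a unique instance for each good trace.''
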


\begin{proof}
    Let $(G, \I, k)$ be an input of $\Pi_\enum$.
    Observe that, by definition, $(G, \I, k)$ is also a valid input of $\Pi$ and $(G, \I, k)$ is a \YES-instance of $\Pi$ if and only if $\Sol(G, \I, k) \neq \emptyset$.
    As such, to obtain our \pdkernel{}, we begin by setting $\A_1$ (given in \cref{cond:good_decision_kernel}), as our compression algorithm.
    Let $(H, \mathcal{L}, \ell) = \A_1(G, \I, k)$ and $(\Lambda_\A, \lambda)$ be the core of the kernel; by the definition of a decision kernel, $|H|, \ell, \mathcal{L} \leq f(k)$. Let $\Lambda_G = \lambda(\Lambda_\A)$. 
    If $\A_1$ returns a \NOi-instance of $\Pi$, then there is nothing to lift and we are done, so assume this is not the case. Given a solution $S' \in \Sol(H, \mathcal{L}, \ell)$, the lifting algorithm, $\A_2$, works as follows: first, it runs algorithm $\A_c$ of \cref{cond:decidable_trace}.
    If $\A_c$ returns \YES{}, then $\A_2$ runs algorithm $\A_e$ of \cref{cond:poly_delay} with $Y_G = \lambda(S' \cap \Lambda_\A)$ as input and returns its output, otherwise $\A_2$ returns the empty set.
    This concludes the description of the \pdkernel{} $(\A_1, \A_2)$.
    
    Let us prove that $(\A_1, \A_2)$ is indeed a \pdkernel{}; to do so, it suffices to show that all solutions of $(G, \I, k)$ are enumerated with polynomial-delay by $\A_2$ without repetitions, as $\A_1$ already satisfies the conditions on the compression algorithm of \cref{def:enum_kernels}.
    First, observe that $\A_2$ runs in $\max \{\poly(|G|, k, \I, |H|, \ell, \mathcal{L}, S'), \poly(|G|, k, \I, |H|, \ell, \mathcal{L}, Y_H)\}$-precalculation time and $\poly(|G|, k, \I, |H|, \ell, \mathcal{L}, Y_H)$ delay, where $S' \in \Sol(H, \mathcal{L}, \ell)$ and $Y_H = S' \cap \Lambda_\A$.
    Let $S_1 \in \Sol(G, \I, k)$, $Y_H = \lambda^{-1}(S_1 \cap \Lambda_G)$, and $S' \in \Sol(H, \mathcal{L}, \ell)$ be a solution of $(H, \mathcal{L}, \ell)$ with $S' \cap \Lambda_\A = \lambda_H$ for which algorithm $\A_c$ returned \YES, i.e., it is the canonical solution for $Y_H$; by \cref{cond:decidable_trace}, $S'$ exists and is unique.
    As $\A_2$ is applied to every solution of $(H, \mathcal{L}, \ell)$, it is applied to $S'$.
    By our construction of $\A_2$, $\A_e$ will be run on $Y_H$.
    That is, $\A_2$ will output the set $\{\Sol(G, \I, k) \mid S \cap \Lambda_G = \lambda(Y_H)\}$, which contains $S_1$, when given $S'$.
    This implies that every solution of $\Sol(G, \I, k)$ is output by some call of our lifting algorithm.
    Note that $S_1$ is output only when $\A_2$ is given $S'$: any other $S'' \in \Sol(H, \mathcal{L}, \ell)$ for which $\A_c(|G|, k, \I, |H|, \ell, \mathcal{L}, S'')$ outputs \NOi{}, with $S'' \cap \Lambda_\A \neq Y_H$ or $S'' \cap \Lambda_\A = Y_H$, does not produce $S_1$; the first follows from the construction of $\A_2$, while the latter additionally follows from \cref{cond:poly_delay}.
    Finally, $\A_2$ produces no other output: $\A_c$ only returns \YES{} on a unique instance for each good trace, and $\A_e$ only outputs solutions of $(G, \I, k)$.    
\end{proof}

As our kernels rely on reduction rules, we prove that they respect \cref{cond:good_decision_kernel} in order to leverage \cref{thm:framework_kernel}.
The core of a \pdkernel{} is what links its compression and lifting parts, which allows for a clean decoupling of these phases and simplifies the design of the kernel.
A reduction rule is \emph{safe} if the corresponding compression algorithm satisfies the first item of \cref{cond:good_decision_kernel}, and it is \emph{sound} if it satisfies the second.
For most reduction rules we prove only their soundness, overall shortening our proofs.
When both safeness and soundness are needed, e.g., when we introduce a novel reduction rule, we prove them separately.
\cref{lem:condition_one_composes} shows that algorithms satisfying \cref{cond:good_decision_kernel} compose (given that they have compatible cores).

\begin{restatable}{lemma}{lemconditiononecomposes}
    \label{lem:condition_one_composes}
    Let $\A, \B$ be algorithms satisfying \cref{cond:good_decision_kernel} for the same problem $\Pi^\kappa$, $(G, \I, k)$ be an input of $\Pi^\kappa$,
    $(H, \mathcal{L}, \ell) = \B(G, \I, k)$, and $(\Lambda_\B, \lambda_\B)$ be a core for $\B$. If there is some core $(\Lambda_\A, \lambda_\A)$ for $\A$ such that $\lambda_\A(\Lambda_\A) = \Lambda_\B$, then the algorithm $\C$ obtained by applying $\B$ and then $\A$ satisfies \cref{cond:good_decision_kernel} with core $(\Lambda_\A, \lambda_\B \circ \lambda_\A)$.
\end{restatable}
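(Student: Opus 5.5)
The plan is to check the two items of \cref{cond:good_decision_kernel} for $\C$ directly, chaining the trace-preservation properties of $\B$ and $\A$. Write $(H,\mathcal{L},\ell) = \B(G,\I,k)$ and $(J,\mathcal{M},m) = \A(H,\mathcal{L},\ell)$. The core $(\Lambda_\A,\lambda_\A)$ of $\A$ lives in $J$ and maps into $V(H)$, and $(\Lambda_\B,\lambda_\B)$ maps $\Lambda_\B\subseteq V(H)$ into $V(G)$. Since $\lambda_\A(\Lambda_\A)=\Lambda_\B$, the composition $\lambda_\B\circ\lambda_\A$ is well defined on $\Lambda_\A$. It is injective as a composition of injections, and its image is $\lambda_\B(\Lambda_\B)=\Lambda_G$, the same set as the core of $\B$ in $G$.

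For the first item, equivalence follows by transitivity: $\B$ outputs an instance equivalent to $(G,\I,k)$, and $\A$ outputs one equivalent to $(H,\mathcal{L},\ell)$. The size bound also chains: $|J|\le|H|\le|G|$.

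For the core property, take $Y_G\subseteq\Lambda_G$ and $S\in\Sol(G,\I,k)$ with $S\cap\Lambda_G=Y_G$.
\begin{enumerate}
\item The core property of $\B$ gives $S_H\in\Sol(H,\mathcal{L},\ell)$ with $Y_B = S_H\cap\Lambda_\B$ and $\lambda_\B(Y_B)=Y_G$.
\item Here $Y_B\subseteq\Lambda_\B=\lambda_\A(\Lambda_\A)$, so $Y_B$ is a subset of the image set of $\A$'s core, and $S_H$ witnesses that it is realized as a trace. The core property of $\A$ then gives $S_J\in\Sol(J,\mathcal{M},m)$ with $Y_J=S_J\cap\Lambda_\A$ and $\lambda_\A(Y_J)=Y_B$.
\item Combining the two, $(\lambda_\B\circ\lambda_\A)(Y_J)=\lambda_\B(Y_B)=Y_G$, which is exactly what the core property of $\C$ requires.
\end{enumerate}

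The main subtlety is the hypothesis $\lambda_\A(\Lambda_\A)=\Lambda_\B$. Without it, the image $\Lambda_H'$ of $\A$'s core in $H$ could differ from $\Lambda_\B$. One would then only know traces on $\Lambda_H'$, not the trace $S_H\cap\Lambda_\B$ obtained from $\B$, so the chain would break. I will state explicitly that the equality makes the $\B$-trace an $\A$-input trace.

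Finally, for the computability clause, compute $(\Lambda_\B,\lambda_\B)$ in time polynomial in the sizes of $G$ and $H$. Then compute $(\Lambda_\A,\lambda_\A)$ in time polynomial in the sizes of $H$ and $J$, and compose. Since $|H|\le|G|$, this is polynomial in the sizes of $G$ and $J$ and the remaining inputs, and $H$ is an intermediate object that $\C$ can compute and store.
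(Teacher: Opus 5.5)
Your proposal is correct and follows essentially the same route as the paper. Both arguments apply the core property of $\B$ to obtain a solution of $(H,\mathcal{L},\ell)$ whose trace on $\Lambda_\B=\lambda_\A(\Lambda_\A)$ maps to $Y_G$, then apply the core property of $\A$ to that trace and compose via $\lambda_\B\circ\lambda_\A$. You are, if anything, more explicit than the paper about the first item (equivalence by transitivity and $|J|\le|H|\le|G|$) and about why the hypothesis $\lambda_\A(\Lambda_\A)=\Lambda_\B$ is exactly what lets the chain go through.
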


\begin{proof}
    Let $(G, \I, k)$ be an input of $\Pi$, $(F, p, \P) = \A(H, \mathcal{L}, \ell) = \C(G, \I, k)$, 
    Let us show that each item of \cref{cond:good_decision_kernel} is satisfied in the same order they were stated:
    \begin{enumerate}
        \item As both $\A$ and $\B$ receive and output instances of $\Pi$, so does $\C$.
        \item Let $\Lambda_G = \lambda_\C(\Lambda_\A)$.
        We argue that $(\Lambda_\A, \lambda_\C)$ is a core for $\C$. As $\lambda_\A$ and $\lambda_\B$ are bijective functions, it follows that $\lambda_\C: \Lambda_\A \mapsto \Lambda_G$ is also bijective, i.e., $\lambda_\C^{-1}$ is unique and well defined.
        Now take $S_G \in \Sol(G, \I, k)$.
        By definition, there are $S_H \in \Sol(H, \mathcal{L}, \ell)$, and $S_F = \Sol(F, p, \P)$ such that $\lambda_\A(S_F \cap \lambda_F) = S_H \cap \lambda_\B = Y_H$ and $\lambda_\B(Y_H) = S_G \cap \Lambda_G$; that is $\lambda_\C(S_F \cap \lambda_F) = S_G \cap \Lambda_G$, as desired.
        \item As $\C = \A \circ \B$, it follows immediately that the core of $\C$ can be computed in the time taken to compute the cores of $\A$ and $\B$. \qedhere
    \end{enumerate}
\end{proof}

\begin{observation}
    \label{obs:good_core_subset}
    If $\A$ is an algorithm that satisfies \cref{cond:good_decision_kernel} for core $(\Lambda_H, \lambda)$, then for every $\Lambda'_H \subseteq \Lambda_H$ it holds that $\A$ satisfies \cref{cond:good_decision_kernel} for the core $(\Lambda'_H, \lambda')$, where $\lambda'$ is the restriction of $\lambda$ to $\Lambda'_H$.
\end{observation}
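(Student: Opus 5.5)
The plan is to verify the three requirements of \cref{cond:good_decision_kernel} directly for the restricted core $(\Lambda'_H, \lambda')$, with $\Lambda'_G = \lambda'(\Lambda'_H) = \lambda(\Lambda'_H) \subseteq \Lambda_G$. The first item concerns only the compression algorithm $\A$ and not the core, so it transfers unchanged. For the injectivity requirement, I will note that the restriction of an injection is an injection, so $\lambda'$ is a bijection between $\Lambda'_H$ and $\Lambda'_G$. For the computability requirement, once $\Lambda_H$ and $\lambda$ are computed in polynomial time, I will assume $\Lambda'_H$ is computable in polynomial time (as it will be in every application) and obtain $\lambda'$ by restricting $\lambda$, which takes linear additional time.

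The substantive step is the trace-preservation property. Let $Y'_G \subseteq \Lambda'_G$ and suppose there is $S \in \Sol(G, \I, k)$ with $S \cap \Lambda'_G = Y'_G$. I will set $Y_G = S \cap \Lambda_G$. This set is a trace of a solution on the original core, so the core property of $(\Lambda_H, \lambda)$ gives some $S' \in \Sol(H, \mathcal{L}, \ell)$ with $\lambda(S' \cap \Lambda_H) = Y_G$. I then intersect both sides with $\Lambda'_G$. Since $\lambda$ is injective, $\lambda(S' \cap \Lambda_H) \cap \lambda(\Lambda'_H) = \lambda(S' \cap \Lambda'_H)$, and hence
\[
\lambda'(S' \cap \Lambda'_H) = Y_G \cap \Lambda'_G = S \cap \Lambda_G \cap \Lambda'_G = S \cap \Lambda'_G = Y'_G.
\]
So the same $S'$ witnesses the condition for the restricted core, and $Y'_H = S' \cap \Lambda'_H$ is the required good trace.

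The only point needing care is the identity $\lambda(A) \cap \lambda(B) = \lambda(A \cap B)$. It relies on injectivity: if $\lambda(a) = \lambda(b)$ with $a \in A$ and $b \in B$, then $a = b$, so the common image comes from $A \cap B$. Everything else is bookkeeping, so I expect no real obstacle.
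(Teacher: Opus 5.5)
Your proof is correct and is the direct verification the paper leaves implicit; the paper states this as an observation without giving a proof. You reuse the witness $S'$ that the full core provides for $Y_G = S \cap \Lambda_G$, and injectivity of $\lambda$ gives $\lambda(S' \cap \Lambda_H) \cap \lambda(\Lambda'_H) = \lambda(S' \cap \Lambda'_H)$. Your caveat is a fair point that the statement glosses over: $\Lambda'_H$ must itself be polynomial-time computable for the computability clause to hold. This is satisfied in every application in the paper, for example $X' \cap X$ or $V(G')$.
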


As all our kernels are defined by applying exhaustively some reduction rules, let us now explain how we use  \autoref{lem:condition_one_composes} and \autoref{obs:good_core_subset} to 
argue that, once the soundness of each rule is proved, this implies the soundness of the kernel that is defined by the composition of  these rules (using an appropriate core).

Let us start with the parameterization by the solution size.
In this setting, all reduction rules have the following structure.
Given an input $(G,t)$ of the rule, it outputs $(G',t')$, and we  choose as the core the set $V(G')$ with the identity injection.
Then, given the input of the kernel $(G,t)$, and  the output $(H,\ell)$ after applying exhaustively all rules, \autoref{obs:good_core_subset} implies, if we choose $V(H)$ as the core (for the whole kernel), that all rules are sound using $V(H)$ as the core, and \autoref{lem:condition_one_composes} implies that the overall kernel is sound as well.
Similarly, moving to structural parameterizations, all our rules also have a common structure.
Given an input $(G,X,t)$ of the rule, where $X$ is the modulator corresponding to a given parameterization, the rule outputs $(G',X',t')$, where $X'$ may be a subset or a superset of $X$, and we always choose as the core the set $X' \cap X$, again with the identity injection.
Then, given the input of the kernel $(G,X,t)$, and the output $(H,X',\ell)$  after applying exhaustively all rules, \autoref{obs:good_core_subset} implies that, if $X'$ is the kernel's core, all rules are sound using $X'$ as the core, and \autoref{lem:condition_one_composes} implies that the overall kernel is sound as well.
\section{The natural parameterization for \pname{Enum Vertex Cover}}
\label{sec:vcvc}

Let us formally define our target problem.
\problenum{Enum Vertex Cover}{A graph $G$ and an integer $k$.}{All vertex covers of $G$ of size at most $k$.}

\vspace{-.2cm}

We use crown decompositions to obtain a simple linear \pdkernel{}  for \pname{Enum Vertex Cover} parameterized by $k$.
While it has a slightly higher constant on the number of vertices than the strong \pdkernel{} presented by Bougeret et al.~\cite{vc_fvs_strong_pd_kernel} (3 versus 2), our lifting algorithm takes only a few lines, while theirs is eight pages long.
We start with \cref{obs:vcCore}, which states that if a compression algorithm only removes vertices, and, in addition, has a ``forward safeness proof'' that just restricts the solution to the output graph, then we can keep a large core.
Recall that having large cores can only make \cref{cond:poly_delay} easier to satisfy.
\begin{observation}
    \label{obs:vcCore}
    Let $\A_1$ be a \pname{Vertex Cover} kernel that: (i) given $(G,k)$, outputs $(G',\ell)$ where $G' \coloneqq G\setminus X$ for some $X \subseteq V(G)$; and (ii) for every $S \in \Sol(G,k)$, $S\setminus X \in \Sol(G',\ell)$.
    Then, by setting $V(G')$ and the identity injection as the core, we satisfy \cref{cond:good_decision_kernel}.
\end{observation}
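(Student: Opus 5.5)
The plan is to check the two items of \cref{cond:good_decision_kernel} directly, taking $\Lambda_H = V(G')$ and $\lambda$ the identity on $V(G')$, so that $\Lambda_G = V(G')$ as well.

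The first item is immediate. Since $\A_1$ is a kernel for \pname{Vertex Cover}, $(G',\ell)$ is equivalent to $(G,k)$. Since $G' = G \setminus X$, we have $|G'| \le |G|$.

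For the second item, the identity is trivially an injection from $V(G')$ into $V(G)$, because $V(G') \subseteq V(G)$. It can be computed in time polynomial in $|G| + |G'|$ by reading off $V(G')$ from the output. It remains to show that every trace realized by a solution of the input is realized by a solution of the output. Let $Y_G \subseteq \Lambda_G = V(G')$ and suppose some $S \in \Sol(G,k)$ satisfies $S \cap V(G') = Y_G$. By hypothesis (ii), $S' := S \setminus X \in \Sol(G',\ell)$. Since $V(G') = V(G) \setminus X$, we have $S' = S \cap V(G')$, hence
\[ S' \cap \Lambda_H = S \cap V(G') = Y_G. \]
Taking $Y_H := Y_G$ gives $\lambda(Y_H) = Y_G$, which is exactly the required witness.

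No step here is a real obstacle. The only thing to be careful about is the identity $S \setminus X = S \cap V(G')$, which holds because $G'$ is obtained purely by vertex deletion; this is why hypothesis (i) is needed. Hypothesis (ii) is precisely the ``forward'' direction that \cref{cond:good_decision_kernel} asks for once the core is the whole output vertex set.
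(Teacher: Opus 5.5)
Your proof is correct and follows the paper's reasoning. The paper states this as an observation without a written proof, justified exactly by the ``forward safeness'' map $S \mapsto S \setminus X = S \cap V(G')$, which preserves the trace on the core $V(G')$. You check both items of \cref{cond:good_decision_kernel} explicitly, including $|G'| \le |G|$ and the polynomial-time computability of the core, which makes that implicit argument precise.
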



We define crowned graphs in \cref{def:crowned_graph} and a strengthening of crown decompositions in \cref{def:heavy_crown}, where the body is small or, equivalently, that the crown is large.

\begin{definition}
  \label{def:crowned_graph}
  A graph is a \emph{crowned graph} if its vertex set can be partitioned into $H \cup C$, such that $C$ is an independent set, and such that there is a matching $M$ between $H$ and $C$
  with $|M|=|H|$. The set $C$ is called the \emph{crown}, and $H$ is called the \emph{head}.
  Given a graph $G$, a \emph{crown decomposition}~\cite[p. 26]{cygan_parameterized} of width $t$ is a partition $(C, H, B)$ of $V(G)$ such that $G[H \cup C]$ is
  a crowned graph (with head $H$), $N_G(C)=H$, and $|H| = t$; set $B$ is the \emph{body}.
\end{definition}

\begin{definition}[Heavy-crown decomposition]
    \label{def:heavy_crown}
    Given a graph $G$, a \emph{heavy-crown decomposition} of width $t$ is a crown decomposition $(C, H, B)$ of $G$ of width $t$ where $|B|+2|H| \leq 3t$.
\end{definition}

We remark that heavy-crown decomposition is in fact the original definition of Chor et al.~\cite{original_crown} for crown decompositions.
The proof of \cref{lem:heavy_crown_poly} is given as the proof of the Crown Lemma in the more recent book by Fomin et al.~\cite{book_kernels}; while they do not explicitly state it in terms of heavy-crown decompositions, one can easily check that it indeed leads to the conclusion that one can find such a decomposition in polynomial time.

\begin{lemma}
    \label{lem:heavy_crown_poly}
    Let $t \in \mathbb{N}$. There is a polynomial-time algorithm that, given $t$, and a graph $G$ with no isolated vertices and with $|V(G)| \geq 3t+1$, either: (\textit{i}) computes a matching of size at least $t+1$ of $G$, or (\textit{ii}) computes a heavy-crown decomposition of width at most $t$.
\end{lemma}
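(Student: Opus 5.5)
The plan is to follow the standard proof of the Crown Lemma via K\H{o}nig's theorem and Hall's theorem, and then check the additional inequality $|B|+2|H| \leq 3t$.

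First, compute a maximum matching $M$ of $G$ in polynomial time. If $|M| \geq t+1$, we are in case (\textit{i}) and stop. Otherwise $|M| \leq t$. Let $V(M)$ be the saturated vertices, so $|V(M)| \leq 2t$. Then $I = V(G) \setminus V(M)$ is an independent set by maximality, and $|I| \geq |V(G)| - 2t \geq t+1$.

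Next, consider the bipartite graph $G_I$ with sides $V(M)$ and $I$, keeping the edges of $G$ between them. Every vertex of $I$ has a neighbour in $V(M)$, because $G$ has no isolated vertices and $I$ is independent. Compute a maximum matching $M'$ of $G_I$ and a minimum vertex cover $X$ of $G_I$ via K\H{o}nig's theorem, with $|X| = |M'|$.

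We have $|M'| \leq |M| \leq t$. Indeed, if $|M'| > |M|$ we could not directly conclude anything in general, so instead we use the standard argument that a matching of $G_I$ is a matching of $G$, and $M$ is maximum in $G$. Hence $|X| \leq t$.

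Now set $H = X \cap V(M)$ and $C = I \setminus X$. Since $|I| > |X|$, the set $C$ is nonempty. Because $X$ covers $G_I$, we get $N_G(C) \subseteq H$. The K\H{o}nig construction (every vertex of $X$ is matched by $M'$, with the vertices of $X \cap V(M)$ matched to $I \setminus X$) gives $M'$ restricted to $H$ as a matching of $H$ into $C$. We also get $N(C) = H$, by refining $H$ to $N(C)$ if necessary; otherwise the usual iterative refinement from the book's proof (remove $X\cap I$ and argue that $X\cap V(M)$ is saturated into $C$) applies. Set $B = V(G) \setminus (C \cup H)$.

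The width is $|H| \leq |X| \leq t$. The step needing care, and the main obstacle, is the heavy-crown inequality. Here $B = (V(M) \setminus H) \cup (I \cap X)$, so
\[
|B| + 2|H| = |V(M)| + |H| + |I\cap X| \leq 2|M| + |X|.
\]
With $|M| \leq t$ and $|X| \leq t$ this gives only $3t$ in terms of the global $t$. The definition, however, requires $3t'$ where $t' = |H|$ is the width. So I would instead bound relative to $|H|$.

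To do this, I would choose $M'$ and $X$ so that $|X \cap I|$ is minimal, and argue that all of $X$ lies in $V(M)$. Alternatively, I would apply the crown reduction iteratively: remove the found crown, recurse on the body while it has at least $3|\cdot|+1$ vertices, and merge the crowns. Merging is valid because the union of crown decompositions is again a crown decomposition with head equal to the union of heads. The recursion stops when the body is small relative to the accumulated head, and this yields $|B|+2|H| \leq 3|H|$ up to the bound. I expect this width-relative inequality to be the delicate point. The recursion terminates in polynomially many steps, since each step removes at least one crown vertex.
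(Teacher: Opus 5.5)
Up to and including the count $|B|+2|H| = |V(M)|+|X| \le 2|M|+|X| \le 3t$, your argument is correct and is essentially the paper's route. The paper only cites the Crown Lemma proof of Fomin et al., which sets $H = X\cap V(M)$ and takes the \emph{whole} of $I\setminus X$ as the crown. (Taking only the $M'$-matched part, as some textbook versions do, would give $|B|+2|H| = |V(G)| > 3t$.) Heaviness is then exactly your count. Two small remarks. First, $N(C)=H$ needs no refinement: you already showed that each $h\in H$ has its $M'$-partner in $C$. Second, $H\neq\emptyset$ because $C\neq\emptyset$ and $G$ has no isolated vertices.

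The flaw is your last paragraph, which abandons a finished proof for an impossible target. In this lemma the bound $3t$ refers to the input $t$ (the budget $k$), together with $|H|\le t$. That is all that \cref{vcvcrule:unmatched_crown} and \cref{lem:vcvc_bound} use, since $G\setminus L$ has exactly $|B|+2|H|$ vertices.

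Reading the bound as $3|H|$, i.e.\ $|B|\le |H|$, makes the statement false. Take $K_{1,N}$ with $N\ge 3$ plus $m\ge 1$ disjoint triangles, and set $t=m+1$. The maximum matching has size $t$ and $|V(G)|\ge 3t+1$, so the algorithm must return a decomposition.
\begin{itemize}
\item A triangle vertex in $C$ would force its two triangle neighbours into $H$, and both would have that same vertex as their only possible partner in $C$. So no triangle vertex is in $C$, and hence none is in $H=N(C)$.
\item $H$ contains at most the centre of the star.
\end{itemize}
Thus $|B|\ge 3m>|H|$ for every crown decomposition. Consequently neither of your fixes (forcing $X\subseteq V(M)$, or iterating and merging crowns) can succeed. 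Drop that paragraph and conclude directly from your inequality.
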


Let us now explain why we consider the heavy-crown variant instead of the classical one (where $B$ may be arbitrarily large).
In the textbook kernel for \pname{Vertex Cover}, the algorithm finds a crown decomposition, and as there is always at least one solution that contains $H$, it restarts on $(G',k')=(G \setminus (H \cup C),k-|H|)$).
Such a rule verifies the hypothesis of \cref{obs:vcCore}, and thus we could use it in a \pdkernel{} with the core $(V(G'), \id)$, and \cref{cond:good_decision_kernel} would be satisfied. However, proving that \cref{cond:poly_delay} holds would become challenging (see how it is done in \cite{vc_fvs_strong_pd_kernel}) as it would require, given a good trace $Y_G \subseteq V(G)\setminus (H \cup C)$, to enumerate with polynomial delay all vertex covers $S$ of $G$ of size at most $k$ such that $S \setminus (H \cup C)=Y_G$, which is non-trivial as $G[H \cup C]$ may be complex (even if $C$ is an independent set). 
Thus, we prefer to rely on heavy-crown decompositions, which allows us to only remove vertices from $C$ (as in \cite{ABUKHZAM2010524}) instead of $H \cup C$  to obtain our kernel.

We consider the kernel for \pname{Vertex Cover} based on the following two rules, and then show that we can indeed use this kernel to satisfy \cref{cond:good_decision_kernel,cond:decidable_trace,cond:poly_delay} to get a \pdkernel{}.

\begin{vcvcrule}
    \label{vcvcrule:deg_0}
    Let $(G, k)$ be an input instance of \pname{Vertex Cover}. If $v$ is an isolated vertex of $G$, remove $v$ from $G$ and set $k \gets k$.
\end{vcvcrule}


\begin{vcvcrule}
    \label{vcvcrule:unmatched_crown}
    Let $(G, k)$ be an input of \pname{Vertex Cover} and $(C, H, B)$ be a heavy-crown decomposition of $G$ of width $t \leq k$.
    Let $M^{\star}$ be an $H$-saturating matching between $H$ and $C$ (meaning $|M^{\star}|=|H|$), and $L = C \setminus V(M^{\star})$. Then, output $(G \setminus L, k)$.
\end{vcvcrule}

Note that the soundness proofs of \cref{vcvcrule:deg_0,vcvcrule:unmatched_crown} (choosing as the core the vertex set of the output graph as announced in \autoref{sec:framework}) is immediate by \autoref{obs:vcCore}.


Given an instance $(G,k)$ of \pname{Vertex Cover}, the compression algorithm $\A_1$ is as follows.
First, we apply \cref{vcvcrule:deg_0} exhaustively.
If $|V(G)|\le 3k$, we output $(G',k')=(G,k)$.
Otherwise, apply \cref{vcvcrule:unmatched_crown} once using the algorithm of \cref{lem:heavy_crown_poly} with $t=k$.
If it finds a matching of size $k+1$ then we output a trivial \NOi-instance. Otherwise, output $(G',k')$, the output of \cref{vcvcrule:unmatched_crown}.
\cref{lem:vcvc_bound} is immediate.

\begin{lemma}
    \label{lem:vcvc_bound}
    Let $(G, k)$ be an input of \pname{Vertex Cover} and $(G', k) = \A_1(G, k)$.
    It holds that $|V(G')| \leq 3k$ and $\Sol(G,k) \neq \emptyset$ if and only if $\Sol(G', k') \neq \emptyset$.
\end{lemma}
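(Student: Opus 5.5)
The plan is to follow the description of $\A_1$ case by case, handling the size bound and the equivalence $\Sol(G,k)\neq\emptyset \Leftrightarrow \Sol(G',k')\neq\emptyset$ in each case.

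\textbf{Rule 1.} First I would check that \cref{vcvcrule:deg_0} preserves the existence of solutions. If $S$ is a vertex cover of $G$ of size at most $k$, then $S\setminus\{v\}$ is a vertex cover of $G\setminus v$, since $v$ has no incident edges. Conversely, any vertex cover of $G\setminus v$ is also a vertex cover of $G$. So applying the rule exhaustively preserves equivalence. Afterwards $G$ has no isolated vertices. If now $|V(G)|\le 3k$, the algorithm outputs $(G,k)$ and both claims hold trivially.

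\textbf{Setting up the crown step.} Otherwise $|V(G)|\ge 3k+1$ and $G$ has no isolated vertices, so \cref{lem:heavy_crown_poly} applies with $t=k$. There are two outcomes.

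\emph{Case (i): a matching of size $k+1$.} Every vertex cover must contain a distinct endpoint of each matching edge, so it has size at least $k+1$. Hence $\Sol(G,k)=\emptyset$. The algorithm outputs a trivial \NOi-instance, which has constant size, so for $k\ge 1$ it has at most $3k$ vertices; the case $k=0$ is degenerate and I would treat it separately.

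\emph{Case (ii): a heavy-crown decomposition $(C,H,B)$ of width $t\le k$.} After \cref{vcvcrule:unmatched_crown} removes $L=C\setminus V(M^\star)$, the remaining vertex set is $B\cup H\cup (C\cap V(M^\star))$. Since $M^\star$ saturates $H$, we have $|C\cap V(M^\star)|=|H|$. Therefore
\[
|V(G')| \;=\; |B|+2|H| \;\le\; 3t \;\le\; 3k,
\]
using \cref{def:heavy_crown}.

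\textbf{Equivalence for the crown rule.} This is the main step.

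\emph{Forward direction.} If $S\in\Sol(G,k)$, then $S\cap V(G')$ covers all edges of the induced subgraph $G'$ and has size at most $k$.

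\emph{Backward direction.} Let $S'$ be a vertex cover of $G'$ with $|S'|\le k$. The edges of $M^\star$ are pairwise disjoint and lie in $G'$, so $S'$ contains at least $|H|$ vertices of $H\cup (C\cap V(M^\star))$. Define
\[
S'' = \bigl(S'\setminus (C\cap V(M^\star))\bigr)\cup H .
\]
Then $|S''|\le |S'|\le k$. I would then check that $S''$ covers every edge of $G$:
\begin{itemize}
\item edges inside $B$ are covered by $S'$, and $S''$ agrees with $S'$ on $B$;
\item every edge incident to $H$ is covered because $H\subseteq S''$;
\item there are no edges inside $C$, since $C$ is independent;
\item there are no edges between $C$ and $B$, since $N_G(C)=H$.
\end{itemize}
This includes all edges incident to the removed set $L$, since those go to $H$. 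So $S''\in\Sol(G,k)$, which gives the equivalence. Note that $k'=k$ throughout.

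The only genuinely non-trivial point is the exchange argument in the backward direction. It relies on the matching $M^\star$ forcing $|H|$ vertices of $S'$ into $H\cup C$, and on the structural facts that $C$ is independent and $N_G(C)=H$.
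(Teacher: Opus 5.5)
Your proposal is correct and matches the paper's intended argument. The paper simply calls the lemma immediate, and your case analysis supplies exactly the routine details behind that claim: the safeness of removing isolated vertices, the count $|V(G')| = |B|+2|H| \le 3k$ from the heavy-crown condition, and the standard exchange $S'' = (S'\cap B)\cup H$, which relies on the disjoint edges of $M^\star$, the independence of $C$, and $N_G(C)=H$. Your observation that the degenerate case $k=0$ needs separate care when outputting a trivial no-instance is a legitimate detail that the paper glosses over.
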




We are now ready to derive our \pdkernel{} using the framework of \cref{sec:framework}.

\begin{restatable}{theorem}{thmvcvckernel}
    \label{thm:vc_vc_kernel}
    \pname{Enum Vertex Cover} admits a \pdkernel{} with at most $3k$ vertices when parameterized by the maximum size of a solution $k$. 
\end{restatable}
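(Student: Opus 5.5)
We apply \cref{thm:framework_kernel} to the compression algorithm $\A_1$ just described. We use the core $\Lambda_H = V(G')$ with the identity injection, so $\Lambda_G = V(G')$ as well. The size bound and the equivalence of instances come from \cref{lem:vcvc_bound}. When $\A_1$ outputs a trivial \NOi-instance (a matching of size $k+1$), both instances have no solution and there is nothing to lift. Otherwise $G' = G \setminus L$, where $L$ is the union of the isolated vertices removed by \cref{vcvcrule:deg_0} and the unmatched crown vertices removed by \cref{vcvcrule:unmatched_crown}.

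\textbf{\cref{cond:good_decision_kernel}.} Take $S \in \Sol(G,k)$. Then $S \setminus L$ still covers every edge of $G'$, since $G'$ is an induced subgraph of $G$, and it has size at most $k$. So the hypothesis of \cref{obs:vcCore} holds for each rule. By \cref{lem:condition_one_composes} and \cref{obs:good_core_subset}, the whole of $\A_1$ satisfies the condition with core $V(G')$. The good traces are therefore exactly the sets $Y = S \cap V(G')$ for $S \in \Sol(G,k)$, and every such $Y$ is itself a vertex cover of $G'$ of size at most $k$.

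\textbf{\cref{cond:decidable_trace}.} Since $\Lambda_H = V(G')$, any solution $S'$ of $(G',k)$ satisfies $S' \cap \Lambda_H = S'$. Each trace therefore has exactly one solution of $G'$ with that trace, namely $S'$ itself. The choosing algorithm $\A_c$ only has to decide whether $S'$ is a good trace, i.e., whether $S'$ extends to a vertex cover of $G$ of size at most $k$ using vertices of $L$. The vertices of $L$ form an independent set: crown vertices are pairwise non-adjacent, and the other vertices of $L$ are isolated. Hence the cheapest extension is $S' \cup N_L(V(G') \setminus S')$, i.e., we must add exactly those vertices of $L$ that have a neighbour outside $S'$. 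So $\A_c$ returns \YES{} iff $|S'| + |N_L(V(G')\setminus S')| \le k$, which takes polynomial time.

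\textbf{\cref{cond:poly_delay}.} Given a good trace $Y$, the vertex covers $S$ of $G$ with $S \cap V(G') = Y$ are exactly the sets $Y \cup F \cup Z$, where $F = N_L(V(G') \setminus Y)$ is forced and $Z$ is any subset of $L \setminus F$ with $|Z| \le k - |Y| - |F|$. The algorithm $\A_e$ lists all such subsets $Z$ (for instance by increasing size, or by a simple binary-branching flashlight search), with polynomial delay. The budget is nonnegative because $Y$ is good, so at least one set is output.

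\cref{thm:framework_kernel} then yields a \pdkernel{} with at most $3k$ vertices.

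The step that needs the most care is checking that $L$ really is an independent set, including when $L$ mixes isolated vertices with crown vertices. It also matters that removing $L$ does not change the budget $k$, so the forced set $F$ is determined exactly and good traces are recognized correctly.
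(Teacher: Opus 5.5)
Your proposal is correct and follows essentially the same route as the paper. Like the paper, you take the core $(V(G'),\text{identity})$, obtained from the rule-composition discussion, and your choosing algorithm adds the forced vertices $L\cap N_G(V(G')\setminus Y')$, rejecting when the total exceeds $k$, while your enumeration algorithm lists all subsets of the remaining vertices of $L$ within the leftover budget; your explicit check that $L$ stays independent when it mixes isolated vertices and unmatched crown vertices is a detail the paper only asserts, and you handle it correctly.
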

\begin{proof}
Let us prove how the three conditions of \cref{sec:framework} can be satisfied.
Given an instance $(G,k)$ of \pname{Vertex Cover}, let $(G',k)$ be the output of $\A_1$. 
Notice first that, as explained at the end of \cref{sec:framework}, $\A_1$ 
verifies \cref{cond:good_decision_kernel} with core $(V(G'),\id)$.
Towards verifying \cref{cond:decidable_trace}, 
let $Y' \in \Sol(G', k)$, $L = V(G) \setminus V(G')$, and observe that, by our reduction rules, $L$ is an independent set.
First, we construct $Y = Y' \cup (L \cap N_G(V(G') \setminus Y'))$, that is, we add to $Y'$ the vertices of $L$ that are incident to edges not covered by $Y'$, i.e., they are mandatory in any solution of $G$ that intersects the core at $Y'$ (cf. \cref{fig:vcvc_extension}); note that this implies that $Y$ is a vertex cover of $G$.
As such, if $|Y| > k$, we have that $Y'$ is not a good trace.
As there is exactly one solution of $(G', k')$ with each good trace and we can decide if a trace is good or not in polynomial time, \cref{cond:decidable_trace} is satisfied.
Finally, \cref{cond:poly_delay} is also immediate. Indeed, given a good trace $Y' \subseteq V(G')$, we define $Y$ as above, and take $I = L \setminus Y$.
As $Y$ is already a vertex cover, it suffices to enumerate all subsets of $I$ of size at most $k-|Y|$.
\end{proof}

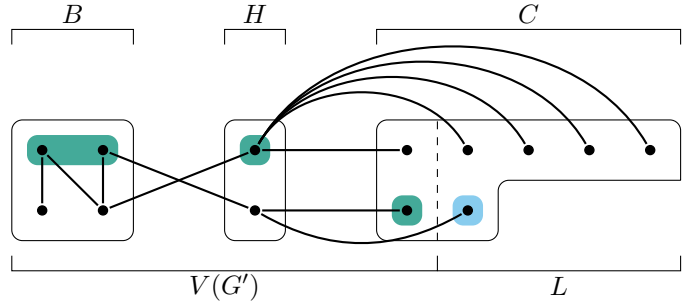
\begin{figure}[t]
    \centering
    \begin{tikzpicture}[scale=0.8]
      \GraphInit[unit=3,vstyle=Normal]
      \SetVertexNormal[Shape=circle, FillColor=black, MinSize=2pt]
      \tikzset{VertexStyle/.append style = {inner sep = \inners, outer sep = \outers}}
      \SetVertexNoLabel
      
      \begin{scope}
        \draw[rounded corners] (0, 0) rectangle (2, 2);
        
          \foreach \j in {1,...,2} {
                \pgfmathsetmacro{\x}{\j-0.5}
            \foreach \i in {1,...,2} {
                \pgfmathsetmacro{\y}{\i-0.5}
                \Vertex[x=\x, y=\y]{b\i\j}
            }
          }
      \end{scope}
      
      \begin{scope}[xshift=3.5cm]
        \draw[rounded corners] (0, 0) rectangle (1, 2);
        
          \foreach \j in {1,...,1} {
                \pgfmathsetmacro{\x}{\j-0.5}
            \foreach \i in {1,...,2} {
                \pgfmathsetmacro{\y}{\i-0.5}
                \Vertex[x=\x, y=\y]{h\i\j}
            }
          }
      \end{scope}

      \begin{scope}[xshift=6cm]
        \draw[rounded corners] (5, 1) -- (5, 2) -- (0, 2) -- (0, 0) -- (2, 0) -- (2, 1) -- (5, 1);
        \draw[dashed] (1, 0) -- (1,2);
        
          \foreach \j in {1,...,1} {
                \pgfmathsetmacro{\x}{\j-0.5}
            \foreach \i in {1,...,2} {
                \pgfmathsetmacro{\y}{\i-0.5}
                \Vertex[x=\x, y=\y]{c\i\j}
            }
          }
          \foreach \j in {2,...,5} {
            \pgfmathsetmacro{\x}{\j-0.5}
            \Vertex[x=\x, y=1.5]{l2\j}
            \Edge[style={bend left=60}](h21)(l2\j)
          }
        \Vertex[x=1.5, y=0.5]{l12}
      \end{scope}
      \Edge[style={bend right=30}](h11)(l12)
      \Edges(h11,c11)
      \Edges(h21,c21)
      \Edges(b11,b21,b12,b22,h11)
      \Edges(h21,b12)
      
      \newcommand{\solcolor}{goodteal}
      \begin{scope}[on background layer]
          \fill[\solcolor, rounded corners] ($(b21)-(0.25, 0.25)$) rectangle ($(b22)+(0.25, 0.25)$);
          \fill[\solcolor, rounded corners] ($(h21)-(0.25, 0.25)$) rectangle ($(h21)+(0.25, 0.25)$);
          \fill[\solcolor, rounded corners] ($(c11)-(0.25, 0.25)$) rectangle ($(c11)+(0.25, 0.25)$);
          \fill[goodcyan, rounded corners] ($(l12)-(0.25, 0.25)$) rectangle ($(l12)+(0.25, 0.25)$);
          
      \end{scope}
      \draw (0, -0.25) -- (0, -0.5) -- (7, -0.5) -- (7, -0.25);
      \node at (3.5, -0.77) {$V(G')$};
      
      \draw (0, 3.25) -- (0, 3.5) -- (2, 3.5) -- (2, 3.25);
      \node at (1, 3.75) {$B$};
      
      \draw (3.5, 3.25) -- (3.5, 3.5) -- (4.5, 3.5) -- (4.5, 3.25);
      \node at (4, 3.75) {$H$};
      
      \draw (7, -0.5) -- (11, -0.5) -- (11, -0.25);
      \node at (9, -0.75) {$L$};
      
      \draw (6, 3.25) -- (6, 3.5) -- (11, 3.5) -- (11, 3.25);
      \node at (8.5, 3.75) {$C$};
    \end{tikzpicture}
    \caption{Illustration of the lifting algorithm of \cref{thm:vc_vc_kernel}, with $k=5$. The teal-shaded vertices correspond to $Y' \in \Sol(G', k)$, and the cyan-shaded vertex to the mandatory vertex that must be added to $Y'$ to obtain $Y \in \Sol(G, k)$. If we take $Y'$ as the unshaded vertices of $G'$, then we must add the top four vertices of $L$ to $Y'$ to obtain $Y$, but then $|Y| > k$, and so $Y \notin \Sol(G, k)$.\label{fig:vcvc_extension}}
\end{figure}We are now ready to derive our \pdkernel{} using the framework of \cref{sec:framework}.
\section{Structural parameterizations of \pname{Enum Vertex Cover}}
\label{sec:enum_is}

For our structural parameterizations, instead of directly kernelizing \pname{Enum Vertex Cover}, we opt to work on the dual problem (as done in decision kernelization), defined below.

\problenum{Enum Independent Set}{A graph $G$ and an integer $t$.}{All independent sets of $G$ of size at least $t$.}

When parameterized by their respective solution sizes, \pname{Enum Vertex Cover} and \pname{Enum Independent Set} behave very differently. However, note that one can reduce from an instance $(H, k)$ of the former to an instance $(G, t)$ and preserve \emph{all} structural parameters: it suffices to set $G \coloneqq H$ and $t \coloneqq n - k$; as $G = H$ and $X \in \Sol(H, k)$ if and only if $V(G) \setminus X$ is an independent set of size at least $t$.
We prove a simple algorithm that helps in much of our work in \cref{thm:generic_lifting}; it can be seen as an application of the flashlight search method~\cite{strozecki_eatcs}, but we need some additional guarantees on the order of the output.
Let $U$ be a set, $\sigma$ be a total ordering of $U$, and $X,Y$ be two subsets of $U$.
We say that $X$ is \emph{lexicographically smaller with respect to $\sigma$} than $Y$ if one of the following is true: (i) $X \subset Y$ and the vertex  $u \in Y \setminus X$ of minimum  $\sigma(u)$ satisfies $\sigma(u) > \sigma(v)$ for every $v \in X$; or (ii) $\arg\min_{u \in X \triangle Y} \{\sigma(u)\}$ belongs to $X$, where $X \triangle Y$ is the symmetric difference of $X$ and $Y$.

\begin{restatable}{lemma}{thmgenericlifting}
    \label{thm:generic_lifting}
    Let $\mathcal{G}$ be a hereditary graph class such that \pname{Independent Set} can be solved in $f_\mathcal{G}(n)$-time for every $n$-vertex graph $G \in \mathcal{G}$.
    Then, there is a $(f_\mathcal{G}(n) \cdot n)$-delay algorithm that solves an instance $(G, t)$ of \pname{Enum Independent Set} and, if given a total ordering $\sigma$ of $V(G)$, outputs the solutions in lexicographic order with respect to $\sigma$.
\end{restatable}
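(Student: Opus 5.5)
We prove the statement with a flashlight (binary-partition) search over the vertices in the order given by $\sigma$, using the \pname{Independent Set} oracle for $\mathcal{G}$ as the extension test. Write $v_1, \dots, v_n$ for the vertices of $G$ sorted by $\sigma$. A node of the search tree is a pair $(A, i)$, where $A \subseteq \{v_1, \dots, v_{i}\}$ is an independent set and the vertices $v_1,\dots,v_i$ have all been decided: those in $A$ are taken and the rest are excluded. The node is \emph{extendable} if some independent set $S$ of $G$ with $|S| \geq t$ satisfies $S \cap \{v_1,\dots,v_i\} = A$.

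The first step is to decide extendability with a single oracle call. Let $G_{A,i} = G \setminus (\{v_1,\dots,v_i\} \cup N(A))$. Then $(A,i)$ is extendable if and only if $G_{A,i}$ has an independent set of size at least $t - |A|$. Since $\mathcal{G}$ is hereditary, $G_{A,i} \in \mathcal{G}$, so the test costs $f_\mathcal{G}(n)$ plus polynomial overhead.

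Given this test, the search is a DFS from $(\emptyset, 0)$. At a node $(A,i)$ with $i < n$, we first try the child $(A \cup \{v_{i+1}\}, i+1)$ when $v_{i+1} \notin N(A)$, and then the child $(A, i+1)$. We recurse only into extendable children. Each leaf $(A,n)$ with $|A| \geq t$ is output.

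Correctness follows because leaves correspond bijectively to solutions: each solution $S$ determines a unique root-to-leaf path, namely the one whose decisions agree with $S$. For the delay bound, every extendable node has an extendable child, so every explored node lies on a path to an output leaf. Between two consecutive outputs, the DFS backtracks up at most $n$ levels and descends at most $n$ levels, performing $\bigO{1}$ oracle calls per level. This gives delay $\bigO{n \cdot f_\mathcal{G}(n)}$ up to polynomial factors, which we absorb by assuming $f_\mathcal{G}(n) \geq n^{\Omega(1)}$, or by stating the bound as $f_\mathcal{G}(n)\cdot n$ times routine overhead. The precalculation before the first output and the postcalculation after the last are bounded in the same way. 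If the root itself is not extendable, we terminate after one call.

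The remaining step, which I expect to need the most care, is the ordering guarantee: the output must be in increasing lexicographic order with respect to $\sigma$ as defined above. Exploring the ``take $v_{i+1}$'' branch before the ``skip'' branch gives the following. Let $X$ and $Y$ be two solutions, and let $u$ be the $\sigma$-minimum vertex of $X \triangle Y$. Their search paths coincide up to the level of $u$ and diverge there, and the set containing $u$ sits in the branch explored first, so it is output first. This matches clause (ii) of the definition. Clause (i), where $X \subset Y$, is the delicate case. There the minimum vertex $u$ of $Y \setminus X = X \triangle Y$ lies in $Y$, so the traversal outputs $Y$ before $X$, yet clause (i) appears to declare $X$ smaller. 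I would resolve this by checking whether clause (i) is consistent with clause (ii) as written. If the two clauses conflict, I would reverse the branch order, trying ``skip'' first, and recheck both clauses. Whichever branch order makes the definition hold, the proof reduces to this single divergence-point argument.
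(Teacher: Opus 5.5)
Your extendability test, the DFS, and the correspondence between solutions and root-to-leaf paths are fine. They are essentially the paper's flashlight search: the paper branches only on the $\sigma$-first vertex outside $N[M]\cup P$, which just contracts your forced steps. The gap is the ordering step, and your fallback cannot close it.

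You are right that clauses (i) and (ii), read literally, contradict each other on pairs $X \subset Y$ with $Y\setminus X$ entirely after $X$ in $\sigma$. The intended order is the dictionary order on $\sigma$-sorted sequences: (i) decides such prefix pairs and (ii) decides all other pairs. The paper says explicitly that printing at the leaves would break it. With outputs at the leaves, \emph{no} branch order gives this order. Take the edgeless graph on $a <_\sigma b$ with $t=1$:
\begin{itemize}
\item take-first gives $\{a,b\},\{a\},\{b\}$, violating (i);
\item skip-first gives $\{b\},\{a\},\{a,b\}$, violating (ii) on $\{a\},\{b\}$;
\item the required order is $\{a\},\{a,b\},\{b\}$.
\end{itemize}
So ``whichever branch order makes the definition hold'' leaves the ordering claim unproved.

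The missing idea is to change \emph{where} solutions are emitted, not the branch order; this is what the paper does. Keep take-first, but output $A\cup\{v_{i+1}\}$ as soon as you enter a take-child with $|A\cup\{v_{i+1}\}|\ge t$, and output nothing at leaves. (The empty set, if $t\le 0$, is emitted at the root.)
\begin{itemize}
\item Each solution is then emitted exactly once, at the node where its $\sigma$-largest vertex is taken.
\item Every solution having it as a proper prefix is emitted later, inside that node's subtree, which gives (i).
\item For a non-prefix pair with $u=\min(X\triangle Y)\in X$, the set $Y$ must contain a vertex after $u$. Hence both sets are emitted strictly below the node deciding $u$, and your divergence argument gives (ii).
\end{itemize}

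The delay argument needs a small repair, because an extendable node may now have no output strictly below it. This happens exactly when the only extension of its partial assignment is its current set $A$, already emitted at that node or above it. In that case its subtree is a single chain of skip steps. Between two consecutive outputs the search traverses at most two such chains: one below the node of the previous output, and one at the skip-child of that node's parent. Add one ascent and one single-path descent, and the search still makes $\bigO{n}$ oracle calls.
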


\begin{proof}
    We apply the folklore technique to deduce a polynomial-delay algorithm from an algorithm to decide the extension problem, which we detail here for the sake of completeness.
    The corresponding extension problem here is defined as follows.
    
    \probl{Independent Set Extension}{A graph $G$, two subsets $M,P \subseteq V(G)$, and an integer $t$}{Is there an independent set of $G$ that avoids $P$, contains $M$, and has size at least $t$?}
    
    As $\mathcal{G}$ is hereditary, \pname{Independent Set Extension} on $\mathcal{G}$ reduces to the instance $(G \setminus (N_G[M] \cup P), t - |M|)$ of \pname{Independent Set} on $\mathcal{G}$.
    Consequently, as \pname{Independent Set} can be solved on $\mathcal{G}$ in $f_\mathcal{G}(n)$-time, we can efficiently provide an answer to $(G \setminus (N_G[M] \cup P), t - |M|)$.

    To solve \pname{Enum Independent Set}, we use a recursive branching algorithm where each node of the branching tree is of the form $(G,M,P,t)$, with the entries having the same meaning as the input to \pname{Independent Set Extension}.
    If $V(G) \setminus (N[M] \cup P)$ is empty, then the algorithm does nothing on that node, which is then a leaf of the branching tree.
    First, we pick the vertex $v \in V(G) \setminus (N[M] \cup P)$ of minimum $\sigma(v)$ to branch on and solve two instances of \pname{Independent Set Extension}, namely $I_\ell = (G, M \cup \{v\}, P, t)$, and $I_r = (G, M, P \cup \{v\}, t)$, each in $f_\mathcal{G}(n)$-time.
    We know that $I_\ell$ is a \YES-instance if and only if there is \emph{at least one} independent set of $G$ of size at least $t$ that contains $M \cup \{v\}$; note that if $G[M \cup \{v\}]$ is not independent we simply output that $I_\ell$ is a \NOi-instance.
    The analysis for $I_r$ is symmetric.
    Thus, before moving further down the recursion tree, we know that if $I = (G, M, P, t)$ is a \YES-instance of \pname{Independent Set Extension}, then at least one of $I_\ell,I_r$ is a \YES-instance; moreover, their solution sets are disjoint.
    Consequently, if we recursively call our branching algorithm for \pname{Enum Independent Set} on $I_x$ ($x \in \{\ell, r\}$) if and only if $I_x$ is a \YES-instance for \pname{Independent Set Extension}, then we are certain to find at least one appropriate independent set of $G$ when exploring $I_x$ or its descendants.
    To guarantee our ordering, we first recurse on $I_\ell$, the root of the \emph{left-subtree} of $I$, then recurse on $I_r$, the root of the  \emph{right-subtree}.
    Producing the output is not simply a matter of printing at the leaves, as this would break the lexicographic ordering requirement. We instead proceed as follows: if $I_\ell$ is a \YES-instance and $|M \cup \{v\}| \geq t$, then we output $M \cup \{v\}$; that is, we only produce some output when adding vertices to the current partial solution.
    
    As such, to solve \pname{Enum Independent Set}, we first check if $(G,\emptyset, \emptyset, t)$ is a \YES-instance of \pname{Independent Set Extension}; in the positive case, we call the above branching algorithm on $(G,\emptyset, \emptyset, t)$, otherwise we terminate immediately and output nothing.
    Let us first show by induction on $n = |V(G)|$ that the algorithm produces our solutions in lexicographic order according to $\sigma$ if we have a \YES-instance.
    If $n = 0$, then the claim follows immediately.
    Now, let $v$ be the first vertex of $\sigma$, $G' = G \setminus \{v\}$, and $G_v = G \setminus N[v]$.
    If $(G_v, t-1)$ is a \NOi-instance of \pname{Independent Set}, then we are done: $\Sol(G, t) = \Sol(G', t)$ which, by induction, is correctly computed in lexicographic order.
    Otherwise, every solution of $G$ that contains $v$ must be listed before every solution of $G'$, i.e., $\{\{v\} \cup S \mid S \in \Sol(G_v, t-1) \}$ lexicographically precedes every $R \in \Sol(G', t)$.
    By induction, $\Sol(G_v, t-1)$ is correctly computed in lexicographic order; since $(G, \{v\}, \emptyset, t)$ is explored before $(G, \emptyset, \{v\}, t)$ and their descendants correspond to $\Sol(G_v, t-1)$ and $\Sol(G', t)$, respectively, we have that $\Sol(G, t)$ is computed in lexicographic order.    
    
    Now, let us show that this strategy yields an algorithm with the claimed delay; note that the height of the branching tree is at most the size of a maximum independent set of $G$, which could be close to $n$.
    Take two consecutive solutions $S_1, S_2$, produced by the algorithm in this order, with $I_i = (G, S_i, P_i, t)$, $i \in \{1,2\}$, being the corresponding nodes in the branching tree; we use $I_i$ to refer to both the node as well as the \pname{Independent Set} instance.
    We break our analysis in four cases:
    \begin{enumerate}
        \item If $S_1$ is the first solution, then we made at most $2n$ calls to the $f_\mathcal{G}(n)$-time-algorithm that solves \pname{Independent Set} in the path from root to $I_1$.
        \item If $I_1$ is an ancestor of $I_2$, then we can use a similar reasoning as in the preceding case since $S_2$ is the lexicographically smallest solution of $G \setminus (N[M_1] \cup P_1)$.
        \item Otherwise, let $C = (G, M_C, P_C, t)$ be the lowest common ancestor of $I_1$ and $I_2$.
        By design, $I_1$ is in the left-subtree of $C$ and $I_2$ in the right-subtree.
        Note that, as $I_2$ is not a descendant of $I_1$, it holds that no vertex can be added to $S_1$, i.e., it is maximal, and so $V(G) \setminus (N(S_1) \cup P_i) = \emptyset$, so no recursive calls are made from $I_1$.
        Now, for each node in the path from $I_1$ to $C$, we may have spent $f_\mathcal{G}(n)$-time to check if another right-subtree was viable; these $\bigO{n}$ checks, however, will fail until the algorithm returns to $C$.
        At this point the check will succeed, producing node $R = (G, M_R, P_R, t)$, which is an ancestor of, but is not, $I_2$; the latter follows from the fact that the algorithm only produces an output when discovering a left-subtree, and $R'$ is the root of a right-subtree.
        Now, we can analyze as in the first case: $S_2$ is the first lexicographic solution of $G \setminus (N[M_R] \cup P_R)$ and so is produced within the claimed time.
        \item Finally, if $S_2$ is the last solution, then our recursion will unroll until the root, potentially performing $\bigO{n}$ queries to  the $f_\mathcal{G}(n)$-time-algorithm for \pname{Independent Set}, which again only takes $(f_\mathcal{G}(n) \cdot n)$-time. \qedhere
    \end{enumerate}    
\end{proof}
\section{Parameterizing by the feedback vertex number}
\label{sec:isfvs}

In this section, we show how to apply the invariants of \cref{sec:framework} to \pname{Enum Vertex Cover} parameterized by the size of a given feedback vertex set.
We adapt Jansen and Bodlaender's proof~\cite{vc_fvs} that \pname{Independent Set} has a cubic kernel under our desired parameterization, which immediately implies the same result for \pname{Vertex Cover}.
Interestingly, very little work is needed in order to guarantee that their kernel satisfies \cref{cond:good_decision_kernel}, and the only modification we perform is in the identification of trivial instances of the decision problem, i.e., where the forest obtained by removing the feedback vertex set has a large enough independent set.
Throughout this section, we denote by $(G, X, t)$ the input to \pname{Independent Set}, where we want to decide if there exists an independent set of $G$ of size at least $t$, knowing that $F = 
G \setminus X$ is a forest.
Being given $X$ is not a limitation of our method, as we can compute a 2-approximation of a minimum feedback vertex set of $G$ in polynomial time~\cite{bafna_fvs_apx}. The same can be safely assumed for finding a constant-factor approximation of a modulator to graphs of bounded treedepth or bridgedepth, as discussed in the corresponding articles~\cite{bridgedepth,bougeret_is_td}.
In fact, in this section we may assume a stronger condition on the modulator $X$, which we state in \cref{lem:better_fvs}.
This assumption is used to simplify the application of our framework, but was already explored as~\cite[Lemma 1]{vc_fvs}.

\begin{lemma}[{\cite[Lemma 1]{vc_fvs}}]
    \label{lem:better_fvs}
    Let $X$ be a feedback vertex set of $G$, $(C, H, B)$ be a crown decomposition of $G$ computed using the Nemhauser-Trotter algorithm~\cite{nemhauser_crown}, $G' = G[B]$, $\hat{X} = X \cap B$, $M$ be a maximum matching of $F' \coloneqq G' \setminus \hat{X}$, and $I = F' \setminus V(M)$ be an independent set.
    Then $X \cup I$ is a feedback vertex set of $G$ of size at most $2|X|$ such that $G' \setminus X$ is a forest with a perfect matching.
\end{lemma}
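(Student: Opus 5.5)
The plan is a counting argument that combines König's theorem on the forest $F'$ with the half-integrality property of the Nemhauser--Trotter body $B$. I read the conclusion as: $X \cup I$ is a feedback vertex set of $G$ with $|X\cup I| \le 2|X|$, and $G' \setminus (X \cup I) = F' \setminus I$ is a forest with a perfect matching.

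\textbf{Structural part.} First I check that $I$ is indeed independent. If two vertices of $I = V(F') \setminus V(M)$ were adjacent, that edge could be added to $M$, contradicting the maximality of $M$ in $F'$. Next, $G \setminus (X \cup I)$ is an induced subgraph of the forest $G \setminus X$, so it is a forest and $X \cup I$ is a feedback vertex set. Finally, $F' \setminus I$ has vertex set exactly $V(M)$. It is an induced subgraph of the forest $F'$, hence a forest, and $M$ is a perfect matching of it.

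\textbf{Size bound.} It remains to show $|I| \le |X|$. I will show the stronger bound $|I| \le |\hat X|$, using two estimates of $\alpha(G[B])$.
\begin{itemize}
\item \emph{Upper bound.} The defining property of the Nemhauser--Trotter body is that the all-$\tfrac12$ vector is an optimal solution of the vertex cover LP relaxation of $G[B]$. Hence every vertex cover of $G[B]$ has size at least $|B|/2$, and by complementation $\alpha(G[B]) \le |B|/2$.
\item \emph{Lower bound.} $F'$ is a forest, hence bipartite, so by König's theorem its minimum vertex cover has size $|M|$. Therefore $\alpha(F') = |V(F')| - |M| = |M| + |I|$, using $|V(F')| = 2|M| + |I|$. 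Since $F'$ is an induced subgraph of $G[B]$, we get $\alpha(G[B]) \ge |M| + |I|$.
\end{itemize}
Since $|B| = |\hat X| + 2|M| + |I|$, combining the two estimates gives
\[
|M| + |I| \le \tfrac12\bigl(|\hat X| + 2|M| + |I|\bigr).
\]
This simplifies to $|I| \le |\hat X| \le |X|$, and therefore $|X \cup I| \le 2|X|$.

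\textbf{Main obstacle.} The only nontrivial ingredient is the precise Nemhauser--Trotter property used in the upper bound. One must make sure that the body returned by the algorithm is the set of vertices receiving value $\tfrac12$ in some optimal half-integral LP solution. Then restricting that solution to $B$ gives an optimal LP solution of $G[B]$ (otherwise the global solution could be improved), which yields $\alpha(G[B]) \le |B|/2$. Everything else is routine.
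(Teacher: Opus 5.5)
Your proof is correct and is essentially the standard argument behind this result. The paper gives no proof of its own and only invokes \cite[Lemma 1]{vc_fvs}, which rests on the same counting: the Nemhauser--Trotter bound that every vertex cover of $G[B]$ has size at least $|B|/2$ is played against the K\"{o}nig bound that $G[B]$ has a vertex cover of size at most $|\hat{X}| + |M|$ (your two estimates on $\alpha(G[B])$ after complementation), giving $|I| \le |\hat{X}|$. You also read the garbled conclusion the intended way (it is $G' \setminus (\hat{X} \cup I)$ that has a perfect matching), and you correctly single out the one non-routine point, that the all-$\tfrac12$ vector must be LP-optimal on $G[B]$.
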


For the entirety of this section, we assume that the initial feedback vertex set $X$ of our input instance already exhibits the property in the conclusion of \cref{lem:better_fvs}.

\subsection{Verifying \cref{cond:good_decision_kernel}}

Our goal is to prove the following lemma, which effectively gives us a good decision kernel to generalize to the enumeration setting.

\begin{lemma}
    \label{lem:vc_fvs_good_decision_kernel}
    \pname{Independent Set} admits a polynomial kernel $\A_1$ of cubic size parameterized by the size of a given feedback vertex set $X$, with core $(X', \id)$, where $X' \subseteq X$ and $\id$ is the identity injection.
\end{lemma}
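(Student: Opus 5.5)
We would follow the Jansen--Bodlaender kernel for \pname{Independent Set} parameterized by a feedback vertex set, and check rule by rule that it is sound with respect to the core announced at the end of \cref{sec:framework}: for a rule mapping $(G,X,t)$ to $(G',X',t')$ we take $X\cap X'$ with the identity injection. Once every rule is shown sound with this core, \cref{lem:condition_one_composes} composes them and \cref{obs:good_core_subset} lets us shrink each intermediate core to the final $X'$, so that $X'\subseteq X$ and all traces of solutions of $(G,X,t)$ on $X'$ survive.

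First, an initial triviality rule. By \cref{lem:better_fvs}, $F=G\setminus X$ has a perfect matching, so $\alpha(F)=|V(F)|/2$, and every independent set $S$ satisfies $|S|\le |X|+|V(F)|/2$. If $t\le |V(F)|/2$, then $F$ itself has a solution that avoids $X$ entirely. In the decision setting the instance is simply \YES, but for enumeration we must not throw away traces. So we either output an instance that keeps all of $X$ with trivially many vertices, or we observe that in this regime the instance size is already bounded in terms of $|X|$. If $t>|V(F)|/2+|X|$, we output a trivial \NOi-instance, whose empty solution set makes soundness vacuous. Otherwise $|V(F)|$ is comparable to $t$, and we proceed with the reduction rules.

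Next come the Jansen--Bodlaender rules. These bound the number of ``conflict'' tree components via pairs and triples of $X$, and they delete $X$-vertices whose inclusion always costs too much. For each rule we prove soundness by forward mapping: given a solution $S$ of $(G,X,t)$, we construct a solution $S'$ of the reduced instance with $S'\cap X'=S\cap X'$. For rules that delete tree components or tree vertices, the natural choice is to keep $S$ on $X'$ and re-optimize inside the remaining forest. The budget $t'$ decreases by exactly the guaranteed contribution of the removed part, so $|S'|\ge t'$. For rules that delete a vertex $x\in X$ (``$x$ is in no maximum solution'', or a rule that moves $x$ into the solution), the core $X\cap X'$ excludes $x$, so no trace constraint on $x$ remains. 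However, the forward map must still produce a solution with the same trace on the remaining $X$. Here we cannot rely on maximum solutions, only on solutions of size at least $t$.

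The main obstacle is exactly this point. Decision-safeness arguments typically exchange vertices of $X$ for tree vertices, or argue only about optimal solutions, which can change the trace on $X$. For each such rule we will check that any solution $S$ can be modified only inside $F$ and on the deleted vertices while preserving $|S|\ge t'$. Where this fails, for instance with the rule removing $x\in X$ when $\alpha(G\setminus N[x])$ is small, we will either drop $x$ from the core but keep it in $X'$, or argue that $t$ forces $x\notin S$ for all solutions, so that the trace is unaffected. Finally, the cubic size bound is inherited directly from the decision analysis.
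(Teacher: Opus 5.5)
Your skeleton matches the paper's: the Jansen--Bodlaender rules, preceded by a triviality rule, each checked against the core $X\cap X'$ by a forward map, then composed with \cref{lem:condition_one_composes} and \cref{obs:good_core_subset}. You also correctly locate the one non-routine point, namely \cref{rrule:isfvs_chunk_removal}: delete $v\in X$ if $\conf{F}{\{v\}}\ge|X|$, or add the edge $ab$ if $\conf{F}{\{a,b\}}\ge|X|$. But you leave this step as an unresolved either/or, and your first alternative does not work.

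\begin{itemize}
\item Keeping $v$ in the graph is not possible. The Jansen--Bodlaender size analysis needs every chunk to have fewer than $|X|$ conflicts, and this is what bounds the number of trees that survive \cref{rrule:isfvs_tree_removal}.
\item Deleting $v$ and merely removing it from the core does not help either. The decision argument only shows that \emph{some other} solution avoids $v$ (e.g.\ a maximum independent set of $F$). That replacement changes the trace on all of $X\setminus\{v\}$, not just on $v$.
\item In the pair case the rule adds an edge inside $X$, so there is nothing to drop from the core at all.
\end{itemize}

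What closes the gap is the concrete argument behind your second alternative, and it is exactly why the triviality rule is there. Once \cref{rrule:isfvs_easy} has not fired, we know $\alpha(F)<t$. Then for a chunk $C$ with $\conf{F}{C}\ge|X|$, every independent set $S\supseteq C$ satisfies $|S|\le |S\cap X|+\alpha(F\setminus N(C))\le |X|+\alpha(F)-\conf{F}{C}\le\alpha(F)<t$. Hence no solution contains $C$, and deleting $v$ (resp.\ adding $ab$) keeps every solution, with its trace, intact.

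Your triviality step must also commit to an output. When $t\le\alpha(F)$ the instance size is \emph{not} bounded in terms of $|X|$, so the second option you list is false. The paper outputs $(G[X],X,0)$. Its solutions are all independent subsets of $X$, so every good trace survives trivially. The non-good traces this introduces are discarded later by the choosing algorithm, which is exactly the freedom the \pdkernel{} model allows.

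The remaining rules (the crown reduction, tree removal, and the two local rules inside $F$) go through with forward maps as you planned.
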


To obtain $\A_1$, we divide our analysis in two cases: one where the problem is trivial and another where it is not. Recall that, as discussed in \autoref{sec:framework}, for each rule with $(G,X,t)$ as input and $(G',X',t')$ as output, we will prove soundness using $X \cap X'$ paired with the identity function as our a core.
Let us handle the trivial case with the following rule, where $\alpha(G)$ is defined as the size of a maximum independent set of $G$.

\begin{isfvsrule}
    \label{rrule:isfvs_easy}
    If $\alpha(F) \geq t$, then the kernelization algorithm immediately stops and we return the instance $(G[X], X, 0)$.
\end{isfvsrule}

\begin{sproof}{\cref{rrule:isfvs_easy}}
    As $F$ is a forest, one can easily check in linear time using dynamic programming whether $\alpha(F) \geq t$.
    If this is the case, then our kernel is the instance $(G[X], X, 0)$; the choice for $t' = 0$ is just for convenience, and setting $t = 1$ would also work, although with a bit of additional technical details.
    As $(G, X, t)$ and $(G[X], X, 0)$ are \YES-instances, we have our safeness.
 \end{sproof}

\begin{invproof}{\cref{rrule:isfvs_easy}}
Given $S \in \Sol(G, X, t)$, as $S' = S \setminus F$ is such that $S' \in \Sol(G[X], X, 0)$, we obtain that $S'$ is a good trace of the core $X$.
\end{invproof}

We remark that it is not clear how to leverage this capability when designing {\sl strong} \pdkernels{}, as we must guarantee that not only do all good traces survive the compression process, but no non-good trace is allowed to surface.
From now on, we can assume $\alpha(F) < t$.
The remainder of the section boils down to showing that $\A'_1$, \emph{when $\alpha(F) < t$}, satisfies \cref{cond:good_decision_kernel}.
As in~\cite{vc_fvs}, our next step is showing that we can transform the input instance into an equivalent instance $(G', X', t')$ where $F' = G' \setminus X'$ has a perfect matching by making some changes to a given crown decomposition of $G$.
The following rule is a rephrased statement of \cite[Lemma 1]{vc_fvs}; we remark that it is equivalent to the classical crown decomposition reduction rule, where we remove the head and the crown of the graph, keeping only the body.

\begin{isfvsrule}[{\cite[Lemma 1]{vc_fvs}}]
    \label{rrule:perf_matching}
    Let $(C, H, B)$ be a crown decomposition of $G$ computed using the Nemhauser-Trotter algorithm. Return the instance $(G[B], X \cap B, t - |C|)$.
\end{isfvsrule}

\begin{invproof}{\cref{rrule:perf_matching}}
    Let $t'= t - |C|$ and $X' = X \cap B$.
    Observe that, as $G[H \cup C]$ is a crowned graph, every $S \in \Sol(G, X, t)$ satisfies $|S \cap (H \cup C)| \leq |C|$.
    Consequently, $S' \coloneqq S \setminus (H \cup C)$ is a solution of $(G', X', t')$ and, moreover, $S \cap X' = S' \cap X'$.
    As such, for every $Y \subseteq X'$ for which there exists $S^\star \in \Sol(G, X, t)$ with $S^\star \cap X' = Y$ there is some $S'' \in \Sol(G', X', t')$ with $S'' \cap X' = Y$.
\end{invproof}

To simplify our notation, we assume that $(G, X, t)$ is the instance we are working with, and $(G^\star, X^\star, t^\star)$ is the input instance of the kernelization algorithm.

\begin{definition}[{\cite[Definition 2]{vc_fvs}}]
    \label{def:chunks}
    Let $(G, X, t)$ be an instance of \pname{Enum Independent Set}. We say that $Y \subseteq X$ is a \emph{chunk} if $|Y| \leq 2$ and, if $Y = \{a,b\}$, then $ab \notin E(G)$. 
\end{definition}

\begin{definition}[{\cite[Definition 3]{vc_fvs}}]
    \label{def:conflicts}
    The number of \emph{conflicts} of a given $Y \subseteq X$ on a subforest $F' \subseteq F$ is defined as $\conf{F'}{Y} = \alpha(F') - \alpha(F' \setminus N(Y))$.
\end{definition}

The first two reduction rules of \cite{vc_fvs} can be condensed in the following way.
While the statement is very similar, our proof is a strengthened version of theirs, due to \cref{rrule:isfvs_easy}.
This is needed as, in their proof, it is argued that the existence of a solution that has a vertex of $X$ with a large $\Conf$ value implies the existence of another solution that does not use this vertex; this would lead to a replacement strategy and break \cref{cond:good_decision_kernel}.
Concluding that no element of $X$ has a large $\Conf$ value is a key step in the argument bounding the size of the kernel, and so we cannot afford to not perform the removal described in \cref{rrule:isfvs_chunk_removal}.

\begin{isfvsrule}
    \label{rrule:isfvs_chunk_removal}
    If there is a chunk $C$ of $X$ such that $\conf{F}{C} \geq |X|$ then: either remove $C$ from $G$ if $|C| = 1$ or add an edge between the elements of $C$ if $|C| = 2$.
\end{isfvsrule}

\begin{invproof}{\cref{rrule:isfvs_chunk_removal}}
Let $X'=X \setminus \{v\}$ if $C=\{v\}$, or $X'=X$ if $|C|=2$.
We first prove that, if $A \subseteq X$ is independent and has $\conf{F}{A} \geq |X|$, then there is no independent set of $G$ of size at least $t$ that contains $A$.
    By definition, $\alpha(G[A \cup F \setminus N(A)]) = |A| + \alpha(F \setminus N(A))$. Recall that, as \cref{rrule:isfvs_easy} is not applicable, $\alpha(F) < t$; the strict inequality in the last two lines of the following equations follow from this property.
    As such, by our hypothesis that $\conf{F}{A} \geq |X|$, we have:
    \begin{align*}
        \conf{F}{A} &\geq |X|\\
        \alpha(F) - \alpha(F \setminus N(A)) &\geq |X|\\
        t - \alpha(F \setminus N(A)) &> |X| \geq |A|\\
        t &> |A| + \alpha(F \setminus N(A)).
    \end{align*}    
    We can now easily check \autoref{cond:good_decision_kernel}.
    Indeed, given a solution $S$ of $(G, X, t)$, as $S$ does not contain $C$, then $S$ is a solution of $(G,X',t)$ (with the same trace as in $G$). 
\end{invproof}


The next rules are as given in \cite{vc_fvs}, so we only prove their soundness. 

\begin{isfvsrule}
    \label{rrule:isfvs_tree_removal}
    Given $(G,X,t)$ (and $F=G\setminus X)$, if there is a connected component $F'$ of $G[F]$ with $\conf{F'}{Y} = 0$ for every chunk $Y$, then remove $F'$ and set $t \gets t - \alpha(G[F'])$.
\end{isfvsrule}

\begin{invproof}{\cref{rrule:isfvs_tree_removal}}
    Let $(G', X, t')$ be the output of the rule; observe first that $S \in \Sol(G, X, t)$ implies that $S \setminus F' \in \Sol(G', X', t')$.
    Moreover, as $F' \cap X = \emptyset$, no good trace of $X$ is lost upon application of the rule.
\end{invproof}

\begin{definition}
    A pair of vertices $x,y \in V(G) \setminus X$ is \emph{$X$-blockable} if there is some chunk $Y$ of $X$ such that $x,y \in N(Y)$.
\end{definition}

Observe that, if $\{x,y\}$ is \emph{not} $X$-blockable, then $N_X(x) \cap N_X(y) = \emptyset$, otherwise $\{u\} \subseteq N_X(x) \cap N_X(y)$ is a chunk of $X$ with $x,y \in N(u)$.
Moreover, for every $u \in N_X(x)$ and every $v \in N_X(y)$, we have that $uv \in E(G)$, otherwise $Y = \{u,v\}$ is a chunk with $x,y \in N(Y)$.

\begin{isfvsrule}
    \label{rrule:isfvs_red_purple_edges}
    Given $(G,X,t)$ (and $F=G\setminus X)$, 
    if there exists $uv \in E(G[F])$, $u,v$ are not $X$-blockable, and $\deg_F(u), \deg_F(v) \leq 2$, then perform the following operations on $G$ (illustrated in \cref{fig:isfvs_red_purple_edge}): (i) if $u$ has a neighbor $z \in F \setminus \{v\}$, add all edges from $z$ to $N_X(v)$; (ii) if $v$ has a neighbor $w \in F \setminus \{u\}$, add all edges from $w$ to $N_X(u)$; (iii) if $z,w$ exist, then add the edge $tw$; and (iv) delete $u,v$ and set $k \gets k - 1$.
\end{isfvsrule}

\begin{invproof}{\cref{rrule:isfvs_red_purple_edges}}
    Let us show that, if $(G, X, t)$ admits a solution, at least one of them intersects $\{u,v\}$; thus, let $S \in \Sol(G, X, t)$ but $S \cap \{u,v\} = \emptyset$.
    Since $\{u,v\}$ is not $X$-blockable, we have that at most one of $N_X(u), N_X(v)$ intersects $S$.
    \begin{itemize}
        \item If $(N_X(u) \cup N_X(v)) \cap S = \emptyset$, then either $S \cup \{v\}$ or $S \cup \{v\} \setminus \{w\}$ are solutions of size at least $|S|$, depending on whether $w$ exists and is in $S$.
        \item If, however, $N_X(v) \cap S \neq \emptyset$, then either $S \cup \{u\}$ or $S \cup \{u\} \setminus \{z\}$ is a solution of $(G, X, t)$.
        \item The case where $N_X(u) \cap S \neq \emptyset$, is symmetric.
    \end{itemize}
    As we now have that $|\{u,v\} \cap S| = 1$ and no vertex of $X$ was touched upon in the previous analysis, setting $S' = S \setminus \{u,v\}$ guarantees that: (i) we do not exceed the $t' = t-1$ bound, and (ii) the good traces of $X$ are maintained in the output instance.
\end{invproof}
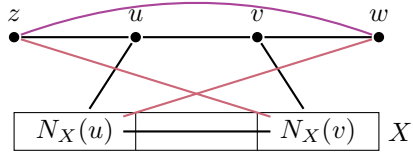
\begin{figure}[!htb]
    \centering
    \begin{tikzpicture}[xscale=0.8, yscale=0.5]
      \GraphInit[unit=3,vstyle=Normal]
      \SetVertexNormal[Shape=circle, FillColor=black, MinSize=2pt]
      \tikzset{VertexStyle/.append style = {inner sep = \inners, outer sep = \outers}}
      \SetVertexLabelOut

      \newcommand{\xcolor}{goodteal}
      \newcommand{\rcolor}{goodcyan}
      \newcommand{\lcolor}{goodyellow}

      \draw (0, -0.5) rectangle (6, 0.5);
      \node (nu) at (1, 0) {$N_X(u)$};
      \draw (2, -0.5) -- (2, 0.5);
      \draw (4, -0.5) -- (4, 0.5);
      \node (nv) at (5, 0) {$N_X(v)$};
      \node at (6.35,0) {$X$};

      \Vertex[x=0, y=2.5, Lpos=90, Math]{z}
      \Vertex[x=2, y=2.5, Lpos=90, Math]{u}
      \Vertex[x=4, y=2.5, Lpos=90, Math]{v}
      \Vertex[x=6, y=2.5, Lpos=90, Math]{w}
      \Edges(z,u,v,w)
      \Edges(v,nv)
      \Edges(u,nu)
      \Edges(nu,nv)
      \SetUpEdge[color=goodred]
      \Edges(z,nv)
      \Edges(w,nu)
      \SetUpEdge[color=goodpurple]
      \Edges[style={bend left}](z,w)
    \end{tikzpicture}
    \caption{Applicable scenario for \cref{rrule:isfvs_red_purple_edges}. The two red edges going from $z,w$ to $X$ and the purple edge $zw$ are the possible edges introduced by the reduction rule. Edges between the two $N(\cdot)$'s indicate that all edges exist between their elements. \label{fig:isfvs_red_purple_edge}}
\end{figure}
\begin{isfvsrule}
    \label{rrule:isfvs_red_edges}
    Given $(G,X,t)$ (and $F=G\setminus X)$, 
    if there are distinct vertices $z,u,v,w \in F$ with $\deg_F(u) = \deg_F(v) = 3$, $N_F(z) = \{u\}$, $N_F(w) = \{v\}$, $uv \in E(G)$, $p \in N_F(u)$, $q \in N_F(v)$, and none of the pairs $\{u,z\}$, $\{v,w\}$, and $\{z,w\}$ are $X$-blockable, do the following, illustrated in \cref{fig:isfvs_red_edges}: (i) add all edges from $p$ to $N_X(z)$; (ii) add all edges from $q$ to $N_X(w)$; and (iii) remove $z,u,v,w$ from $G$ and set $t \gets t-2$.
\end{isfvsrule}

\begin{invproof}{\cref{rrule:isfvs_red_edges}}
    As in \cref{rrule:isfvs_red_purple_edges}, let us just show that if $\Sol(G, X, t) \neq \emptyset$, then there is a solution $S$ containing one of the pairs $\{z,w\}$, $\{z,v\}$, or $\{u,w\}$.
    Suppose that $S$ contains none of these pairs.
    \begin{itemize}
        \item If $S \cap N_X(z) = \emptyset$ and $S \cap N_X(w) = \emptyset$, then $S \cup \{z,w\} \setminus \{u,v\} $ satisfies our constraints, as at most two vertices of $\{u,v,z,w\}$ belong to any solution.
        \item Assuming $S \cap N_X(z) \neq \emptyset$, as $\{z,w\}$ and $\{z,u\}$ are $X$-blockable, we have that $S \cap (N_X(w) \cup N_X(u)) = \emptyset$.
        Thus, $S' = S \cup \{u, w\} \setminus \{p,v\}$ is independent in $G$; now, note that $|S'| \geq |S|$ as at most two vertices of $\{p,u,v,w\}$ can be in an independent set of $G$.
        \item The case where $S \cap N_X(w) \neq \emptyset$ is symmetric to the former.
    \end{itemize}
    It is now a matter of proving that $S' \coloneqq S \setminus \{t,u,v,w\}$ is a solution of the resulting instance. We omit this piece of the prove as it is identical to the original proof, and only remark that $S \cap X = S' \cap X$, so all good traces are preserved by this rule.
\end{invproof}
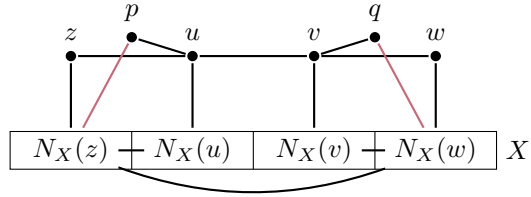
\begin{figure}[!htb]
    \centering
    \begin{tikzpicture}[xscale=0.8, yscale=0.5]
      \GraphInit[unit=3,vstyle=Normal]
      \SetVertexNormal[Shape=circle, FillColor=black, MinSize=2pt]
      \tikzset{VertexStyle/.append style = {inner sep = \inners, outer sep = \outers}}
      \SetVertexLabelOut

      \newcommand{\xcolor}{goodteal}
      \newcommand{\rcolor}{goodcyan}
      \newcommand{\lcolor}{goodyellow}

      \draw (-1, -0.5) rectangle (7, 0.5);
      \node (nt) at (0, 0) {$N_X(z)$};
      \draw (1, -0.5) -- (1, 0.5);
      \node (nu) at (2, 0) {$N_X(u)$};
      \draw (3, -0.5) -- (3, 0.5);
      \node (nv) at (4, 0) {$N_X(v)$};
      \draw (5, -0.5) -- (5, 0.5);
      \node (nw) at (6, 0) {$N_X(w)$};
      \node at (7.35,0) {$X$};

      \Vertex[x=0, y=2.5, Lpos=90, Math, L={z}]{t}
      \Vertex[x=2, y=2.5, Lpos=90, Math]{u}
      \Vertex[x=1, y=3, Lpos=90, Math]{p}
      \Vertex[x=4, y=2.5, Lpos=90, Math]{v}
      \Vertex[x=5, y=3, Lpos=90, Math]{q}
      \Vertex[x=6, y=2.5, Lpos=90, Math]{w}
      \Edges(t,u,v,w)
      \Edges(t,nt)
      \Edges(u,nu)
      \Edges(v,nv)
      \Edges(w,nw)
      \Edges(nt,nu)
      \Edges(nv,nw)
      \Edges[style={bend right}](nt,nw)
      \Edges(u,p)
      \Edges(v,q)
      \SetUpEdge[color=goodred]
      \Edges(p,nt)
      \Edges(q,nw)
    \end{tikzpicture}
    \caption{Applicable scenario for \cref{rrule:isfvs_red_edges}. Red edges from $p,q$ to $X$ are the edges added in $E(G') \setminus E(G)$. Edges between two $N(\cdot)$'s indicate that all edges exist between their elements. \label{fig:isfvs_red_edges}}
\end{figure}
Our compression algorithm consists of applying \cref{rrule:isfvs_easy,rrule:perf_matching} once, followed by exhaustive applications of \cref{rrule:isfvs_chunk_removal}, which is in turn followed by exhaustively applying the other rules.
For the remainder of \cref{sec:isfvs}, let $(H, X', \ell)$ be our output instance.
Note that $X'$ is obtained from $X$ by the vertex removals performed by \cref{rrule:isfvs_chunk_removal}, which belong to no solution of $(G, X, t)$.
As explained at the end of \autoref{sec:framework}, 
$(H, X', \ell)$ is a kernelized instance which is sound when using $X'$ and the identity injection as the core of the compression algorithm, this implies \cref{lem:vc_fvs_good_decision_kernel}.

\subsection{Lexicographic lifting}
\label{sec:isfvs_lex_lifting}

To leverage \cref{thm:framework_kernel}, it remains to show that \cref{cond:decidable_trace,cond:poly_delay} hold for \pname{Enum Independent Set}; we do so in this section.
Recall that $(G^{\star}, X^\star, t^{\star})$ is the instance received by our kernelization algorithm before the application of \cref{rrule:perf_matching}, $(G, X, t)$ is the one after it, and $(H, X', \ell)$ is the compressed instance.
Our overall strategy is to proceed in three steps. First, for each good trace $Y \subseteq X$, we characterize a canonical solution $A_Y \in \Sol(H, X', \ell)$; for our purposes, we define $A_Y$ to be the lexicographically smallest solution of $(H, X', \ell)$ with respect to a given ordering $\sigma$ of $V(H)$ such that $X \cap A_Y = Y$.
Second, given a good trace $Y$, we use \cref{thm:generic_lifting} to generate the set $\S_Y = \{S \in \Sol(G, X, t) \mid S \cap X = Y\}$ with polynomial-delay; we remark that, by being careful, it is possible to perform this step with linear delay as $G \setminus X$ is a forest; intuitively, it is possible to avoid rerunning the $\bigO{|V(G)|}$-time dynamic programming algorithm for \pname{Indpendent Set} in each recursive call of the algorithm of \cref{thm:generic_lifting}, and instead just lookup its partial tables in $\bigO{1}$.
Finally, for each $S \in \S_Y$, we apply a modification of the lifting algorithm of \cite{vc_fvs_strong_pd_kernel} for the \pname{Enum Vertex Cover} problem; in this modification, given an independent set $S$ of $G$ (the body of a crown decomposition, cf. \cref{rrule:perf_matching}) of size at least $t$, we compute the set $\{S^\star \in \Sol(G^\star, X^\star, t^\star) \mid S^\star \cap V(H) = S\}$ with polynomial delay.

\begin{lemma}
    \label{lem:fvs_lift_filter}
    There is a polynomial-time algorithm that, given $(G^{\star}, X^\star, t^{\star})$, $(G, X, t)$, $(H, X', \ell)$, and $Y \subseteq X$: (i) decides if $Y$ is a good trace; and (ii) computes $A_Y \in \Sol(G, X, t)$.
\end{lemma}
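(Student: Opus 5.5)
Given $Y \subseteq X$, the natural test is to decide whether $(G, X, t)$ has an independent set of size at least $t$ whose intersection with $X$ is exactly $Y$. Since $F = G \setminus X$ is a forest, this is a polynomial-time check. First verify that $Y$ is independent in $G$, returning \NOi{} otherwise. Then compute $\alpha(F \setminus N(Y))$ by the standard linear-time dynamic programming on forests, and accept if and only if $|Y| + \alpha(F \setminus N(Y)) \geq t$. This is exactly an instance of \pname{Independent Set Extension} with $M = Y$ and $P = X \setminus Y$. Because the forbidden set removes all of $X \setminus Y$, the remaining graph $G \setminus (N[Y] \cup (X\setminus Y))$ equals $F \setminus N(Y)$, which is a forest.

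Next I relate this test to goodness of the trace. By definition, $Y$ is good exactly when some $S \in \Sol(G,X,t)$ has $S \cap X = Y$. The soundness arguments for \cref{rrule:isfvs_chunk_removal,rrule:isfvs_tree_removal,rrule:isfvs_red_purple_edges,rrule:isfvs_red_edges}, composed via \cref{lem:condition_one_composes} and \cref{obs:good_core_subset}, show that every such $Y$ is contained in $X'$ and survives as the trace of a solution of $(H, X', \ell)$. So the test above decides (i). For traces relative to the original input $(G^\star, X^\star, t^\star)$, I would use the same reasoning through \cref{rrule:perf_matching}: a trace contained in $X$ is good for $G^\star$ if and only if it is good for $(G,X,t)$, since every solution restricts to $B$ by the soundness proof of that rule. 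The converse direction, lifting a solution on $B$ back to $G^\star$, holds because the crown decomposition lets us add $|C|$ crown vertices without touching $X$.

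For (ii), I would build $A_Y$ greedily in lexicographic order, following the flashlight procedure of \cref{thm:generic_lifting}. Fix $M = Y$ and $P = X \setminus Y$. Scan the remaining vertices of $F$ in $\sigma$-order. For each vertex $v$, tentatively add it to $M$ if it is not adjacent to $M$ and the extension check $|M \cup \{v\}| + \alpha(F \setminus (N[M \cup \{v\}] \cup P)) \geq t$ succeeds; otherwise put $v$ into $P$. Each step is a forest DP, so the total time is $\bigO{n^2}$. The result is the lexicographically smallest solution with trace $Y$, by the same induction used in \cref{thm:generic_lifting}. The identical procedure run on $(H, X', \ell)$ produces the canonical solution in the kernel, which is what \cref{cond:decidable_trace} needs: $\A_c$ recomputes it and compares it with the given $S'$.

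The main obstacle is step (i) on the kernel side. $H$ is obtained through edge additions in \cref{rrule:isfvs_red_purple_edges,rrule:isfvs_red_edges}, so a trace $Y \subseteq X'$ may be realizable in $H$ without being good in $G$. That is why the decision has to be made on $(G,X,t)$ and not on $H$. This is legitimate because the lifting algorithm receives the input instance as well as the kernel. The only remaining care point is that $H \setminus X'$ must still be a forest, so that the lexicographic computation in $H$ stays polynomial; I would check that each rule preserves this property.
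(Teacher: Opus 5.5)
Your proposal is correct and follows the paper's proof. Goodness of $Y$ is decided on $(G,X,t)$ by checking whether the forest $F\setminus N(Y)$ has an independent set of size $t-|Y|$, and the canonical solution comes from the leftmost branch of the flashlight search of \cref{thm:generic_lifting}, run on $H\setminus(X\cup N(Y))$ with threshold $\ell-|Y|$.

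There are two minor deviations.
\begin{itemize}
\item You justify extendability to $(G^\star,X^\star,t^\star)$ directly by adding the crown $C$, which is independent and has $N(C)=H$ disjoint from $B$. This is more self-contained than the paper's appeal to the lifting lemmas of Bougeret et al.
\item Your greedy scan continues past the threshold to a maximal set, whereas the paper's $A_Y$ is the first output of the search, which stops once size $\ell$ is reached. So your set is in general a different canonical solution and not ``the lexicographically smallest'' in the prefix-first order that \cref{thm:generic_lifting} produces. This is harmless for \cref{cond:decidable_trace}, since your choosing algorithm recomputes the same set and compares against it. Alternatively, stop the scan once $|M|\ge\ell$ to match the paper.
\end{itemize}
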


\begin{proof}
    Suppose that \cref{rrule:isfvs_easy} was not applied.
    For the first claim, using \cref{thm:generic_lifting}, we can easily check if $(G \setminus (X \cup N_{G}(Y)), t - |Y|)$ has a solution $S_0$; in the affirmative, we may immediately conclude that $Y$ is a good trace, as $S = S_0 \cup Y$ is a solution of size at least $t$ and $S \cap X = Y$, and $S$ can be extended to at least one solution $S^\star \in \Sol(G^{\star}, X^\star, t^{\star})$ by \cite[Lemmas 8, 12, 16]{vc_fvs_strong_pd_kernel}.
    For the second claim, let us apply \cref{thm:generic_lifting} again, but this time to the instance $(H \setminus (X \cup N(Y)), \ell - |Y|)$.
    Let $Z$ be its first output; observe that $A_Y \coloneq Y \cup Z$ is the lexicographically smallest solution of $(H, X', \ell)$ with $A_Y \cap X = Y$, and so it is the canonical solution for the good trace $Y$.

    On the other hand, if \cref{rrule:isfvs_easy} was applied, then observe that $V(H) = X'$ and $G = G^\star$.
    As such, each trace can be verified for goodness as above, and since each trace appears in exactly one solution of $(H, X', t')$ it is trivial to identify a canonical solution.
\end{proof}

We are now ready to define our lifting algorithm.

\begin{lemma}
    \label{lem:fvs_lex_lifiting}
    There is a polynomial-delay algorithm that, given $(G^{\star}, X^\star, t^{\star})$, $(H, X', \ell)$, and a good trace $Y$ of $X' \cap X^\star = X'$, computes $\{S \in \Sol(G^{\star}, X^\star, t^{\star}) \mid S \cap X' = Y\}$.
\end{lemma}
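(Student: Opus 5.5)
The plan is to compose two polynomial-delay enumerators, one for each stage of the compression. The solutions of the original instance with trace $Y$ on $X'$ naturally split according to their restriction to the body $B$ of the Nemhauser--Trotter crown decomposition used in \cref{rrule:perf_matching}, which gives $(G,X,t)$ with $G=G^\star[B]$. Since $X'\subseteq X\subseteq B$, the restriction map $S^\star\mapsto S^\star\cap B$ sends $\{S^\star\in\Sol(G^\star,X^\star,t^\star)\mid S^\star\cap X'=Y\}$ into independent sets $S$ of $G$ with $S\cap X'=Y$. Every such $S$ arising this way has size at least $t=t^\star-|C|$, because at most $|C|$ vertices of a solution lie in $H\cup C$.

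\textbf{Outer enumeration.} I will enumerate the sets $\S'_Y=\{S\in\Sol(G,X,t)\mid S\cap X'=Y\}$ with polynomial delay. This is done with the flashlight search of \cref{thm:generic_lifting} on the forest-plus-modulator structure. Concretely, I will force $Y$ into the solution and $X'\setminus Y$ out of it, i.e.\ work in $G\setminus((X'\setminus Y)\cup N_G(Y))$ with threshold $t-|Y|$. The vertices of $X\setminus X'$ are those removed by \cref{rrule:isfvs_chunk_removal}, which lie in no solution, so they can simply be deleted.

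The remaining graph is not a forest, since vertices of $X\setminus X'$ were deleted but $X$ itself may still have vertices in it. To handle this, I would instead branch over $X$ directly: the extension oracle needed by the flashlight search asks whether some independent set of size at least the threshold exists that contains $M$ and avoids $P$. Once all vertices of $X$ are fixed, this reduces to maximum independent set on a subforest of $F$, solvable in linear time. Before the $X$-vertices are fixed, the oracle would be exponential in $|X|$, so I will order $\sigma$ so that $X'$ comes first and fix it all at once (it is already determined by $Y$). Then every oracle call is a forest call, giving polynomial delay, and the whole enumeration is a direct application of \cref{thm:generic_lifting} to the hereditary class of forests.

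\textbf{Inner enumeration.} For each $S\in\S'_Y$, I will enumerate all $S^\star\in\Sol(G^\star,X^\star,t^\star)$ with $S^\star\cap B=S$. These are exactly $S\cup T$ where $T$ is an independent set of $G^\star[H\cup C]\setminus N(S)$ of size at least $t^\star-|S|$. Since $N(C)=H$ and $S\subseteq B$, only $H$ is affected by $N(S)$. I will invoke the lifting procedure of \cite{vc_fvs_strong_pd_kernel} (their Lemmas 8, 12, 16), which enumerates exactly these extensions with polynomial delay and guarantees at least one extension whenever $|S|\geq t$.

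That guarantee follows because $C$ is independent with $|C|$ vertices, so $T=C$ works. This nonemptiness is crucial: each outer output yields at least one inner output, so the concatenation has polynomial delay. The different $S$ give disjoint families, so there are no repetitions. If \cref{rrule:isfvs_easy} was applied instead, $G=G^\star$ and the trace enumeration above applies directly to $G^\star$ without the inner stage.

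\textbf{Main obstacle.} The main obstacle is making the inner stage precise: checking that the cited lifting of \cite{vc_fvs_strong_pd_kernel} applies to an arbitrary independent set $S$ of the body of size at least $t$, rather than only to kernel solutions, and that its delay is polynomial in $|G^\star|$. If the citation does not fit, I would fall back on \cref{thm:generic_lifting} applied to the bipartite graph $G^\star[H\cup C]\setminus N(S)$. Independent set is polynomial on bipartite graphs, and that class is hereditary, so this gives polynomial delay directly.
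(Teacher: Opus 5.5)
Your argument is essentially the paper's. You run flashlight search via \cref{thm:generic_lifting} on the forest $G\setminus(X\cup N_G(Y))$ with threshold $t-|Y|$ (the detour of ordering $X'$ first is unnecessary: delete all of $X$, including $Y$, and add $Y$ back to each output), then apply the lifting of \cite[Lemmas 8, 12, 16]{vc_fvs_strong_pd_kernel} to each body solution, and your remark that $T=C$ always extends $S$ correctly justifies the delay of the concatenation. Your fallback does not work, however: the head $H$ of a crown decomposition need not be independent, so $G^\star[H\cup C]$ is not bipartite in general (the paper itself notes that $G[H\cup C]$ ``may be complex'' even though $C$ is independent), and this is exactly why the cited lifting cannot be replaced by a bipartite independent-set oracle.
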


\begin{proof}
    First, we apply \cref{thm:generic_lifting} to $(G \setminus (X \cup N_{G}(Y)), t - |Y|)$, which is guaranteed to find a solution as $Y$ is a good trace.
    Now, for each solution $S$ generated this way, we run the polynomial-delay algorithm of Bougeret et al.~\cite[Lemmas 8, 12, 16]{vc_fvs_strong_pd_kernel} to list all solutions of $(G^\star, X^\star, t^\star)$ that extend $S$. This concludes the proof.
\end{proof}

\cref{thm:fvs_lex_kernel} follows immediately from \cref{thm:framework_kernel}, \cref{lem:vc_fvs_good_decision_kernel,lem:fvs_lift_filter,lem:fvs_lex_lifiting}, and \cite[Theorem 2]{vc_fvs}.

\begin{theorem}
    \label{thm:fvs_lex_kernel}
    There is a cubic \pdkernel{} for \pname{Enum Independent Set} when parameterized by the size of a given feedback vertex set.
\end{theorem}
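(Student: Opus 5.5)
The proof is an assembly of \cref{thm:framework_kernel} with the three conditions. The compression algorithm $\A_1$ applies \cref{rrule:isfvs_easy,rrule:perf_matching} once, then \cref{rrule:isfvs_chunk_removal} exhaustively, then \cref{rrule:isfvs_tree_removal,rrule:isfvs_red_purple_edges,rrule:isfvs_red_edges} exhaustively. Its core is $(X',\id)$, where $X' \subseteq X$ is what remains of the modulator.

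\cref{cond:good_decision_kernel} is exactly \cref{lem:vc_fvs_good_decision_kernel}. Each rule is sound with core $X \cap X'$ and the identity injection. By \cref{obs:good_core_subset}, each rule remains sound when the core is shrunk to the final $X'$. \cref{lem:condition_one_composes} then chains the rules into one compression algorithm with core $(X',\id)$.

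\textbf{Size.} If \cref{rrule:isfvs_easy} fires, the output $(G[X],X,0)$ has $|X| = \bigO{k}$ vertices. Otherwise, the size analysis of \cite[Theorem 2]{vc_fvs} applies verbatim and gives $\bigO{k^3}$ vertices. Two points need checking here:
\begin{itemize}
  \item \cref{rrule:isfvs_chunk_removal} only deletes modulator vertices or adds edges inside $X$, which is all that Jansen and Bodlaender's counting argument needs: no chunk has $\Conf \geq |X|$.
  \item By \cref{lem:better_fvs}, $|X| \le 2|X^\star|$, so the bound is cubic in the original parameter.
\end{itemize}

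\textbf{\cref{cond:decidable_trace}.} The choosing algorithm $\A_c$ receives $S' \in \Sol(H,X',\ell)$ and sets $Y = S' \cap X'$. It then uses \cref{lem:fvs_lift_filter} to check two things: that $Y$ is a good trace, and that $S'$ equals the lexicographically smallest solution $A_Y$ of $(H,X',\ell)$ with trace $Y$. The solution $A_Y$ is obtained as $Y$ together with the first output of \cref{thm:generic_lifting} on $(H\setminus(X'\cup N(Y)),\ell-|Y|)$, which runs in polynomial time because that graph is a forest. $\A_c$ answers \YES{} iff both checks pass. Lexicographic minimality makes the canonical solution unique per trace, so items 3 and 4 of \cref{cond:decidable_trace} hold.

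\textbf{\cref{cond:poly_delay}.} This is \cref{lem:fvs_lex_lifiting}. First, flashlight search over the forest $G\setminus X$ enumerates every independent set of the body graph with trace $Y$ and size at least $t-|Y|$ outside $X$. Each such set is then extended to solutions of the original instance $(G^\star,X^\star,t^\star)$ using the polynomial-delay crown-lifting routines of \cite{vc_fvs_strong_pd_kernel}.

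The main obstacle is in this last step. Composing two polynomial-delay procedures gives polynomial delay only if every intermediate output lifts to at least one final solution and distinct intermediate outputs produce disjoint final sets. Disjointness holds because the final solutions are distinguished by their intersection with $B$. Non-emptiness is the nontrivial part: I would argue from \cref{rrule:perf_matching} that the crowned part $G^\star[H\cup C]$ always admits an independent set of size $|C|$ compatible with any choice inside $B$, namely $C$ itself, since $N(C)=H$. This is what \cite[Lemmas 8, 12, 16]{vc_fvs_strong_pd_kernel} provide, so each stage is nonempty and the delay stays polynomial.

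A smaller point to handle: when \cref{rrule:isfvs_easy} fires, the goodness test for a trace $Y$ must be carried out on $G^\star$ via \cref{thm:generic_lifting}, since \cref{rrule:perf_matching} is never applied in that case.
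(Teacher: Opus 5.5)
\noindent Your assembly is the same as the paper's proof: the framework theorem applied to \cref{lem:vc_fvs_good_decision_kernel,lem:fvs_lift_filter,lem:fvs_lex_lifiting}, with the size bound taken from Jansen and Bodlaender. However, you also adopt the paper's ordering, testing \cref{rrule:isfvs_easy} on $(G^\star,X^\star,t^\star)$ \emph{before} applying \cref{rrule:perf_matching}, and this leaves a genuine gap.

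Your $\A_c$ and $\A_e$ work with traces on the post-crown modulator $X$, since they run flashlight search over the forest $G\setminus X$. So they produce $\{S : S\cap X=Y\}$, whereas \cref{cond:poly_delay} asks for $\{S : S\cap X'=Y\}$. These two sets coincide only if every vertex deleted by \cref{rrule:isfvs_chunk_removal} lies in no solution. That rule's soundness proof needs $\alpha(F)<t$ for the instance it is applied to, which is the body instance $(G,X,t)$. This does not follow from $\alpha(F^\star)<t^\star$. The Nemhauser--Trotter crown may contain modulator vertices, and one only gets $\alpha(F^\star)\ge\alpha(F)+|C\setminus X^\star|$, while $t=t^\star-|C|$.

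Here is a concrete failure. Let $G^\star$ be a triangle $vww'$ together with the path $whh'$ and the edge $xh$. Take $X^\star=\{v,x\}$, so $F^\star$ is the path $w'whh'$, which has a perfect matching, and take $t^\star=3$.
\begin{itemize}
\item We have $\alpha(F^\star)=2<t^\star$, so \cref{rrule:isfvs_easy} does not fire.
\item The LP optimum is unique and gives $C=\{x,h'\}$, $H=\{h\}$, $B=\{v,w,w'\}$.
\item \cref{rrule:perf_matching} therefore yields the triangle $vww'$ with $X=\{v\}$ and $t=1$. Here $\alpha(F)=1=t$.
\item Since $\conf{F}{\{v\}}=1\ge|X|$, the vertex $v$ is deleted. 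Then $X'=\emptyset$ and the only trace is $\emptyset$.
\item Your lifting outputs just $\{w,x,h'\}$ and $\{w',x,h'\}$, and misses the solution $\{v,x,h'\}$.
\end{itemize}

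\noindent The repair is to perform the test $\alpha(F)\ge t$ on $(G,X,t)$, i.e., after \cref{rrule:perf_matching}.
\begin{itemize}
\item \emph{Trivial branch.} Output $(G[X],X,0)$ and lift a trace $Y$ exactly as in the other branch. That is, run flashlight search on $F\setminus N(Y)$ with threshold $t-|Y|$, then apply the crown extension, which is nonempty by your argument with $C$.
\item \emph{Non-trivial branch.} Now $\alpha(F)<t$ holds when \cref{rrule:isfvs_chunk_removal} is applied. It also survives the remaining rules, since each of them lowers $\alpha(F)$ by at least the amount subtracted from $t$. Every solution of $(G^\star,X^\star,t^\star)$ restricts to a solution of $(G,X,t)$, so the deleted vertices lie in no solution of the original instance. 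Hence traces on $X$ and on $X'$ agree.
\end{itemize}
With this change, the rest of your argument goes through.
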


We remark that it is possible to avoid the algorithm of Bougeret et al.~\cite{vc_fvs_strong_pd_kernel} in the proof of \cref{lem:fvs_lex_lifiting} (and consequently of \cref{thm:fvs_lex_kernel}), but this entails modifications to \cref{lem:better_fvs} and \cref{rrule:perf_matching} to preserve the hypothesis that $G^\star \setminus X^\star$ has a perfect matching.
In particular, instead of removing the crown and head of the decomposition (as done in \cref{rrule:perf_matching}) we would only remove the \emph{unmatched} crown vertices; i.e., we would apply \cref{vcvcrule:unmatched_crown}.
While such a modification is viable, it increases the bound on the size of $X^\star$ from $2|X|$ to $3|X|$, where $X$ is the given feedback vertex set of the input graph, which would lead to an increase in the constants associated with the overall size of the kernel.



\section{Parameterizing by the vertex-deletion distance to \titlemath{c}-treedepth}
\label{sec:vc_td}
\subsection{Bikernelization for enumeration}
\label{sec:gen_bikernel}

A strategy sometimes employed in the design of decision kernels is that of \emph{bikernelization}.
In it, the condition that the input instance $x$ and output instance $z$ of the compression algorithm have to be of the same problem is relaxed, so now it might be that $x$ is an instance of the decision problem $\Pi^\kappa_\enum$ and $z$ of another problem $\Gamma^\rho_\enum$.
Finally, one can observe that, if $\Pi^\kappa \in \NPH$ and $\Gamma^\rho \in \NP$, then there is a reduction that maps $z$ into $y$, which is an instance of $\Pi^\kappa$, thus (indirectly) obtaining a polynomial kernel if $|z|, \rho(z) \leq \poly(\kappa(x))$.

In this strategy, it is quite common to first reduce from $\Pi^\kappa$ to $\Gamma^\rho$, kernelize the resulting instance, and then reduce from the compressed instance of $\Gamma^\rho$ back to $\Pi^\kappa$.
When attempting to adapt this strategy to the enumeration context, however, it becomes clear that the reductions from $\Pi^\kappa_\enum$ to $\Gamma^\rho_\enum$ and from $\Gamma^\rho_\enum$ to $\Pi^\kappa_\enum$ are severely constrained if we restrict ourselves to strong \pdkernels; in particular, we would not be allowed to use the widespread technique of only describing how to translate a \emph{good} solution of $y$ into a solution of $z$, as a strong \pdkernel{} explicitly forbids this.
Moving to our new kernelization model does seem to overcome some of these barriers, a property we exploit in \cref{sec:istd} to design a \pdkernel{} of polynomial size for \pname{Enum Independent Set} parameterized by the size of a modulator to graphs of treedepth at most $c$.
Intuitively, we want to allow ``mistakes'' to be made when going from one problem to the other but, unlike in the \pdkernel{} definition, we need some constraints on how many ``mistakes'' are made to obtain our polynomial bound on the delay.
Interestingly, these requirements are similar, although stronger, to the properties offered by Creignou et al.'s \emph{$e$-reductions}~\cite{creignou_hard}; in particular $e$-reductions allow for a polynomial amount of overlaps in the set of lifted solutions, but this clashes with our kernels' requirements.
We formally define our reductions as follows.

\begin{definition}[$\lambda$-permissive enumerative polynomial parameter transformation ($\lambda$-\ePPT)]
    Let $\Pi^\kappa_\enum$ and $\Gamma^\rho_\enum$ be  parameterized enumeration problems, and $\lambda: \mathbb{N} \mapsto \mathbb{N} \cup \{0, \infty\}$.
    A pair of algorithms $(\A_1, \A_2)$ is a $\lambda$-$\ePPT$ from $\Pi^\kappa_\enum$ to $\Gamma^\rho_\enum$ if, given an instance $x$ of the former and $\kappa(x)$:
    \begin{enumerate}[i.]
        \item $\A_1$ runs in polynomial time and outputs an equivalent instance $y$ of $\Gamma^\rho_\enum$, i.e., $\Sol(x) \neq \emptyset$ if and only if $\Sol(y) \neq \emptyset$, such that $\rho(y) \leq \poly(\kappa(x))$.
        \item When additionally given $(y,\rho(y))$ and $Y \in \Sol(y)$, $\A_2$ either outputs the empty set, or, in \emph{polynomial time} on its input, outputs $S_Y \subseteq \Sol(x)$.
        Moreover, the sets $S_Y$ generated in this manner form a partition of $\Sol(x)$ and at most $\lambda(|x|)$ elements of $\Sol(y)$ output the empty set.
    \end{enumerate}
    We say that a $\lambda$-$\ePPT$ is an $x$-$\ePPT$ for $x \in \mathbb{N} \cup \{0,\infty\}$ if $\lambda(n) = x$ for every $n \in \mathbb{N}$; here $\infty$ is just a symbol to denote unboundedness. 
    $\lambda$-$\ePPT$ is a $\poly$-$\ePPT$ if $\lambda(n) \in \bigO{n^c}$ for some constant $c$.
\end{definition}

\begin{theorem}
    \label{thm:bikernel_piping}
    Let $\Pi^\kappa_\enum$ and $\Gamma^\rho_\enum$ be parameterized enumeration problems. If $\Gamma^\rho_\enum$ admits a \pdkernel{} $(\A_1, \A_2)$ of polynomial size, there exists a $\poly$-$\ePPT$ $(\P_1, \P_2)$ from $\Pi^\kappa_\enum$ to $\Gamma^\rho_\enum$, and an $\infty$-$\ePPT$ $(\G_1, \G_2)$ from $\Gamma^\rho_\enum$ to $\Pi^\kappa_\enum$, then $\Pi^\kappa_\enum$ admits a \pdkernel{} of polynomial size.
\end{theorem}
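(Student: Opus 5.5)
The plan is to compose the three maps: $\P_1$, then $\A_1$, then $\G_1$ for compression, and the three liftings in reverse order for lifting. Given $x$ with $k=\kappa(x)$, let $y=\P_1(x)$, so $\rho(y)\le\poly(k)$. Then let $z=\A_1(y)$, so $|z|,\rho(z)\le\poly(\rho(y))\le\poly(k)$. Finally let $w=\G_1(z)$. Since $\G_1$ runs in polynomial time, $|w|\le\poly(|z|)\le\poly(k)$, and $\kappa(w)$ is polynomially bounded as well, because $\kappa(w)\le|w|$ up to the encoding convention (or, alternatively, we may assume the parameter is bounded by the instance size). Equivalence of emptiness, $\Sol(x)\neq\emptyset\iff\Sol(w)\neq\emptyset$, follows by chaining the three equivalences. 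The new compression algorithm outputs $w$; its running time is polynomial in $|x|+k$ since every intermediate size is polynomial.

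The lifting algorithm $\Lift$ receives $W\in\Sol(w)$ and proceeds in three stages. First, it runs $\G_2(z,w,W)$ in polynomial time, obtaining a set $T_W\subseteq\Sol(z)$ (possibly empty). Second, for each $Z\in T_W$ in turn, it runs $\A_2(y,z,Z)$, which enumerates $S_Z\subseteq\Sol(y)$ with polynomial delay. Third, for each $Y$ produced by $\A_2$, it computes $\P_2(x,y,Y)$ in polynomial time and outputs every element of the resulting set $U_Y\subseteq\Sol(x)$.

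The partition property comes from composing partitions. The nonempty $T_W$ partition $\Sol(z)$, the nonempty $S_Z$ partition $\Sol(y)$, and the nonempty $U_Y$ partition $\Sol(x)$. Hence each $X\in\Sol(x)$ lies in exactly one $U_Y$, this $Y$ lies in exactly one $S_Z$, and this $Z$ lies in exactly one $T_W$. So $X$ is output exactly once, and the nonempty outputs of $\Lift$ form a partition of $\Sol(x)$.

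The main obstacle is the delay, because empty sets can occur at two levels.
\begin{itemize}
\item Between two consecutive outputs, $\Lift$ may iterate over several $Z\in T_W$ with $S_Z=\emptyset$. Each such $Z$ costs only polynomial time (the delay bound up to $\A_2$'s termination), and $|T_W|\le\poly$ because $\G_2$ runs in polynomial time. This level is therefore harmless, which is why an $\infty$-$\ePPT$ suffices for $\G$.
\item The more delicate level is $\P_2$ returning $\emptyset$ on some $Y$ produced by $\A_2$. A long run of such $Y$ could break polynomial delay. This is exactly where the $\poly$ bound on $\lambda$ is used: at most $\lambda(|x|)=\poly(|x|)$ elements of $\Sol(y)$ are lifted to the empty set over the entire enumeration. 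Any stretch between two outputs therefore contains at most $\poly(|x|)$ useless $Y$'s. Each costs polynomial delay from $\A_2$ plus polynomial time for $\P_2$, so the total delay remains polynomial in $|x|+|w|+|W|$.
\end{itemize}

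I would also check that the input sizes of the intermediate calls, namely $|y|$, $|z|$, $|Z|$ and $|Y|$, are polynomial in $|x|$. This holds because they are produced by polynomial-time or polynomial-delay procedures, so each step in the chain stays polynomially bounded.
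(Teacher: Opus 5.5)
Your proposal is correct and follows essentially the same route as the paper. It compresses via $\G_1\circ\A_1\circ\P_1$, lifts by running $\G_2$ and then $\A_2$ on each resulting solution with $\P_2$ applied online to its outputs, and bounds the delay by two facts: $\G_2$ returns only polynomially many solutions, so empty $\A_2$-outputs are harmless, and at most $\lambda(|x|)=\poly(|x|)$ elements of $\Sol(y)$ are sent to $\emptyset$ by $\P_2$. Two minor points: the bound $\kappa(w)\le\poly(\rho(z))$ follows directly from the $\ePPT$ definition, so no encoding convention is needed; and the lifting algorithm must recompute $y=\P_1(x)$ and $z=\A_1(y)$ in polynomial time, since they are not part of its input.
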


\begin{proof} 
    Let us show how to build a \pdkernel{} $(\X_1, \X_2)$ for $\Pi^\kappa_\enum$ of polynomial size.
    Take $x$ as an instance of $\Pi^\kappa_\enum$.
    We illustrate the many objects involved in this proof in \cref{fig:bikernel_piping}.
    To obtain $\X_1$, simply observe that $(y, \kappa(y)) = \G_1(\A_1(\P_1(x, \kappa(x))))$ satisfies the polynomial time and size bounds: $(p, \rho(p)) = \P_1(x, \kappa(x))$ can be computed in $\poly(|x| + \kappa(x))$-time; $(a, \rho(a)) = \A_1(p, \rho(p))$ has by definition size bounded by $\poly(\rho(p)) \leq \poly(\kappa(x))$ and can be computed in time bounded by $\poly(|p| + \rho(p)) \leq \poly(|x| + \kappa(x))$; and, finally, $(y, \kappa(y)) = \G_1(a, \rho(a))$ is again by definition computable in time bounded by $\poly(|a| + \rho(a)) \leq \poly(|x| + \kappa(x))$, while $|y| \leq \poly(|a|) \leq \poly(\rho(p)) \leq \poly(\kappa(x))$, as desired.
    The equivalences between the instances follow immediately; thus, setting $\X_1(\cdot) = \G_1(\A_1(\P_1(\cdot))$ is sufficient.

    Now, let us show that the lifting algorithms compose with each other to form $\X_2$.
    To this end, let $Y \in \Sol(y)$.
    If $G_Y = \G_2(a,y,\rho(a),\kappa(y),Y)$ is empty, we set $\X_2(x,y,\kappa(x),\kappa(y),Y)$ as the empty set, so suppose that this is not the case.
    Observe that as $G_Y \neq \emptyset$ and, by the definition of $\G_2$, $G_Y$ can be computed in  $\poly(|a| + |y| + \rho(a) + \kappa(y))$-time, so we may run $\G_2$ until it terminates.
    If, for every $Z \in G_Y$, we have that $\A_2(p,a,\rho(p),\rho(a),Z)$ is the empty set, then $\X_2(x,y,\kappa(x),\kappa(y),Y)$ also outputs the empty set.
    Supposing that this is not the case, let $A_Z = \A_2(p,a,\rho(p),\rho(a),Z)$ be non-empty; as $\A_2$ is a polynomial-\emph{delay} algorithm, we will process $A_Z$ online, i.e., once it generates an output, we will run $\P_2$ on it.
    Recall that $A_Z \subseteq \Sol(p)$ and note that it could still be the case that some elements of $A_Z$, or even all of them, fail to lift back to solutions of $x$.
    If the latter is true, then observe that $|A_Z|$ must be bounded by $\poly(|x| + |p| + \kappa(x) + \rho(p))$ as this is the maximum number of solutions of $p$ that may fail to be lifted back to $x$; if this the case, then observe that we have spent at most $\poly(|x| + |y| + \kappa(x) + \kappa(y))$-time to detect that $Y$ lifts to no solution of $x$, and so we can detect that $\X_2(x,y,\kappa(x),\kappa(y),Y)$ should output the empty set in the allowed time.
    If $A_Z$ only has a few liftable solutions by $\P_2$, then observe that, between any two of them, we may fail to lift at most $\poly(|x| + |p| + \kappa(x) + \rho(p))$ times, so the delay from lifting $Z \in \Sol(a)$ to a set of solutions of $p$ is multiplied by a factor bounded by $\poly(|x| + |p| + \kappa(x) + \rho(p))$ when lifting to solutions of $x$, which is allowed by our time bounds.
    Finally, note that we may argue in a very similar manner when considering failures between two solutions of $A_Z$, and so we have that $\X_2(x,y,\kappa(x),\kappa(y),Y)$ either detects that $Y \in \Sol(y)$ should not be lifted in $\poly(|x| + |y| + \kappa(x) + \kappa(y))$-time, or enumerates a non-empty set of solutions $S_Y \subseteq \Sol(x)$ in $\poly(|x| + |y| + \kappa(x) + \kappa(y))$-delay.

    We point out that we have not argued that the $S_Y$'s form a partition of $\Sol(x)$, but this follows immediately from the guarantees given by the lifting algorithms $\A_2,\P_2$, and $\G_2$.
\end{proof}

  \begin{figure}[!htb]
    \centering
    \begin{tikzpicture}[scale=0.8]
      \GraphInit[unit=3,vstyle=Normal]
      \SetVertexNormal[Shape=circle, FillColor=black, MinSize=2pt]
      \tikzset{VertexStyle/.append style = {inner sep = \inners, outer sep = \outers}}
      \SetVertexLabelOut

      \newcommand{\goodcolora}{goodteal}
      \newcommand{\goodcolorx}{goodyellow}
      \newcommand{\goodcolorc}{goodcyan}
      \newcommand{\badcolora}{goodred}
      
      \draw[thick,-latex] (-7, 2.3) -- (7, 2.3);
      \node at (0, 2.55) {Instance compression};
      
      \draw[thick, -latex] (7, -3) -- (-7, -3);
      \node at (0, -3.35) {Solution lifting};
      \begin{scope}[xshift=-6cm]
        \begin{scope}
            \clip ellipse (1cm and 2cm) (0, 0);
            \begin{scope}
                \clip (-1.2, 2) -- (1.2, 2) -- (1.2, 0.25) -- (-1.2, 0.25) -- cycle;
                \fill[\goodcolorx] (-1, -2) rectangle (1, 2);
            \end{scope}
        \end{scope}
        \draw ellipse (1cm and 2cm) (0, 0);
            
        \node (s0x) at (0, 0.5) {};
        \node (s01x) at (0, 1.75) {};
        \node at (0, 1.125) {$S_Y$};
        \node at (1.55, 1.05) {$\P_2$};
        
        \node at (0, -2.5) {$\Sol(x)$};
      \end{scope}

      \begin{scope}[xshift=-2cm]
        \fill[\goodcolora] (-1, -0.75) rectangle (1, 2);
        \Vertex[x=-0.75, y=1, Lpos=90, Math, L={P}]{s}
        \Vertex[x=-0.75, y=0, Lpos=90, Math, NoLabel]{sbad}
        \node (s0) at (0, -0.5) {};
        \node (s01) at (0, 1.75) {};
        
        \node at (0.25, 0.65) {$A_Z$};
        \node at (1.65, 0.85) {$\A_2$};

        \node (m3) at (-1.75, 0) {\xmark};

        \draw (-1, -2) rectangle (1, 2);
        \node at (0, -2.5) {$\Sol(p)$};
      \end{scope}
      
      \begin{scope}[xshift=2cm]
        \fill[\goodcolorc] (-1, -0.5) rectangle (1, 2);
        
        \node (s0a) at (0, -0.25) {};
        \node (s01a) at (0, 1.75) {};
        \node at (0.25, 0.75) {$G_Y$};
        \node at (2, 0.85) {$\G_2$};

        \draw (-1, -2) rectangle (1, 2);
        \Vertex[x=-0.75, y=1, Lpos=90, Math, L={Z}]{s1}
        \Vertex[x=-0.75, y=0, Lpos=90, Math, NoLabel]{s1m}
        \node (m2) at (-1.75, 0) {\xmark};
        \node at (0, -2.5) {$\Sol(a)$};
      \end{scope}

      \begin{scope}[xshift=6cm]
        \begin{scope}
            \clip ellipse (1cm and 2cm) (0, 0);
            \begin{scope}
                \clip (-1.2, 0.5) -- (1.2, 0.5) -- (1.2, -0.5) -- (-1.2, -0.5) -- cycle;
                \fill[\badcolora] (-1, -2) rectangle (1, 2);
            \end{scope}
        \end{scope}
        
        \Vertex[x=0, y=1, Lpos=90, Math, L={Y}]{sp}
        \Vertex[x=0, y=0, NoLabel]{mp} 
        \draw ellipse (1cm and 2cm) (0, 0);
        
        \node (m1) at (-1.75, 0) {\xmark};
        \node at (0, -2.5) {$\Sol(y)$};
      \end{scope}

      \AddVertexColor{\goodcolora}{s1}
      \AddVertexColor{\goodcolorc}{sp}
      \AddVertexColor{\goodcolorx}{s}
      \Edges[style={-latex}](s1, s0)
      \Edges[style={-latex}](s1, s01)
      \Edges[style={-latex}](s, s01x)
      \Edges[style={-latex}](s, s0x)
      \Edges[style={-latex}](sp, s0a)
      \Edges[style={-latex}](sp, s01a)
      \Edges[style={-latex}](s1m, m2)
      
      \AddVertexColor{\badcolora}{mp,s1m,sbad}
      \Edges[style={-latex}](mp, m1)
      \Edges[style={-latex}](sbad, m3)
    \end{tikzpicture}
    \caption{Example of what the lifting procedure in the proof of \cref{thm:bikernel_piping} does to a solution $Y$ of the compressed instance $y$ of $\Pi^\kappa_\enum$.
    Ellipses correspond to solutions to instances of $\Pi^\kappa_\enum$, while rectangles to solutions to instances of $\Gamma^\rho_\enum$.
    The \pdkernel{} is assumed to be defined between the two rectangles, while between a rectangle and its adjacent ellipse we have a $\poly$-$\ePPT$ (left), or an $\infty$-$\ePPT$ (right).
    The red shaded region in the right-most ellipse represents all solutions of $y$ that were rejected for lifting by $\G_2$, while the shaded regions inside the other shapes correspond to sets of solutions. Vertices and shaded regions are colored associatively: red is used to denote unliftable solutions, while a vertex and a region have a same color if the vertex is lifted to that set of solutions.\label{fig:bikernel_piping}}
    \end{figure}

\begin{corollary}
    \label{cor:bikernel_piping}
    Let $\Pi^\kappa_\enum$ and $\Gamma^\rho_\enum$ be parameterized enumeration problems. If $\Gamma^\rho_\enum$ admits a strong \pdkernel{} of polynomial size, there exists a $0$-$\ePPT$ from $\Pi^\kappa_\enum$ to $\Gamma^\rho_\enum$, and a $0$-$\ePPT$ from $\Gamma^\rho_\enum$ to $\Pi^\kappa_\enum$, then $\Pi^\kappa_\enum$ admits a strong \pdkernel{} of polynomial size.
\end{corollary}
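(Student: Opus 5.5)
The plan is to re-run the construction in the proof of \cref{thm:bikernel_piping} with all three ingredients strengthened, and to check that no step of the composed lifting algorithm ever returns the empty set. Let $(\A_1,\A_2)$ be the strong \pdkernel{} for $\Gamma^\rho_\enum$, $(\P_1,\P_2)$ the $0$-$\ePPT$ from $\Pi^\kappa_\enum$ to $\Gamma^\rho_\enum$, and $(\G_1,\G_2)$ the $0$-$\ePPT$ back. Set $\X_1 = \G_1\circ\A_1\circ\P_1$. The size and running-time bounds, and the equivalence $\Sol(x)\neq\emptyset \Leftrightarrow \Sol(y)\neq\emptyset$, are obtained exactly as in that proof, since a $0$-$\ePPT$ is in particular both a $\poly$- and an $\infty$-$\ePPT$. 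This equivalence is anyway redundant for strong kernels, by \cref{rmk:strong_pde_equivalence}.

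For $\X_2$, given $Y\in\Sol(y)$, we first compute $G_Y=\G_2(a,y,\rho(a),\kappa(y),Y)$ in polynomial time. Because $\lambda\equiv 0$, no solution of $y$ is mapped to the empty set, so $G_Y\neq\emptyset$. For each $Z\in G_Y$ in turn, we run $\A_2(p,a,\rho(p),\rho(a),Z)$; by strongness this returns a nonempty set $A_Z\subseteq\Sol(p)$, with polynomial delay. Each $P\in A_Z$ is piped online into $\P_2$, which, being a $0$-$\ePPT$, outputs in polynomial time a nonempty set $S_P\subseteq\Sol(x)$, and we print its elements.

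Every stage yields at least one output, so $\X_2$ never outputs the empty set. The delay is at most the $\G_2$ running time, plus the $\A_2$ delay, plus the $\P_2$ running time, all polynomial in the total input size. We use here that $|a|,|p|$ are polynomial in $|x|+\kappa(x)$ and that all intermediate solutions have polynomial size. Unlike in \cref{thm:bikernel_piping}, no failure counting is needed, because no failed lifts ever occur.

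It remains to check the partition property. The sets $G_Y$ for $Y\in\Sol(y)$ partition $\Sol(a)$, the sets $A_Z$ partition $\Sol(p)$, and the sets $S_P$ partition $\Sol(x)$. The composed sets $\bigcup_{Z\in G_Y}\bigcup_{P\in A_Z}S_P$ are therefore pairwise disjoint, cover $\Sol(x)$, and are nonempty. The only subtle point, which I expect to be the main care needed, is that the ``equivalent instance'' clause of the $\ePPT$ definition must not allow $\Sol(y)=\emptyset$ while $\Sol(x)\neq\emptyset$. This holds by equivalence at every stage, so if $x$ is a \YES-instance, every intermediate instance is too.
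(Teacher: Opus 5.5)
Your proposal is correct and follows the route the paper intends. The paper gives no separate proof of the corollary; it obtains it from the proof of \cref{thm:bikernel_piping} by observing that, with $\lambda\equiv 0$ in both transformations and a strong \pdkernel{} in the middle, no stage ever produces the empty set, so no failure counting is needed and the composed lifts are non-empty and partition $\Sol(x)$. One detail stays implicit, both in your write-up and in the paper: $\X_2$ must recompute $p$ and $a$, together with their parameters, from $x$, which takes polynomial time.
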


We remark that, in \cref{thm:bikernel_piping,cor:bikernel_piping}, it is possible to ask for a transformation from $\Pi^\kappa_\enum$ to $\Gamma^\rho_\enum$ that allows lifting to be performed with polynomial delay, but no such modification seems to be possible in the transformation in the opposite direction.
Our choice of presentation here is due to the similarity to decision kernelization offered by polynomial-time constraints on all transformation algorithms, restricting the super-polynomial steps only to the kernel itself.
We also point out that, typically, reductions used in decision problems heavily depend on well-behaved solutions that can be easily used to prove the equivalence between the input and output instances, and the nearly-unbounded number of solutions that map to the empty set allowed by $\infty$-$\ePPT$ aims at reflecting this behavior.

To conclude our discussion on compositionality, we point out that, unlike in decision kernelization, \pdkernels{} \emph{do not} compose with each other while guaranteeing the individual size bounds; instead a term that relies on the number of solutions of the intermediate instance emerges, which will typically be of \FPT\ size.
Note that this is essentially the same reason why, in \cref{thm:bikernel_piping}, we required the transformation from $\Pi^\kappa_\enum$ to $\Gamma^\rho_\enum$ to be a $\poly$-$\ePPT$: the bound on the number of unliftable solutions guarantees that no term functionally dependent on the number of solutions of the $\Gamma^\rho_\enum$ instance (the intermediate instance), would appear in the kernel size for $\Pi^\kappa_\enum$.
Finally, we remark that \emph{strong} \pdkernels{} do compose, and this phenomenon arises \emph{precisely} because no bad solutions are allowed to exist in any of the instances.

\subsection{Designing the kernel itself}
\label{sec:istd}

The paper by Bougeret and Sau~\cite{bougeret_is_td} uses a bikernelization strategy to show that \pname{Independent Set} parameterized by the size of a modulator to a graph of treedepth at most $c$, for some fixed integer $c$, admits a kernel of polynomial size.
We use the same strategy, but adapted to the context of enumerative kernelization.
Essentially, we will apply the methodology described in \cref{thm:bikernel_piping}.

\subsubsection{\pname{Annotated Independent Set}}
\label{sec:istd_annotated}
Let us define an annotated version of \pname{Enum Independent Set}, adapted from \pname{Annotated Independent Set}, introduced in~\cite{bougeret_is_td}.

\problenum{Enum Annotated Independent Set}{A graph $G = (X \cup R, E)$, and a set $\Hc$ of hyperedges with ground set $X$, and an integer $t$.}{All independent sets of $G$ of size at least $t$ that contain no $h \in \Hc$.}

In the above definition, we are interested in the case where $X$ is a given modulator of size $k$ to $c$-treedepth; this hypothesis is not a limiting one as we may compute a $2^c$-approximation for a minimum-sized $X$ in polynomial time~\cite{td_modulator_approx}, which is enough for our purposes.
Our goal is to obtain a polynomial kernel in $|X| + |\Hc|$.
First, much as in the feedback vertex set parameterization, we distinguish between what should be the easy cases for decision and show that they are the same for enumerative kernelization.

\begin{istdrule}
    \label{rrule:istd_easy}
    If $\alpha(G[R]) \geq t$, then stop and output $(G[X], \Hc, 0)$.
\end{istdrule}

\begin{sproof}{\cref{rrule:istd_easy}}
    Let $S' \in \Sol(G[X], X, \Hc, 0)$, $R' = R \setminus N_G(S')$, and note that $G[R']$ has constant treedepth. As such, by \cref{thm:generic_lifting}, we can enumerate all independent sets $I \subseteq R'$ of size at least $t - |S'|$ with polynomial-delay and, for each such set, $(I \cup S') \in \Sol(G, X, \Hc, t)$.
\end{sproof}

For the remainder of this section, we suppose that \cref{rrule:istd_easy} is not applicable.
Now, let us generalize \cref{def:conflicts,def:chunks} in a straightforward manner.

\begin{definition}
    Given $X' \subset X$ and $R' \subseteq R$, the number of \emph{conflicts} of $X'$ on $R'$ is defined as $\conf{R'}{X'} \coloneq \alpha(R') - \alpha(R' \setminus N(X'))$.
\end{definition}

\begin{definition}
    A \textit{chunk} of $X$ is an independent subset $Y \subseteq X$ of size at most $2^c$ that contains no element of $\Hc$. The set of all chunks of $X$ is denoted by $\X$.
\end{definition}

We note that the more technically challenging result of~\cite{bougeret_is_td}, namely \cite[Lemma 1]{bougeret_is_td}, is what allows us to only worry about chunks that are subsets of $X$ of size at most $2^c$; we restate it below for completeness.

\begin{lemma}{\cite[Lemma 1]{bougeret_is_td}}
    If $X' \subseteq X$ has $\conf{R'}{X'} > 0$, there there is some chunk $Y \subseteq X'$ with $\conf{R'}{Y} > 0$, regardless of the choice of $R' \subseteq R$.
\end{lemma}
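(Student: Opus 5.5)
We prove the statement by induction on the treedepth bound $c$, following a minimal-counterexample approach. Fix $R' \subseteq R$ and $X' \subseteq X$ with $\conf{R'}{X'} > 0$, and choose $Y \subseteq X'$ inclusion-minimal with $\conf{R'}{Y} > 0$; such a $Y$ exists since $X'$ itself qualifies. Because $X'$ may be assumed independent and $\Hc$-free (otherwise we restrict to the relevant traces), it suffices to show that $|Y| \leq 2^c$. Minimality gives $\conf{R'}{Y \setminus \{y\}} = 0$ for every $y \in Y$. Equivalently, for every $y \in Y$ some maximum independent set $I_y$ of $G[R']$ avoids $N(Y \setminus \{y\})$, while no maximum independent set of $G[R']$ avoids $N(Y)$.

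The first step reduces to a connected component. Conflicts are additive over the connected components of $G[R']$: $\conf{R'}{Y} = \sum_D \conf{D}{Y}$. Hence some component $D$ has $\conf{D}{Y} > 0$, and minimality of $Y$ passes to $D$, since a strictly smaller conflicting subset for $D$ would also conflict on $R'$. So we may assume that $G[R']$ is connected with treedepth at most $c$. It therefore has a root $r$ whose removal leaves components of treedepth at most $c-1$.

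The inductive step splits on the root. Let $\alpha = \alpha(R')$. Maximum independent sets of $G[R']$ either contain $r$ or avoid it, and we compare $\alpha(R' \setminus N[r]) + 1$ with $\alpha(R' \setminus \{r\})$.

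In the case where every maximum independent set avoiding $N(Y)$ is forced to use or skip $r$ in a specific way, the conflict of $Y$ is witnessed inside $R' \setminus \{r\}$ or inside $R' \setminus N[r]$, each of which has treedepth at most $c-1$. There we apply induction to obtain a conflicting subset of size at most $2^{c-1}$, once in the branch where $r$ is used and once in the branch where $r$ is not used. Taking the union of the two witnesses gives a subset $Y' \subseteq Y$ of size at most $2 \cdot 2^{c-1} = 2^c$ that conflicts on $R'$.

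Concretely, conflict means that every optimal solution hits $N(Y)$. A set $Y_1$ kills all optimal solutions containing $r$, which are those of $R' \setminus N[r]$, when they are optimal. A set $Y_2$ kills all optimal solutions avoiding $r$. Then $Y_1 \cup Y_2$ kills all optimal solutions of $R'$, and minimality forces $Y = Y_1 \cup Y_2$. If $Y \ni$ a vertex adjacent to $r$, then $Y_1$ can be taken as that single vertex. The base case $c = 1$ (a single vertex or the empty graph) is immediate: $|Y| \leq 1 = 2^0$, and indeed $2^c$ allows slack.

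\textbf{Main obstacle.} The delicate step is the branch bookkeeping. The optimal solutions containing $r$ are optimal for $R'$ only when $\alpha(R' \setminus N[r]) + 1 = \alpha$, and the analogous condition holds for the branch avoiding $r$. The induction must therefore be applied with the statement "every maximum independent set of the subgraph hits $N(Y)$" rather than with raw $\Conf$ values. When a branch is suboptimal it needs no witness at all. When a branch is optimal, the condition "all its optimal sets hit $N(Y)$" is exactly $\conf{\cdot}{Y} > 0$ on that subgraph, so induction applies. This is what I expect to require care. I would phrase the inductive hypothesis directly in terms of hitting all maximum independent sets, in order to avoid off-by-one issues with $\alpha$.
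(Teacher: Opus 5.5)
Your argument is correct and is essentially the standard proof behind the lemma of Bougeret and Sau that the paper cites without proof. Both reduce to a connected $G[R']$, split its maximum independent sets according to whether they contain a treedepth root $r$, and take the union of two witnesses of size at most $2^{c-1}$. These come by induction from $R'\setminus\{r\}$ and from $R'\setminus N[r]$ (or from a single neighbour of $r$ in $Y$), and each is needed only when its branch is optimal.

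One thing needs fixing. Independence and $\Hc$-freeness of $X'$ is a genuine hypothesis of the lemma, not something you can reduce to by ``restricting to the relevant traces''. The cited lemma assumes it implicitly, and its bridgedepth analogue, Lemma~\ref{lem:isbd_contained_chunk}, states it explicitly. For a counterexample with $c\ge 2$, take $X'=\{a,b\}$ with $ab\in E(G)$, and $R'$ a single edge $uv$ with $N_{R'}(a)=\{u\}$ and $N_{R'}(b)=\{v\}$. Then $\conf{R'}{X'}=1$, but no chunk contained in $X'$ has a positive number of conflicts. You should state this hypothesis rather than claim it holds without loss of generality.
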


With these guarantees in hand and as \cref{rrule:istd_easy} is not applicable, we are also able to design the following rule.

\begin{istdrule}
    \label{rrule:istd_bad_chunk}
    If there exists $Y \in \X$ with $\conf{R}{Y} > |X|$, then add $Y$ to $\Hc$.
\end{istdrule}

\begin{sproof}{\cref{rrule:istd_bad_chunk}}
    Define $\Hc' = \Hc \cup \{Y\}$ and let us show that $Y$ is not contained in any $S \in \Sol(G, X, \Hc, t)$; towards a contradiction, suppose $Y \subseteq S$ and observe that

    \begin{equation*}
        t \leq |S| = |S \cap X| + |S \cap R| < |X| + (\alpha(G[R]) - |X|) = \alpha(G[R]).
    \end{equation*}

    Where the inequality follows from the fact that $|Y| \leq |S \cap X|$, the fact that $S \cap R \subseteq R \setminus N(Y)$, and the hypothesis that $\conf{R}{Y} = \alpha(G[R]) - \alpha(G[R \setminus N(Y)] > |X|$, which in turn implies $|S \cap R| \leq \alpha(G[R \setminus N(Y)]) < \alpha(G[R]) - |X|$. This is, however, a contradiction to the fact that \cref{rrule:istd_easy} is not applicable.
\end{sproof}

We point out that \cref{rrule:istd_bad_chunk} has the \textit{exact same statement} as Rule~2 of \cite{bougeret_is_td}, but we are able to derive a stronger conclusion due to the previous application of \cref{rrule:istd_easy}. This trend will repeat itself across all of our reduction rules. Besides \cref{rrule:istd_easy}, which has no counterpart in~\cite{bougeret_is_td}, all of our rules have the same number as the corresponding rule of~\cite{bougeret_is_td}.

\begin{istdrule}
    \label{rrule:istd_good_component}
    If there is a connected component $R'$ of $G[R]$ with $\conf{R'}{Y} = 0$ for every $Y \in \X$, then remove $R'$ and set $t \gets t - \alpha(G[R'])$.
\end{istdrule}

\begin{sproof}{\cref{rrule:istd_good_component}}
    Let $(G', X', \Hc, t')$ be the output of the rule; observe first that $S \in \Sol(G, X', \Hc, t)$ implies that $|S \setminus R'| \geq t - \alpha(G[R']) = t'$.
    Now, let $S' \in \Sol(G', X', \Hc, t')$; again by \cref{thm:generic_lifting}, we can generate all independent sets of $G[R']$ of size at least $t - |S'| \leq \alpha(G[R'])$ in polynomial-delay; 
    as shown by Bougeret and Sau (\cite[Lemma 4]{bougeret_is_td}), if every chunk is conflict-free with a component $R'$ of $G[R]$, then every subset of $X$ is also conflict-free towards $R'$, so we are guaranteed to be able to find at least one (maximum) independent set $T$ of $G[R']$ so that $S' \cup T$ is independent and of size at least $t$; as $\Hc$ is unchanged and $\Hc \subseteq 2^{X'}$, $S' \cup T$ is in fact a solution to the original instance.
\end{sproof}

The key combinatorial observation is that, if none of the above rules are applicable, then, for some computable function $f$, $G[R]$ has at most $\bigO{|X|^{f(c) + 2}}$ connected components which, for constant $c$, is a polynomial upper bound on $|X|$. As such, our compression algorithm (\cref{alg:istd_compression}) follows the same outline as the one given by Bougeret and Sau~\cite{bougeret_is_td}.


\begin{algorithm}[!htb]
    \caption{Compression algorithm for \pname{Enum Annotated Independent Set} parameterized by vertex-deletion distance to $c$-treedepth.}\label{alg:istd_compression}
    \begin{algorithmic}[1]
        \Require An instance $(G, X, \Hc', t)$ of \pname{Enum Annotated Independent Set} param. by $|X|$, with $X$ being the modulator to $c$-treedepth, and fixed $c$.
        \Ensure An equivalent instance $(G', X', \Hc', t')$ of with $|V(G')| \leq \poly(|X|)$.
        
        \If{$c = 0$}
            \State \Return $(G, X, t)$.
        \EndIf
        
        \State If \cref{rrule:istd_easy} is applicable to $(G,X,t)$, apply it and return $(G[X], X, 0)$. \Comment{Line 4 can safely be applied only in the first call of the algorithm; cf. \cref{thm:istd_kernel}}
        
        \State Apply \cref{rrule:istd_bad_chunk} exhaustively.

        \State Apply \cref{rrule:istd_good_component}  exhaustively.
        
        \State Set $\Gamma \gets \emptyset$.
        \ForEach{connected component $R'$ of $R$}
            \State Compute an optimal treedepth decomposition $T_{R'}$ of $R'$.
            \State $r_{R'} \gets \roott(T_{R'})$.
            \State $\Gamma \gets \Gamma \cup \{r_{R'}\}$.
        \EndFor
        \State $Z \gets \{uv \in E(G) \mid u \in X, v \in \Gamma\}$.
        \State $G' \gets (V(G), E(G) \setminus Z)$.
        \State \Return \cref{alg:istd_compression}$(G', X \cup \Gamma, \Hc \cup Z, t)$.
        \State \Comment{The returned instance has treedepth at most $c-1$.}
    \end{algorithmic}
\end{algorithm}

\begin{theorem}
    \label{thm:istd_annotated}
    For every fixed integer $c \geq 0$, there is a polynomial-sized \pdkernel{} for \pname{Enum Annotated Independent Set} parameterized by the number of hyperedges $|\Hc|$ and size of a modulator $X$ to $c$-treedepth graphs. In particular, the kernel has $\bigO{|X|^{2^{(c+1)(c+2)/2}}}$ vertices and $|\Hc| + \bigO{|X|^{2^{(c+1)(c+2)/2}}}$ hyperedges.
\end{theorem}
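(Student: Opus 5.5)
We prove the statement by induction on $c$, following the recursion of \cref{alg:istd_compression}. The plan is to show that each level of the recursion is sound in the sense of \cref{cond:good_decision_kernel} with core $X$ (identity injection), and then to build the lifting algorithm level by level, from the innermost recursive call outward. For $c=0$ we have $R=\emptyset$, so the instance is its own kernel, with $|V(G)|=|X|$, and lifting is the identity. For the size bound we rely on the combinatorial count of \cite{bougeret_is_td}. Once \cref{rrule:istd_bad_chunk,rrule:istd_good_component} are no longer applicable, every component of $G[R]$ has a chunk $Y$ with $\conf{R'}{Y}>0$, and every chunk has total conflict at most $|X|$. Since there are at most $|X|^{2^c}$ chunks, $G[R]$ has at most $|X|^{2^c+1}$ components. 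Moving the roots $\Gamma$ into the modulator then gives $|X\cup\Gamma|\le |X|+|X|^{2^c+1}$ and adds at most $|X|\cdot|\Gamma|$ hyperedges (the removed edges $Z$ become $2$-hyperedges). Unrolling this recursion over $c$ levels yields the claimed $\bigO{|X|^{2^{(c+1)(c+2)/2}}}$ bound, exactly as in the decision proof.

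Soundness per step goes as follows. \cref{rrule:istd_bad_chunk} only adds a hyperedge that no solution contains, so the solution set is unchanged. \cref{rrule:istd_good_component} keeps the trace on $X$: if $S$ is a solution, then $S\setminus R'$ is a solution of the reduced instance. Moving $\Gamma$ into the modulator while replacing the edges $Z$ by hyperedges leaves the solution set \emph{identical}, since an edge $uv$ and the hyperedge $\{u,v\}$ forbid the same sets. So this step is even a strong step with core $X$. \cref{rrule:istd_easy} is handled as in \cref{sec:isfvs}: it is applied only at the top level, and its output $(G[X],\Hc,0)$ keeps every good trace. By \cref{lem:condition_one_composes} and \cref{obs:good_core_subset}, the whole compression satisfies \cref{cond:good_decision_kernel} with core $X$. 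Note, however, that at deeper levels the core of the recursive call is $X\cup\Gamma$, which contains $X$, so the composition is valid.

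I will not use \cref{thm:framework_kernel} directly. Instead I will give the lifting explicitly. Given a solution $S'$ of the final instance, the choosing step keeps $S'$ only if it is the lexicographically smallest solution (with respect to a fixed order $\sigma$) among those with trace $Y=S'\cap X$, and if $Y$ is a good trace of the input. Both tests need to be efficient. Goodness of $Y$ is decided by asking whether $G[R\setminus N(Y)]$, which has treedepth at most $c$ so \pname{Independent Set} is polynomial on it, has an independent set of size $t-|Y|$, provided $Y$ is independent and contains no hyperedge. Canonicity is checked by running \cref{thm:generic_lifting} on $H\setminus (X\cup N(Y))$. This part needs care, because at deeper levels part of the old $R$ (the roots $\Gamma$) now sits in the modulator, and $H$ minus its modulator again has bounded treedepth. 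For a kept $S'$, the enumeration step runs \cref{thm:generic_lifting} on $(G[R\setminus N(Y)],t-|Y|)$ in the original instance and prepends $Y$; this has polynomial delay because $R$ has treedepth at most $c$. These outputs partition $\Sol$ by trace, as required.

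The main obstacle will be the interaction with \cref{rrule:istd_good_component}, together with making sure that the trace used for lifting is the trace on the \emph{original} $X$ and not on the enlarged modulator. Solutions of the compressed instance can have many different intersections with $\Gamma$ while having the same trace on $X$; the lexicographic canonical choice is what resolves this. What remains to argue is that for every good trace $Y$ there is at least one compressed solution with trace $Y$. This follows from the chain of soundness proofs: each rule maps a solution $S$ to a solution whose trace on $X$ is $S\cap X$. I expect that proof to require tracking that the removed components never intersect $X$ and that $t$ is decreased by exactly $\alpha(G[R'])$.
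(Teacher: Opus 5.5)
Your route differs from the paper's and can be repaired, but as written \cref{cond:decidable_trace} fails. The test ``run \cref{thm:generic_lifting} on $H\setminus(X\cup N(Y))$'' does not return the lexicographically smallest compressed solution with trace $Y$, because the compressed instance is an \emph{annotated} instance whose constraints are not all edges. First, the edges $Z$ between the current modulator and the roots $\Gamma$ are deleted from the graph and turned into $2$-hyperedges. Second, \cref{rrule:istd_bad_chunk} at deeper recursion levels adds chunks of size up to $2^{c-1}$ that contain root vertices. \cref{thm:generic_lifting} only checks graph independence, so its first output can contain a hyperedge. For instance, it can pick a root $\gamma\in N_G(y)$ with $y\in Y$, since $y\gamma$ is no longer an edge of $H$. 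That set is not in $\Sol(H)$, so no $S'$ passes your test, and every input solution with trace $Y$ is lost. Saying that ``$H$ minus its modulator has bounded treedepth'' does not help: the final modulator is all of $V(H)$. Also, $H[V(H)\setminus X]$ is a subgraph of $G[R]$, but once $Y$ is fixed the hyperedges give a hypergraph independent-set problem with edges of size up to $2^{c-1}$. You have no polynomial algorithm for that, and brute force is exponential in the kernel size. A smaller omission: \cref{rrule:istd_bad_chunk} removes no solution only if $\alpha(G[R])<t$ holds in the \emph{current} instance. You must check that this survives \cref{rrule:istd_good_component} and the move of $\Gamma$ into the modulator; the paper verifies this explicitly.

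A repair that keeps your framework is to change where canonicity is computed. Define the canonical solution for $Y$ as the lexicographically smallest set in $\{S\cap V(H) : S\in\Sol(G,X,\Hc,t),\ S\cap X=Y\}$. Find it by flashlight search whose extension queries are answered on the \emph{original} instance, where all hyperedges lie inside $X$ and $G[R]$ has treedepth at most $c$. This only needs the forward fact $S\cap V(H)\in\Sol(H)$, which your per-rule arguments give once the invariant above is verified. The paper avoids canonical choices altogether. It notes that \cref{rrule:istd_bad_chunk} and the root move leave the solution set unchanged. It also uses that a component removed by \cref{rrule:istd_good_component} is conflict-free with every independent subset of the current modulator (\cite[Lemma 4]{bougeret_is_td}). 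Hence every compressed solution $S$ extends, by enumerating independent sets of $R'\setminus N(S)$ of size at least $t_i-|S|$, undoing the rules one at a time. This yields a strong kernel except for \cref{rrule:istd_easy}.
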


\begin{proof}
    Let $I = (G, X, \Hc, t)$ be our input instance.
    The proof on the size bounds of the kernel is the exact same as in~\cite{bougeret_is_td}, so we omit it for brevity.
    Now, let us concern ourselves with the enumeration questions by first showing that if \cref{rrule:istd_easy} is not applicable at the very first step, then it is not applicable at any point.
    To this end, let $I_1 = (G_1, X_1, \Hc_1, t_1)$ with $R_1 = V(G_1) \setminus X_1$ be an instance and $I_2$, with components defined analogously, be obtained after a single application of one of our reduction rules or the transposition of the roots $\Gamma_1$ to the modulator to $I_1$. If $I_2$ was obtained by an application of \cref{rrule:istd_bad_chunk} to $I_1$, then $t_2 = t_1$, and $G_1[R_1 \setminus X_1] = G_2[R_2 \setminus X_2]$, so $\alpha(G_2[R_2 \setminus X_2]) < t_2$; if $I_2$ was obtained by moving $\Gamma_1$ to $X_1$, then $\alpha(G_2[R_2]) \leq \alpha(G_1[R_1]) < t_1 = t_2$; finally, if $I_2$ is obtained after the application of \cref{rrule:istd_good_component}, with the removal of $R' \subseteq R_1$, then we have that $\alpha(G_2[R_2]) = \alpha(G_1[R_1]) - \alpha(G_1[R']) = t_1 - \alpha(G_1[R']) = t_2$.
    As such, if \cref{rrule:istd_easy} is applicable, then its proof of safeness guarantees that, for every solution $S' \in \Sol(G[X], X, \Hc, 0)$, we can either decide if $S'$ is inextensible to a solution $S$ of $(G, X, \Hc, t)$ in polynomial time, or we can generate all solutions $S \in \Sol(G, X, \Hc, t)$ that have $S \cap X = S'$ in polynomial-delay.

    Now, suppose that \cref{rrule:istd_easy} is not applicable and let $I_{i-1} = (G_{i-1}, X_{i-1}, \Hc_{i-1}, t_{i-1})$ be the instance obtained during the execution of the compression algorithm by an application of \cref{rrule:istd_good_component} to instance $I_i = (G_i, X_i, \Hc_i, t_i)$ and $R'$ the removed component; for convenience, we shorthand $\Sol(G_i, X_i, \Hc_i, t_i)$ by $\Sol(I_i)$.
    Our lifting algorithm is as follows: let $S_{i-1} \in \Sol(I_{i-1})$. By the proof of safeness of \cref{rrule:istd_good_component}, we are guaranteed to find at least one feasible $T_i \subseteq R'$ such that $S_{i-1} \cup T_i$ is independent in $G_i$, contains no hyperedge of $\Hc_i$, and is of size at least $t_i$ by enumerating all independent sets of $R' \setminus N(S_{i-1})$ of size at least $t_i - |S_{i-1}|$.
    By the safeness proof of \cref{rrule:istd_bad_chunk}, a large enough set is independent in the output if and only if it is independent in the input, so no additional operations need to be performed.
    
    Let us show that this lifting algorithm adheres to our requirements. Let $\mathcal{I} = \angled{I_1, \dots, I_r}$ be the sequence of instances produced by our compression algorithm, with $I_r = I$ and $I_1$ being the final output of the compression algorithm; we opt for this ordering of the instances as this is the order upon which our lifting algorithm is applied.
    We prove by induction that every solution of $I_i$ is generated by our algorithm and no two solutions of $I_{i-1}$ generate the same solution of $I_i$.
    By definition, $\Sol(I_1)$ is correctly generated as no lifting steps are performed.
    Now, take $I_i$ with $i > 1$.
    If $I_{i-1}$ is obtained by an application of \cref{rrule:istd_bad_chunk} to $I_i$ or by moving a set of roots $\Gamma$ from $R_{i} = V(G_{i}) \setminus X_{i}$ to $X_{i}$, we are done as $\Sol(I_{i-1}) = \Sol(I_i)$ and the former is correctly computed by the induction hypothesis.
    So suppose that $I_{i-1}$ is obtained by an application of \cref{rrule:istd_good_component} to $I_i$, let $S_i \in \Sol(I_i)$, $R' = R_{i} \setminus R_{i-1}$ be the removed component, and $T = S_i \cap R'$; by the proof of safeness, $S_{i-1} = S_i \setminus T$ is in $\Sol(I_{i-1})$ as $S_{i-1}$ is independent and $|S_{i-1}| = |S_{i}| - |T| \geq t_i - \alpha(G_i[R_i]) = t_{i-1}$.
    Moreover, $|T| + |S_{i-1}| \geq t_{i-1}$, implying that $|T| \geq t_{i-1} - |S_{i-1}|$ and, since $T$ is independent, at some point $T$ will be considered in the algorithm of \cref{thm:generic_lifting} when $S_{i-1}$ is the candidate for extension, which will happen as $\Sol(I_{i-1})$ is correctly computed by induction.
    To complete the proof, note that no two solutions $A, B \in \Sol(I_{i-1})$ generate the same solution $S$ of $I_i$, as we only consider vertices in $V(G_{i}) \setminus V(G_{i-1})$ to add to a solution of $I_{i-1}$, so $A \neq B$ implies that their extensions are also different.
\end{proof}

We note that, aside from \cref{rrule:istd_easy}, our rules actually describe a strong kernel for \pname{Enum Annotated Independent Set}, and so we have the following corollary.

\begin{corollary}
    \label{cor:istd_annotated_strong}
    For every fixed integer $c \geq 0$, there is a strong polynomial-sized \pdkernel{} for \pname{Enum Annotated Independent Set} parameterized by the number of hyperedges $|\Hc|$ and size of a modulator $X$ to $c$-treedepth graphs if $\alpha(G \setminus X)$ is smaller than the target desired independent set size. In particular, the kernel has $\bigO{|X|^{2^{(c+1)(c+2)/2}}}$ vertices and $|\Hc| + \bigO{|X|^{2^{(c+1)(c+2)/2}}}$ hyperedges.
\end{corollary}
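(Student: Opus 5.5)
The plan is to reuse the compression algorithm and lifting procedure from the proof of \cref{thm:istd_annotated} unchanged, under the standing hypothesis that $\alpha(G \setminus X) < t$. The argument splits into two parts: \cref{rrule:istd_easy} never fires, and no lifting step ever returns the empty set. The size bounds carry over verbatim from \cref{thm:istd_annotated}, and hence from~\cite{bougeret_is_td}.

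\textbf{Step 1: \cref{rrule:istd_easy} never fires.} Since $\alpha(G[R]) < t$ on the input, Line~4 of \cref{alg:istd_compression} is not executed in the first call. The invariant established in the proof of \cref{thm:istd_annotated} shows that $\alpha(G_i[R_i]) < t_i$ is preserved by
\begin{itemize}
\item \cref{rrule:istd_bad_chunk},
\item \cref{rrule:istd_good_component}, and
\item the transfer of the roots $\Gamma$ into the modulator.
\end{itemize}
So the rule is not triggered in any recursive call either. The only rule that could introduce ``bad'' solutions in the compressed instance is therefore absent.

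\textbf{Step 2: every intermediate solution lifts to a non-empty set.} Take consecutive instances $I_{i-1}, I_i$ as in that proof.
\begin{itemize}
\item \emph{Solution-preserving steps.} For \cref{rrule:istd_bad_chunk} and for moving roots into the modulator, we need $\Sol(I_{i-1}) = \Sol(I_i)$, so that the lift is the identity and is never empty.
\begin{itemize}
\item For root transfer this holds because the removed edges $Z$ between $X$ and $\Gamma$ are re-encoded as hyperedges of size two.
\item For \cref{rrule:istd_bad_chunk} we need more than its safeness proof gives. That proof shows that no solution of $I_i$ contains $Y$, i.e.\ $\Sol(I_i) \subseteq \Sol(I_{i-1})$. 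The converse inclusion $\Sol(I_{i-1}) \subseteq \Sol(I_i)$ is immediate, because adding a hyperedge only forbids sets.
\end{itemize}
\item \emph{\cref{rrule:istd_good_component}.} Given $S_{i-1} \in \Sol(I_{i-1})$, the safeness proof (via \cite[Lemma 4]{bougeret_is_td}) gives a maximum independent set $T$ of $G_i[R']$ with $S_{i-1} \cup T$ a solution of $I_i$. The reason is that $S_{i-1} \cap X$ has no conflict on $R'$. Hence the set of extensions enumerated via \cref{thm:generic_lifting} is non-empty, and that enumeration runs with polynomial delay since $G[R']$ has bounded treedepth.
\end{itemize}

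\textbf{Step 3: composing the lifts.} The overall lifting algorithm composes the per-step lifts along $\angled{I_1,\dots,I_r}$. A composition of procedures that each output a non-empty set with polynomial delay again outputs a non-empty set with polynomial delay. This is the reason strong kernels compose, as remarked after \cref{cor:bikernel_piping}. Correctness, i.e.\ that the sets form a partition of $\Sol(I_r)$, is inherited from the induction in the proof of \cref{thm:istd_annotated}. Hence every $Y \in \Sol(I_1)$ yields a non-empty set, and the kernel is strong.

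The main obstacle is Step~2 for \cref{rrule:istd_good_component}. There we must make sure that \emph{every} solution $S_{i-1}$ of the reduced instance, not just a well-behaved one, admits an extension. This rests on the conflict-freeness of every subset of $X$ towards $R'$, together with the size threshold $t_{i-1} = t_i - \alpha(G_i[R'])$, so that adding a maximum $T$ of size $\alpha(G_i[R'])$ reaches $t_i$.
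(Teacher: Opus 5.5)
Your proposal is correct and takes the same route as the paper. The paper only remarks that, once \cref{rrule:istd_easy} is excluded by the hypothesis $\alpha(G \setminus X) < t$, the remaining rules and per-step lifts from the proof of \cref{thm:istd_annotated} already form a strong kernel; you supply the details it leaves implicit, namely equality of solution sets for \cref{rrule:istd_bad_chunk} and the root transfer, and non-empty extensions for \cref{rrule:istd_good_component} via conflict-freeness of every independent, hyperedge-free subset of $X$ towards $R'$.
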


\subsubsection{Applying bikernelization to \pname{Enum Independent Set}}
\label{sec:istd_bikernel}

Our goal in this section is to leverage \cref{thm:bikernel_piping} to obtain our \pdkernel{} for \pname{Enum Independent Set}.

\begin{lemma}
    \label{lem:istd_eis_to_eais}
    There is a $0$-$\ePPT$ from \pname{Enum Independent Set} to \pname{Enum Annotated Independent Set}.
\end{lemma}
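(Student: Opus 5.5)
The plan is to use the trivial embedding, mirroring the decision-version reduction of~\cite{bougeret_is_td}. Given an instance $(G, X, t)$ of \pname{Enum Independent Set}, where $X$ is the given modulator to $c$-treedepth and the parameter is $|X|$, the algorithm $\A_1$ outputs $(G, X, \emptyset, t)$. This is an instance of \pname{Enum Annotated Independent Set} with $R = V(G) \setminus X$ and no hyperedges. It is computable in linear time, and its parameter is $|X| + |\Hc| = |X|$, which is polynomial (indeed linear) in $\kappa(G, X, t) = |X|$. The treedepth hypothesis on $G[R]$ is inherited verbatim, so the output is a valid instance with the same modulator.

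The key step is to show that the two instances have exactly the same solution set. A set $S \subseteq V(G)$ belongs to $\Sol(G, X, \emptyset, t)$ if and only if $S$ is independent in $G$, $|S| \geq t$, and $S$ contains no $h \in \Hc$. Since $\Hc = \emptyset$, the last condition is vacuous, so $\Sol(G, X, \emptyset, t) = \Sol(G, X, t)$. In particular, one instance is a \YES-instance if and only if the other is, which gives the equivalence required in the first item of the definition of a $\lambda$-\ePPT.

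The lifting algorithm $\A_2$, given $Y \in \Sol(G, X, \emptyset, t)$, outputs the singleton $\{Y\}$. This runs in polynomial (linear) time. The sets $\{Y\}$ are pairwise disjoint and their union is $\Sol(G, X, t)$, so they partition it. Moreover, no solution of the annotated instance is mapped to the empty set, so the number of failures is $0 \leq \lambda(n) = 0$. Hence $(\A_1, \A_2)$ is a $0$-\ePPT.

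I expect essentially no obstacle here. The only point that deserves care is the formal check that the annotated problem accepts $\Hc = \emptyset$ and that its parameterization (by $|X|$ together with $|\Hc|$) is indeed polynomially bounded by the original parameter. Both follow immediately from the problem definition.
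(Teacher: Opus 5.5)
Your proof is correct. It differs from the paper's only in which annotated instance is output.

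\emph{The paper's construction.} It does not output $(G,X,\emptyset,t)$. Instead it deletes every edge $uv$ with $u,v\in X$ from the graph and adds $\{u,v\}$ to $\Hc$. The modulator $X$ thus becomes independent in the output graph, and $|\Hc|=|E(G[X])|\le\binom{|X|}{2}$. In both versions the solution sets coincide and the lifting is the identity, so both are $0$-$\ePPT$s.

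\emph{What each choice buys.} Yours is slightly simpler and keeps the parameter at $|X|$. The paper's puts the instance in the normal form of Bougeret and Sau, where every constraint inside the modulator is carried by $\Hc$. This normal form matters downstream:
\begin{itemize}
\item It is preserved by \cref{alg:istd_compression}, which turns the edges between $X$ and the roots $\Gamma$ into hyperedges.
\item It is silently relied on by the reverse reduction of \cref{lem:istd_eais_to_eis}. That construction builds gadgets only for the vertices of $X$ and for the hyperedges, and never encodes edges of $G[X]$. If such edges were present, the reverse reduction could even produce a \YES-instance from a \NOi-instance.
\end{itemize}

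As a proof of this lemma on its own, your argument is complete. If you plug it into \cref{thm:istd_kernel}, you must convert the edges of $G[X]$ into size-two hyperedges somewhere downstream, for example at the start of the reverse reduction. That step is harmless, since it preserves the solution set and increases $|\Hc|$ only polynomially.
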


\begin{proof}
    Let $(G, X, t)$ be an \pname{Enum Independent Set} instance.
    We obtain the instance $(G', X, \Hc, t)$ of \pname{Annotated Independent Set} as follows: $G'$ is a copy of $G$ , but each edge $uv$ with $u,v \in X$ is replaced with the hyperedge $\{u,v\}$, which in turn is added to $\Hc$; we opt for this approach as it is the one used in \cite{bougeret_is_td}.
    The bijection between the solutions of the two instances is immediate: $S \in \Sol(G, X, t)$ if and only if $S \in \Sol(G', X, \Hc, t)$.
\end{proof}

We are now tasked with designing the reverse reduction, with the challenging part being, as expected, how to encode the hyperedges of size greater than two; we currently do not know how to design a $0$-$\ePPT$ in this direction or if it is even possible.

\begin{lemma}
    \label{lem:istd_eais_to_eis}
    There is an $\infty$-$\ePPT$ from \pname{Enum Annotated Independent Set} to \pname{Enum Independent Set}.
\end{lemma}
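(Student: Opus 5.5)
We follow the decision reduction of \cite{bougeret_is_td} and exploit the fact that an $\infty$-$\ePPT$ may reject arbitrarily many solutions of the produced instance. Let $(G, X, \Hc, t)$ be an instance of \pname{Enum Annotated Independent Set} with $R = V(G)\setminus X$. The idea is to encode every hyperedge $h \in \Hc$ by a gadget that penalizes selecting all of $h$, while keeping the vertices of $G$ (in particular those of $X$) unchanged, so that lifting is just a restriction to $V(G)$.

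\textbf{Construction.} For each $h = \{x_1,\dots,x_p\} \in \Hc$ add a clique $K_h$ on new vertices $\{w^h_1,\dots,w^h_p\}$, and join $w^h_i$ to $x_i$. Set $t' = t + |\Hc|$. If some $x_i \in h$ is outside the independent set, then $w^h_i$ can be chosen. If instead $h$ is fully contained in the set, no vertex of $K_h$ can be chosen, and the $+1$ contributed by that gadget is lost. The new instance $G'$ has $|V(G)| + \sum_h |h|$ vertices, and the modulator is $X \cup \bigcup_h V(K_h)$, so the parameter stays polynomial in $|X| + |\Hc|$. Cliques are attached only to $X$, so $G' \setminus X'$ equals $G\setminus X$ and has treedepth at most $c$. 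The construction runs in polynomial time.

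\textbf{Equivalence.} Take $S \in \Sol(G,X,\Hc,t)$. Every $h$ misses some $x_i \in S$, so we can add one $w^h_i$ per gadget and obtain an independent set of size $|S| + |\Hc| \ge t'$. Conversely, let $S'$ be a solution of $G'$. Hyperedges that are fully contained cost at least one unit each relative to $|\Hc|$, so a naive restriction might still violate $\Hc$. To repair this, we perform an exchange: for every fully contained $h$, remove one vertex of $h$ from $S'$ and add the corresponding $w^h_i$ instead. Doing this for each violated hyperedge does not decrease the size, and an $h$ that is already broken stays broken, so we obtain a solution with no contained hyperedge. Restricting it to $V(G)$ then gives a set of size at least $t' - |\Hc| = t$, which proves $\Sol \neq \emptyset$ equivalence.

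\textbf{Lifting.} Given $S' \in \Sol(G',X',t')$, let $S = S' \cap V(G)$. Output $\{S\}$ if $S$ contains no hyperedge, $|S| \ge t$, and $S'$ is the \emph{canonical} extension of $S$, namely the one taking in each $K_h$ the vertex $w^h_i$ of smallest index $i$ with $x_i \notin S$. Otherwise output $\emptyset$. All checks run in polynomial time. Each $S \in \Sol(G,X,\Hc,t)$ has exactly one canonical extension, and that extension is a solution of $G'$ by the equivalence argument. Hence the nonempty outputs are distinct singletons covering $\Sol(G,X,\Hc,t)$, which forms a partition, and the number of empty outputs is unbounded, as an $\infty$-$\ePPT$ permits.

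The main obstacle is correctness of the gadget: we must ensure that $w^h_i$ is adjacent only to $x_i$, and that vertices of $X$ joined by hyperedges but not by edges remain free. We also have to check the parameter bound carefully, since hyperedges of size up to $2^c$ add $\bigO{2^c|\Hc|}$ vertices to the modulator; this is polynomial for fixed $c$, and $c$ is unchanged.
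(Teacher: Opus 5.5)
Your reduction is correct, and it is a different and simpler one than the paper's.

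\textbf{The paper's route.} The paper transplants the decision gadget of Bougeret and Sau. Each $v_i\in X$ becomes a path $a_iz_ib_i$, with $v_i$ selected iff both $a_i,b_i$ are. Each hyperedge $H$ becomes a complete $|H|$-partite graph $W_H$ with parts of size $|X|$, part $W_H^i$ being joined to $a_i$ and $b_i$, and $t'=|X|+t+|X|\cdot|\mathcal{H}|$. Lifting rejects non-maximal solutions and solutions using a non-minimum-index part of some $W_H$; otherwise it reads off $\{v_i : a_i,b_i\in S'\}$. That argument only covers the outputs of the annotated kernel: it assumes $V(G)=X$, and it does not copy the edges of $G$, which is harmless only because those outputs are edgeless. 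It also defers to the decision paper the direction that a solution of the produced instance yields a solution of the annotated one.

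\textbf{Your route.} In your gadget, $K_h$ can contribute exactly one vertex precisely when $h\not\subseteq S$, and you set $t'=t+|\mathcal{H}|$. This keeps $G$ as an induced subgraph of $G'$, so $G'\setminus X'=G\setminus X$ and the transformation works for arbitrary instances. Your equivalence proof is self-contained: $|S'\cap V(G)|\ge t+|\{h\in\mathcal{H} : h\subseteq S'\}|$, so deleting one vertex from each violated hyperedge already suffices. Lifting is a restriction plus a canonicity check. The output has $|V(G)|+\sum_h|h|$ vertices instead of $3|X|+|X|\sum_H|H|$. The paper's choice buys only that it reuses the existing decision reduction verbatim.

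\textbf{Two small remarks.} First, your parameter bound does not need hyperedges of size at most $2^c$, since $|h|\le|X|$ already gives $|X'|\le|X|(1+|\mathcal{H}|)$. Second, your repair step needs $h\neq\emptyset$. An instance with $\emptyset\in\mathcal{H}$ has no solutions, so it should simply be mapped to a trivial no-instance.
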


\begin{proof}    
    We adapt the reduction of~\cite{bougeret_is_td} from \pname{Annotated Independent Set} to \pname{Independent Set}, to an $\infty$-$\ePPT$ from \pname{Enum Annotated Independent Set} to \pname{Enum Independent Set}.
    We start with the instance $I_1 = (G, X, \Hc, t)$ of the former and output the instance $O = (G', X', t')$ of the latter; moreover, we have that $V(G) = X$, as these are the instances produced by our rules.
    First, for each $v_i \in X$, we add to $G'$ a path on three vertices $\angled{a_i, z_i, b_i}$; intuitively, we want to equate having $v_i$ in a solution to $I_1$ with having both $a_i$ and $b_i$ in the solution to $O$.
    Now, for each hyperedge $H \in \Hc$, we add a complete $|H|$-partite graph $W_H$, with part $W_H^i$ corresponding to $v_i \in H$ and having $|X|$ vertices.
    We equate the index of $v_i$ with the superscript of $W_H^i$; for example, if $H = \{v_1, v_4, v_7\}$, then $W_H$ has three parts of size $|X|$: $W_H^1$, $W_H^4$, and $W_H^7$.
    To complete the construction of $G'$, for each $H \in \Hc$ and each $v_i \in H$, we add all edges from $W_H^i$ to both $a_i$ and $b_i$.
    Finally, we set $t' = |X| + t + |X| \cdot |\Hc|$, completing the construction of $O$.

    Instead of arguing that the instances are equivalent, we now show: (i)  a mapping $\sigma: \Sol(I_1) \mapsto \Sol(O)$, and (ii) how to compute $\sigma^{-1}(S')$ in polynomial time for each $S' \in \Sol(O)$, or decide that it is undefined.
    For the first point, given $S \in \Sol(I_1)$, we begin by adding $\{a_i, b_i\}$ to $\sigma(S)$ if $v_i \in S$, otherwise we add $z_i$; now, for each $H \in \Hc$ note that there is some $v_i \in h \setminus S$; among all of these, take $v_i$ with the smallest index and add $W_H^i$ to $\sigma(S)$.
    We claim that $\sigma(S)$ is independent: no edge in a vertex gadget has both endpoints in $\sigma(S)$, for each $H \in \Hc$ there is at most one $i$ for which $W_H^i \subseteq \sigma(S)$ and, if $W_H^i \subseteq \sigma(S)$, then $z_i \in \sigma(S)$.
    Moreover, $\sigma(S)$ is maximal: each multipartite $W_H$ has one of its parts entirely in $S'$ and each vertex gadget has either both endpoints of the path or the central path vertex in $S'$.
    Towards point (ii), let $S' \in \Sol(O)$; if $S'$ is not maximal, we know that it is not the image of any $S \in \Sol(I_1)$.
    Technically, it must be proved that from $S'$ we can derive another solution $S''$ from which we can read some solution of $I_1$; we omit this for brevity as it has already been done in~\cite{bougeret_is_td}.
    Now, suppose $S'$ is maximal, i.e., at least one vertex of each vertex gadget has been picked and each hyperedge gadget $W_H$ has exactly one of its parts contained in $S'$; to see this latter point, note that if $W_H \cap S' = \emptyset$, then $|S'| \leq |X| \cdot (|\Hc| - 1) + 2|X| < t'$.
    Let $W_H^\ell$ be the part of $W_H$ contained in $S'$; if $S' \setminus W_H^\ell$ is \emph{not} maximal in $G' \setminus W_H^\ell$, then there is some part $W_H^j$ that could replace $W_H^\ell$ in $S'$; if $\ell > j$, then we have that $S'$ is not in the image of $\sigma$, and we reject it.
    If this check fails for every $W_H$, either because $S' \setminus W_H^\ell$ is maximal in $G' \setminus W_H^\ell$ or because $\ell < j$, then we have that $S'$ is in fact in the image of $\sigma$; specifically, for each $i \in [|X|]$, we add $v_i$ to $S = \sigma^{-1}(S')$ if $a_i, b_i \in S'$.
    To see that $|S| \geq t$, note that $|S' \cap \bigcup_{h \in \Hc} W_H| = |X| \cdot |\Hc|$ so at least $|X| + t$ vertices of the vertex gadgets must be picked, and as there are only $|X|$ such gadgets, it follows that at least $t$ of them must have two vertices in them, which is only possible by picking the paths' endpoints.
\end{proof}

\begin{theorem}
    \label{thm:istd_kernel}
    For every fixed integer $c \geq 0$, there is a polynomial-sized \pdkernel{} for \pname{Enum Independent Set} parameterized by the size of a modulator $X$ to $c$-treedepth graphs. In particular, the kernel has $\bigO{|X|^{2^{(c+1)(c+2)/2 + 1}}}$ vertices.
\end{theorem}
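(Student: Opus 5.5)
The plan is to instantiate \cref{thm:bikernel_piping} with $\Pi^\kappa_\enum$ being \pname{Enum Independent Set} parameterized by $|X|$ and $\Gamma^\rho_\enum$ being \pname{Enum Annotated Independent Set} parameterized by $|X| + |\Hc|$. The transformation from $\Pi^\kappa_\enum$ to $\Gamma^\rho_\enum$ is the $0$-$\ePPT$ of \cref{lem:istd_eis_to_eais}. A $0$-$\ePPT$ is in particular a $\poly$-$\ePPT$. Its parameter is polynomial: the modulator is unchanged, and $|\Hc| \le |X|^2$ because hyperedges come only from edges inside $X$. The \pdkernel{} for $\Gamma^\rho_\enum$ is given by \cref{thm:istd_annotated}. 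The reverse direction is the $\infty$-$\ePPT$ of \cref{lem:istd_eais_to_eis}. \cref{thm:bikernel_piping} then yields a polynomial-size \pdkernel{} for \pname{Enum Independent Set}, and it remains to check the details and the size bound.

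First, the compression side. Given $(G,X,t)$, apply \cref{lem:istd_eis_to_eais} to obtain $(G',X,\Hc,t)$ with $|\Hc|\le |X|^2$. Then run \cref{alg:istd_compression}, applying \cref{rrule:istd_easy} only in the first call, as its comment requires. \cref{thm:istd_annotated} gives an instance with $N=\bigO{|X|^{2^{(c+1)(c+2)/2}}}$ vertices and $|\Hc'| \le |X|^2 + N$ hyperedges.

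One subtlety is that \cref{lem:istd_eis_to_eais} is stated for instances whose vertex set is entirely the modulator. The output of \cref{alg:istd_compression} is of this form, because after $c$ rounds of moving roots into the modulator the recursion bottoms out at $c=0$, where $R=\emptyset$. I would make this explicit, so that $X'$ equals the whole kernelized vertex set and $|X'| \le N$.

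The reverse reduction creates $3|X'|$ path vertices plus $|X'|$ vertices per hyperedge. This gives $\bigO{N + N\cdot(N+|X|^2)} = \bigO{N^2} = \bigO{|X|^{2^{(c+1)(c+2)/2+1}}}$ vertices, which is the claimed bound. The modulator of the produced instance can simply be taken as its whole vertex set, so the parameter is polynomial as well.

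Second, the lifting side. This follows directly from the proof of \cref{thm:bikernel_piping}: the solutions rejected by the $\infty$-$\ePPT$ are detected in polynomial time. The $\poly$-$\ePPT$ direction introduces no failures at all, since it is a $0$-$\ePPT$ with bijective lifting. The only failures inside the middle layer are those of the kernel of \cref{thm:istd_annotated} itself, which arise only from \cref{rrule:istd_easy}, and these are handled by the \pdkernel{} definition.

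The step I expect to be the main obstacle is verifying the hypotheses of \cref{thm:bikernel_piping} precisely. In particular, I need that the kernel of \cref{thm:istd_annotated} has polynomial size measured in the parameter $|X|+|\Hc|$, and that the \pdkernel{} output when \cref{rrule:istd_easy} fires is still a valid input for \cref{lem:istd_eais_to_eis}. In that case the output is $(G[X],\Hc,0)$, whose vertex set is again $X$, so it fits the lemma's assumption and only the size accounting above needs repeating.
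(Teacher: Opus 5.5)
Your proposal is correct and follows the paper's argument exactly: the paper proves the theorem in one line by noting that \cref{lem:istd_eis_to_eais}, \cref{thm:istd_annotated} and \cref{lem:istd_eais_to_eis} satisfy the hypotheses of \cref{thm:bikernel_piping}. Your size accounting and your check that the kernel's outputs have the whole vertex set as modulator make explicit what the paper leaves implicit. There are two harmless slips. First, the assumption $V(G)=X$ belongs to \cref{lem:istd_eais_to_eis}, not \cref{lem:istd_eis_to_eais}. Second, each gadget $W_H$ has $|H|\cdot|X'|\le 2^c|X'|$ vertices rather than $|X'|$; this only changes constants, since every hyperedge has size at most $\max\{2,2^c\}$.
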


\begin{proof}
    The proof follows immediately from the fact that \cref{lem:istd_eis_to_eais,thm:istd_annotated,lem:istd_eais_to_eis} together satisfy the hypotheses of \cref{thm:bikernel_piping}.
\end{proof}


\section{Vertex-deletion distance to \titlemath{c}-bridgedepth - proving \cref{thm:dichotomy} }
\label{sec:isbd}

Bridgedepth was introduced by Bougeret, Jansen, and Sau~\cite{bridgedepth} in order to characterize, for which minor-closed families $\mathcal{F}$, \pname{Vertex Cover} admits a polynomial kernel when parameterized by vertex-deletion distance to $\mathcal{F}$, under standard complexity-theoretic hypotheses.
We begin this section by formally defining this parameter.

\begin{definition}[Bridgedepth]
    \label{def:bridgedepth}
    Let $G$ be a graph and $\cb{G}$ be the graph obtained from $G$ by contracting all bridges.
    We say that an induced connected subgraph $T$ of $G$ is a \emph{tree-of-bridges} if every edge of $T$ is a bridge of $G$.
    The \emph{bridgedepth} of $G$, denoted by $\bd{G}$, is recursively defined as follows:
    \begin{itemize}
        \item If $G$ is the graph with no vertices, then $\bd{G} = 0$.
        \item If $G$ has $\ell > 1$ connected components $\{G_1, \dots, G_\ell\}$, then $\bd{G} = \max\{\bd{G_i}\}$.
        \item If $G$ is connected, then $\bd{G} = 1 + \min_{v \in V(\cb{G})}\{\bd{\cb{G} \setminus \{v\}}\}$.
    \end{itemize}
    
\end{definition}

\begin{lemma}[\cite{bridgedepth}, Proposition 3.2]
    \label{lem:bd_properties}
    For any graph $G$, it holds that:
    \begin{enumerate}[i.]
        \item $\bd{G} = 1$ if and only if $G$ is a forest with at least one vertex;
        \item $\bd{\cb{G}} = \bd{G}$; 
        \item Bridgedepth is \emph{minor-closed}: for every minor $G'$ of $G$, we have $\bd{G'} \leq \bd{G}$;
        \item $\bd{G} = 1 + \min_T\{\bd{G \setminus V(T)}\}$, where the minimum runs over all tree-of-bridges of $G$; if $T$ is a tree-of-bridges and $\bd{G} = 1 + \bd{G \setminus V(T)}$, then we say that $T$ is a \emph{lowering tree} of $G$.
        \item For every $X \subseteq V(G)$, $\bd{G} \leq |X| + \bd{G \setminus X}$.
        \item The treewidth of $G$ is upper-bounded by its bridgedepth.
    \end{enumerate} 
\end{lemma}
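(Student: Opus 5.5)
We prove the six items by induction on $|V(G)|+|E(G)|$. The order is (ii), monotonicity under subgraphs, (iv), then the contraction part of (iii), and finally (i), (v) and (vi), which are short consequences. Throughout, $\cb{G}$ is taken with loops and parallel edges removed. One fact about bridges is used repeatedly: if $uv$ is not a bridge, no path of bridges joins $u$ and $v$. Such a path, together with $uv$, would form a cycle containing a bridge, which is impossible. Consequently, contracting all bridges merges exactly the vertex sets of the maximal trees-of-bridges, and never identifies the two endpoints of a non-bridge edge.

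For (ii), we show that $\cb{G}$ is bridgeless, so that $\cb{\cb{G}}=\cb{G}$. Every edge of $\cb{G}$ comes from a non-bridge $e$ of $G$. A cycle of $G$ through $e$ maps to a closed walk of $\cb{G}$ that uses $e$ exactly once, so $e$ is not a bridge of $\cb{G}$. The definition then gives $\bd{\cb{G}}=\bd{G}$ for connected $G$, and the disconnected case follows componentwise.

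Next we prove monotonicity under taking subgraphs, which is the deletion part of (iii). Deleting a vertex or an edge of $G$ can only turn non-bridges into bridges. Hence $\cb{G\setminus e}$ is obtained from $\cb{G}$ by deleting an edge and contracting some further edges, so it is a minor of $\cb{G}$. This is the main obstacle: the induction has to handle deletions and contractions simultaneously, and so it must be run on the statement ``minors do not increase $\bd{\cdot}$'' as a whole. We first compare $\cb{G'}$ with $\cb{G}$ for a single minor operation $G'$ of $G$. We then pick a vertex $v$ attaining the minimum for $G$ and use its image (or the merged vertex) in $\cb{G'}$. The inductive hypothesis applies to the strictly smaller graph $\cb{G}\setminus v$. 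For contractions, contracting a bridge changes nothing after applying $\cb{\cdot}$. Contracting a non-bridge is a contraction inside $\cb{G}$, which is handled in the same way.

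For (iv), we use that each vertex $v$ of $\cb{G}$ is the image of a maximal tree-of-bridges $T_v$, and that $\cb{G\setminus V(T_v)}=\cb{G}\setminus v$. The latter holds because edges inside $G\setminus V(T_v)$ keep their bridge status there, up to bridges whose merging does not affect the contraction. Combined with (ii), this gives $\bd{\cb{G}\setminus v}=\bd{G\setminus V(T_v)}$. A non-maximal tree-of-bridges $T\subseteq T_v$ satisfies $G\setminus V(T)\supseteq G\setminus V(T_v)$, so by monotonicity the minimum over all trees-of-bridges is attained at a maximal one. This proves (iv).

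Item (i) follows from the opening fact. If $G$ is a tree, $\cb{G}$ is a single vertex, so $\bd{G}=1$. Conversely, if $\bd{G}=1$ and $G$ is connected, then $\cb{G}$ is a single vertex, so every edge of $G$ is a bridge. For (v), a single vertex is a tree-of-bridges, so (iv) gives $\bd{G}\le 1+\bd{G\setminus v}$, and we iterate over the vertices of $X$.

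For (vi), take a lowering tree $T$. Every component $C$ of $G\setminus V(T)$ is adjacent to exactly one vertex $t_C$ of $T$: two distinct attachment points would close a cycle through an edge of $T$, contradicting that edges of $T$ are bridges. We now build a tree decomposition of $G$. Start from one of $T$ with bags $\{t,\mathrm{parent}(t)\}$. For each component $C$, take an optimal decomposition of $C$, add $t_C$ to every bag, and attach it to a bag of $T$'s decomposition containing $t_C$. The resulting width is $\max\{1,\tw(G\setminus V(T))+1\}$, and induction finishes the proof, since $\bd{G}\ge 1$ for every nonempty graph.
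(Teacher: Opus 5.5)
Your proposal has a genuine gap in the deletion half of (iii). Your observation that deletions only turn non-bridges into bridges does show that $(G\setminus e)_{\cbd}$ is obtained from $G_{\cbd}$ by deleting $e$ and contracting further edges when $e$ is a non-bridge. A bridge $e$, however, is not even an edge of $G_{\cbd}$. Deleting it splits the vertex $t$ of $G_{\cbd}$ that represents the maximal tree-of-bridges through $e$ into two vertices. Every cycle of $G$ avoids $e$, so no new bridges are created, and $(G\setminus e)_{\cbd}$ has one more vertex than $G_{\cbd}$. It therefore cannot be a minor of it. For example, take two triangles joined by a bridge: $G_{\cbd}$ is the bowtie on five vertices, while $(G\setminus e)_{\cbd}$ is two disjoint triangles.

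Your chain $\bd{G'}\le 1+\bd{G'_{\cbd}\setminus v'}\le 1+\bd{G_{\cbd}\setminus v}$ then fails at the second inequality. In the example, the optimal $v$ is the centre of the bowtie, and $G_{\cbd}\setminus v$ is a matching of bridgedepth $1$. Deleting one image of $v$ from two disjoint triangles leaves a triangle and an edge, of bridgedepth $2$. Vertex deletion has the same problem, because removing a vertex incident to two or more bridges splits its vertex of $G_{\cbd}$ in the same way. So the induction step as you describe it does not cover these cases.

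The repair is to argue componentwise, using that the bridgedepth of a disconnected graph is the maximum over its components. Let $e$ be a bridge of a connected $G$, and let $G_1,G_2$ be the components of $G\setminus e$. The bridges of $G_i$ are exactly the bridges of $G$ lying in $G_i$. Since $e$ is the only edge between $G_1$ and $G_2$, $(G_i)_{\cbd}$ is isomorphic to the subgraph of $G_{\cbd}$ induced by $t$ and the vertices coming from $G_i$. Because $G$ has a bridge, $G_{\cbd}$ is strictly smaller than $G$. Your induction hypothesis applied to $G_{\cbd}$, together with (ii), then gives $\bd{G_i}=\bd{(G_i)_{\cbd}}\le\bd{G_{\cbd}}=\bd{G}$. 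Vertex deletion reduces to deleting the incident edges one at a time and finally an isolated vertex. With this added, your handling of non-bridge deletions and of contractions goes through.

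Two smaller corrections. In (iv), the identity $(G\setminus V(T_v))_{\cbd}=G_{\cbd}\setminus v$ is false as stated: in a triangle, deleting a vertex leaves a bridge. What holds is $(G\setminus V(T_v))_{\cbd}=(G_{\cbd}\setminus v)_{\cbd}$, which with (ii) still yields $\bd{G\setminus V(T_v)}=\bd{G_{\cbd}\setminus v}$. Also, (iv) is only true for connected $G$: two disjoint triangles have bridgedepth $2$, not $3$. So in (v) and (vi) you need the routine reduction to components before invoking (iv) or a lowering tree.
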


\subsection{Verifying \cref{cond:good_decision_kernel}}
Let $(G, X, t)$ be our input instance to \pname{Independent Set}, where $G$ is our graph, $X$ is a modulator to $c$-bridgedepth, and $t$ is the minimum size of a solution.
We will tread a very similar path to the one employed for the feedback vertex set parameterization. That is, we will apply the same rules used in~\cite{bridgedepth}, and are able to establish $(X, \id)$ as the core of our kernel.
For the first part, we begin by designing rules to limit the number of connected components of $R$, then present another set of rules and use it to bound the size of a tree-of-bridges of a component.
With this total bound on the size of a tree-of-bridges $T$, we move $T$ to $X$, and proceed by recursively augmenting $X$ until $V(R) = \emptyset$; while this does increase (polynomially) the size of $X$, we guarantee that the algorithm is progressing by the fact that the resulting graph has smaller bridgedepth.
We remark that, while we will show that \cref{cond:good_decision_kernel} is satisfied even for these augmentations of $X$, our core consists only of the initial modulator (minus the irrelevant vertices identified in \cref{rrule:isbd_many_confs}).
The necessary bounds are guaranteed by the results in~\cite{bridgedepth}, and we will omit the proofs of the corresponding lemmas for brevity.

Let us generalize \cref{def:chunks,def:conflicts} one final time.

\begin{definition}
    \label{def:confs_bd}
    Given $X' \subseteq X$ and $R' \subseteq R$, the \emph{number of conflicts} induced by $X'$ on $R'$ on $G$ is defined as:
    \begin{equation*}
        \confg{R'}{X'}{G} = \alpha(R') - \alpha(G[V(R') \setminus N(X')]).
    \end{equation*}
    When $G$ is clear from the context, we opt for the notation $\conf{R'}{X'}$ instead.
\end{definition}

\begin{definition}
    \label{def:chunks_bd}
    Let $(G,X,t)$ be our input instance:
    \begin{itemize}
        \item A \emph{chunk} of $X$ is an independent set of size at most $2^c$, and $\X$ is the set of all chunks of $X$.
        \item The \emph{degree} of a chunk $Y \in \X$, denoted by $\degR(Y)$, is defined as the number of connected components $R'$ of $R$ for which $\conf{R'}{Y} \neq 0$.
        \item A set $Z \subseteq V(R)$ is \emph{free} if $\conf{Z}{Y} = 0$ for every $Y \in \X$.
        \item A set $Z \subseteq V(R)$ is \emph{$y$-almost-free} for $y \in \mathbb{N}$ if, for every chunk $Y$ with $\conf{Z}{Y} > 0$, we have $\conf{R}{Y} \geq y$. Note that the last $\Conf$ has $R$ and not $Z$ in the subscript.
    \end{itemize}
    From \cite[Lemma 6.3]{bridgedepth}, $\Conf$ and all the above properties are computable in polynomial-time.
\end{definition}

\begin{lemma}[\cite{bridgedepth}, Lemma 6.4]
    \label{lem:isbd_contained_chunk}
    Let $(G, X, t)$ be an instance and $R'$ be a connected component of $R = G \setminus X$.
    For every independent set $S_X \subseteq X$ with $\conf{R'}{S_X} > 0$ there exists a chunk $Y \subseteq S_X$ with $\conf{R'}{Y} > 0$.
    Moreover, a set $Z \subseteq V(R)$ is \emph{free} if and only if every independent set $S_X$ of $X$ satisfies $\conf{Z}{S_X} = 0$.
\end{lemma}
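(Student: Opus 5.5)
The plan is to prove the first statement by induction on $c$, using the tree-like structure given by a lowering tree of $R'$, and then derive the ``moreover'' part from it by additivity of conflicts over connected components. Throughout I use the reformulation $\conf{R'}{S} > 0$ if and only if every maximum independent set of $R'$ meets $N(S)$. Since $\Conf$ is monotone under taking supersets of $S$, it suffices to exhibit a small $Y \subseteq S_X$ that already destroys all maximum independent sets of $R'$.

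\textbf{Structural step.} By \cref{lem:bd_properties}(iv), a connected $R'$ with $\bd{R'} \le c$ has a lowering tree $T$, with $\bd{R' \setminus V(T)} \le c-1$. Every edge of $T$ is a bridge of $R'$. Hence each connected component $C$ of $R' \setminus V(T)$ has neighbours in exactly one vertex $u_C \in V(T)$: if $C$ touched two vertices of $T$, the $T$-path between them together with a path through $C$ would form a cycle containing an edge of $T$, contradicting that this edge is a bridge. So $R'$ is the tree $T$ with pendant ``blobs'' of bridgedepth at most $c-1$ hanging at single vertices.

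\textbf{Weighted-tree reduction.} For each $u \in V(T)$ and a fixed $S \subseteq X$, I define two weights. The first, $w^{S}_{\mathrm{in}}(u)$, is the best contribution of $u$ together with its blobs when $u$ is taken; it equals $1$ plus the $\alpha$-values of the blobs with $N(u)$ removed, and is $-\infty$ if $u \in N(S)$. The second, $w^{S}_{\mathrm{out}}(u)$, is the same quantity when $u$ is not taken. Then $\alpha(R' \setminus N(S))$ is the optimum of a maximum-weight independent set problem on the tree $T$ with these two-state weights. The first key claim is the weighted analogue of the base case $c=1$ (Jansen--Bodlaender): if the tree optimum strictly drops when passing from $S=\emptyset$ to $S=S_X$, then there are at most two vertices $u,v \in V(T)$ such that changing only their weights already produces a drop. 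I would prove this by the usual exchange argument on trees. Fix an optimal solution for $S=\emptyset$ and follow along the $T$-path between two damaged vertices how the optimal state can be rerouted. If every single damaged vertex and every pair of damaged vertices can be repaired locally, then the repairs for all damaged vertices can be glued together. The only obstructions arise between at most two ``endpoints'' of an alternating path, as in the forest case.

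\textbf{Inductive step.} Each local weight drop at $u$ concerns the graph formed by the blobs of $u$, possibly with $N[u]$ or $u$ deleted, together with $u$ itself. After deleting $u$ (or $N[u]$), this is a graph of bridgedepth at most $c-1$, and the damage caused by $u \in N(S)$ is witnessed by a single vertex of $S$. By induction, applied to each affected blob component (blobs of the same $u$ are disjoint and additive, so one damaged blob suffices), the drop of that weight is caused by some $Y_u \subseteq S_X$ with $|Y_u| \le 2^{c-1}$. Taking $Y = Y_u \cup Y_v$ gives an independent subset of $S_X$ (subsets of independent sets are independent) of size at most $2^c$, hence a chunk, and $\conf{R'}{Y} > 0$. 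The base case $c = 0$ is vacuous, since then $R'$ is empty; alternatively one can start at $c=1$, the forest case, with $2^1 = 2$.

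\textbf{Moreover part.} If some independent $S_X$ has $\conf{Z}{S_X} > 0$, then, since $\alpha$ and hence $\Conf$ are additive over the connected components of $G[Z]$, some component $Z'$ has $\conf{Z'}{S_X} > 0$. As $G[Z']$ is an induced subgraph of $R$, it has bridgedepth at most $c$ by \cref{lem:bd_properties}(iii). Applying the first part to $Z'$ yields a chunk $Y$ with $\conf{Z'}{Y} > 0$, so $\conf{Z}{Y} > 0$ and $Z$ is not free. The converse holds because chunks are independent subsets of $X$.

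\textbf{Main obstacle.} I expect the hardest step to be the two-vertex localisation on the weighted tree. The gluing of local repairs has to be done carefully when in/out weights interact along paths of $T$, which is where the exponent doubling $2^{c-1} \to 2^c$ originates.
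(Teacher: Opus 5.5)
Your architecture is sound, and it is a different packaging of the same induction as the source. The blob structure is correct: each component of $R'\setminus V(T)$ attaches to exactly one vertex of the lowering tree, by the bridge argument. The encoding of $\alpha(R'-N(S))$ as a two-state weighted independent set problem on $T$, the induction on $c$, and the additivity argument for the ``moreover'' part all work.

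The paper itself gives no proof; it cites \cite[Lemma 6.4]{bridgedepth}. As I recall that source (its abstract ties bridgedepth to minimal blocking sets), the statement there follows from a $2^{c}$ bound on inclusion-minimal blocking sets, i.e.\ sets $B$ with $\alpha(H-B)<\alpha(H)$, in graphs $H$ of bridgedepth $c$. One observes that $N(S_X)\cap V(R')$ is blocking, shrinks it to a minimal blocking set, and picks one neighbour in $S_X$ per vertex. That keeps the induction purely inside $R'$. Carrying $S_X$ along, as you do, is equivalent, but then your induction hypothesis must be stated for arbitrary connected induced subgraphs of $R$. Blob components, and $Z'$ in the moreover part, are not components of $R$.

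The gap is the localisation claim. It carries all the content and is only sketched, and as you phrase it (at most two \emph{vertices}) it does not yield $2^c$. At a single tree vertex $u$, both $w_{\mathrm{in}}$ and $w_{\mathrm{out}}$ may drop, and each drop can need up to $2^{c-1}$ vertices of $S_X$. So two vertices could cost $2^{c+1}$, and your count ``$|Y_u|\le 2^{c-1}$'' silently assumes one relevant drop per vertex. What you need is that at most two \emph{drops}, i.e.\ (vertex, state) pairs, already force a decrease. This is true, and there is a clean proof that replaces the exchange/gluing sketch:
\begin{itemize}
\item Peeling leaves of $T$ shows that, for the undamaged weights, a state assignment is optimal iff for every edge $uv$ of $T$ the pair $(x_u,x_v)$ attains the maximum over assignments with those two values fixed. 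Hence the optimal assignments are exactly the models of a satisfiable 2-CNF $\varphi$.
\item Damage only decreases weights. So the optimum drops iff every originally optimal assignment uses a dropped state, i.e.\ iff $\varphi$ together with the unit literals forbidding the dropped states is unsatisfiable.
\item Resolution on clauses of size at most two yields clauses of size at most two. By the subsumption theorem, at most two of these unit literals already contradict $\varphi$.
\end{itemize}

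The count then goes through. An in-drop at $u$ is witnessed by one vertex of $S_X$ if $u\in N(S_X)$. Otherwise, by additivity, it is witnessed by the induction hypothesis applied to a single component of $B_u-N(u)$, where $B_u$ is the union of the blobs at $u$. An out-drop is witnessed by induction on a single component of $B_u$. The union $Y$ has size at most $2\cdot 2^{c-1}=2^c$. Under $Y$, every originally optimal assignment strictly loses weight, so $\alpha(R'-N(Y))<\alpha(R')$.
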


As in the treedepth and feedback vertex set parameterizations, we  distinguish the case where $\alpha(R) \geq t$; note that it can be easily checked in polynomial time as $G[R]$ has constant bridgedepth and, consequently, constant treewidth by \cref{lem:bd_properties}.

\begin{isbdrule}
    \label{rrule:isbd_easy}
    If $\alpha(G[R]) \geq t$, then stop and output $(G[X], X, 0)$.
\end{isbdrule}

\begin{sproof}{\cref{rrule:isbd_easy}}
    Let $S' \in \Sol(G[X], X, 0)$, $R' = R \setminus N_G(S')$, and note that $G[R']$ has constant bridgedepth and, consequently, constant treewidth. As such, by \cref{thm:generic_lifting}, we can enumerate all independent sets $I \subseteq R'$ of size at least $t - |S'|$ with polynomial delay and, for each such set, $I \cup S' \in \Sol(G, X, 0)$.
\end{sproof}

From this point forward, we assume that \cref{rrule:isbd_easy} is not applicable to $(G, X, t)$, so $\alpha(G[R]) < t$.
Beyond our conditions on \cref{sec:framework}, we also need to  enforce the following properties.

\begin{invariant}
    \label{inv:isbd_no_lonely_r_is}
    If $\alpha(G \setminus X) < t$, then, after applying a reduction rule and obtaining $(G', X', t')$, it holds that $\alpha(G' \setminus X') < t'$.
\end{invariant}


\begin{invariant}
    \label{inv:isbd_controlled_bd}
    Let $(G', X, t')$ be obtained from $(G, X, t')$ by an application of a reduction rule. It holds that $\bd{G' \setminus X} \leq \bd{G \setminus X}$.
\end{invariant}

Interestingly, the assumption that $\alpha(R) < t$ is used to both simplify the arguments of some of our reduction rules as well to ensure that we have \cref{cond:good_decision_kernel}.
Moreover, by guaranteeing \cref{inv:isbd_no_lonely_r_is}, we also ensure that \cref{rrule:isbd_easy} can only be applied at the very beginning of the kernelization algorithm.
\cref{inv:isbd_controlled_bd} is primarily useful for our lifting algorithm, but it must be enforced throughout the compression phase.
The following is a helpful lemma that is extensively used in our safeness proofs.

\begin{lemma}
    \label{lem:isbd_large_alpha_r}
    If there exists a solution $S$ of $(G, X, t)$ with $S_X = S \cap X$ such that $\conf{R}{S_X} \geq |S_X|$, then $\alpha(R) \geq t$.
\end{lemma}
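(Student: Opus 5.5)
The plan is a direct exchange argument: use $S$ to build an independent set of $R$ of size at least $t$. Write $S_R = S \cap R$. Since $S$ is independent, $S_R \subseteq R \setminus N(S_X)$, hence
\begin{equation*}
|S_R| \le \alpha(G[R \setminus N(S_X)]) = \alpha(R) - \conf{R}{S_X},
\end{equation*}
where the equality is just \cref{def:confs_bd} with $R' = R$. The hypothesis $\conf{R}{S_X} \geq |S_X|$ then gives $|S_R| \le \alpha(R) - |S_X|$.

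Combining this with $|S| \ge t$ yields
\begin{equation*}
t \le |S| = |S_X| + |S_R| \le |S_X| + \alpha(R) - \conf{R}{S_X} \le \alpha(R),
\end{equation*}
which is exactly the claim. Equivalently, taking a maximum independent set $I$ of $G[R]$ directly, $|I| = \alpha(R) \ge |S_R| + |S_X| = |S| \ge t$, so $I$ itself is a solution avoiding $X$; this phrasing makes clear why the lemma is useful later (it produces a witness for \cref{rrule:isbd_easy}).

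There is essentially no obstacle here; the only point requiring care is the bookkeeping that $\alpha(R)$ and $\alpha(G[V(R)\setminus N(X')])$ in \cref{def:confs_bd} are computed in $G[R]$, so that $S_R$, being disjoint from $N(S_X)$ and contained in $R$, is a feasible independent set in $G[R \setminus N(S_X)]$. No results beyond the definition of conflicts are needed, and the argument mirrors the chain of inequalities already used in the safeness proofs of \cref{rrule:isfvs_chunk_removal} and \cref{rrule:istd_bad_chunk}.
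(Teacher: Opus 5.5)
Your proof is correct and follows the paper's argument: both rewrite the hypothesis as $\alpha(R) \geq \alpha(G[R \setminus N(S_X)]) + |S_X|$ and conclude via $|S| = |S_X| + |S \cap R| \leq |S_X| + \alpha(G[R \setminus N(S_X)])$. You also spell out the step $|S \cap R| \leq \alpha(G[R \setminus N(S_X)])$, which holds because independence of $S$ forces $S \cap R$ to avoid $N(S_X)$; the paper's two-line chain leaves this step implicit.
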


\begin{proof}
    Note that:
    \begin{align*}
        \confg{R}{S_X}{G} &= \alpha(R) - \alpha(R \setminus N_G(S_X)) \geq |S_X|\\
        \alpha(R) &\geq \alpha(R \setminus N_G(S_X)) + |S_X| \geq |S| \geq t. \qedhere
    \end{align*}
\end{proof}

\subsubsection{Bounding the number of connected components of $R$}

\begin{isbdrule}
    \label{rrule:isbd_comp_removal}
    If $R' \subseteq R$ is a free connected component of $G \setminus X$, remove $R'$ and set $t \gets t - \alpha(R')$.
\end{isbdrule}

\begin{invproof}{\cref{rrule:isbd_comp_removal}}
    Note that for any $S \in \Sol(G, X, t)$, $S' = S \setminus V(R')$ is an independent set of $G' = G \setminus R'$ of size at least $t' = t - \alpha(R')$ as $|S \cap V(R')| \leq \alpha(R')$ and $|S| \geq t$.
    Moreover, observe that $S \cap X = S' \cap X$, so \cref{cond:good_decision_kernel} is preserved.
    For \cref{inv:isbd_controlled_bd}, it suffices to recall that bridgedepth is minor-closed and observe that $G \setminus R'$ is a minor of $G$.
    Note that, if $\alpha(G' \setminus X) \geq t' = t - \alpha(R')$, then $G' \cup R' \setminus X = G \setminus X = R$, which implies that $\alpha(R) \geq t' + \alpha(R') \geq t$, so $\alpha(G' \setminus X) < t'$ as \cref{rrule:isbd_easy} is not applicable and we preserve \cref{inv:isbd_no_lonely_r_is}.    
\end{invproof}

The following rule investigates the case where a chunk of the modulator has a large number of conflicts.

\begin{isbdrule}
    \label{rrule:isbd_many_confs}
    If there is a connected component $R'$ of $R$ such that, for every $Y \in \X$ with $\conf{R'}{Y} > 0$ we have $\degR(Y) \geq |X|+1$, then remove all edges between $X$ and $R'$.
\end{isbdrule}

\begin{invproof}{\cref{rrule:isbd_many_confs}}
    Take $S \in \Sol(G, X, t)$ is also in $\Sol(G', X, t)$ as $G' = G \setminus E(X, R')$ is a subgraph of $G$ with as many vertices; consequently, if $S_X \subseteq X$ is in a solution to $(G, X, t)$, then it is in a solution to $(G', X, t)$ as well, satisfying \cref{cond:good_decision_kernel}, while \cref{inv:isbd_controlled_bd} follows immediately as bridgedepth is minor closed and $G'[R] = G[R]$.
    Towards proving that \cref{inv:isbd_no_lonely_r_is} holds, suppose there is some $S' \in \Sol(G', X, t)$, let $S'_X = S' \cap X$, and $S'_{R'} = S \cap R'$, and further assume that $\conf{R'}{S'_X} > 0$.
    Then \cref{lem:isbd_contained_chunk} guarantees that there is some chunk $Y \subseteq S'_X$ and, as the rule is applicable, we have that:

    \begin{equation*}
        \conf{R' \setminus R}{S'_X} \geq \conf{R \setminus R'}{Y} \geq \degR(Y) - 1 \geq |X|.
    \end{equation*}

    By \cref{lem:isbd_large_alpha_r}, $\alpha(G[R]) \geq t$, which contradicts the inapplicability of \cref{rrule:isbd_easy}, so it follows that $S'$ cannot exist with $\conf{R'}{S'_X} > 0$ and \cref{inv:isbd_no_lonely_r_is} is preserved.
\end{invproof}

\begin{lemma}[\cite{bridgedepth}, Lemma 6.6]
    \label{lem:isbd_bounded_comp_count}
    Let $(G, X, t)$ be an instance of \pname{Enum Independent Set} and $R = G \setminus X$.
    If \cref{rrule:isbd_comp_removal,rrule:isbd_many_confs} are not applicable to the instance, then $R$ has at most $|\X|\cdot|X|$ connected components.
\end{lemma}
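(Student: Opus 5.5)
The plan is a double-counting argument that charges each connected component of $R$ to a chunk, and then shows every chunk receives at most $|X|$ charges. The key observation is that when neither \cref{rrule:isbd_comp_removal} nor \cref{rrule:isbd_many_confs} applies, every component carries some conflict, and each such conflict can be attributed to a chunk of small degree.

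\textbf{Step 1: every component has a low-degree conflicting chunk.} Fix a connected component $R'$ of $R$. Since \cref{rrule:isbd_comp_removal} is not applicable, $R'$ is not free. Hence there is some chunk $Y \in \X$ with $\conf{R'}{Y} > 0$. Since \cref{rrule:isbd_many_confs} is not applicable to $R'$, it is not the case that every such chunk has $\degR(Y) \geq |X|+1$. So we may choose a chunk $Y_{R'}$ with $\conf{R'}{Y_{R'}} > 0$ and $\degR(Y_{R'}) \leq |X|$. Let $\phi$ be the map sending each component $R'$ to this chosen $Y_{R'}$.

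\textbf{Step 2: bound the fibres of $\phi$.} Fix a chunk $Y$. Every component $R'$ with $\phi(R') = Y$ satisfies $\conf{R'}{Y} > 0$. By the definition of $\degR$, it is therefore one of the $\degR(Y)$ components on which $Y$ has nonzero conflict. Because $Y$ lies in the image of $\phi$, we have $\degR(Y) \leq |X|$, so $|\phi^{-1}(Y)| \leq |X|$.

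\textbf{Step 3: conclude.} The preimages $\phi^{-1}(Y)$, for $Y \in \X$, partition the set of components of $R$. Summing over all chunks gives at most $|\X| \cdot |X|$ components, as claimed.

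The argument has no real obstacle. The only point needing care is Step 1, where one must read the negation of the rule conditions correctly. Non-applicability of \cref{rrule:isbd_many_confs} yields a chunk that both conflicts with $R'$ and has small degree. It is not enough to know that all chunks have small degree, and it is not enough to know only that some conflicting chunk exists; Step 2 uses both properties of the same chunk $Y_{R'}$.
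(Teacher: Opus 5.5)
Your proof is correct, and it is the same charging argument as \cite[Lemma 6.6]{bridgedepth}, which the paper only cites: each component is charged to a chunk that conflicts with it and has $\degR(Y) \le |X|$, and each chunk then absorbs at most $\degR(Y) \le |X|$ charges. As a minor remark, the non-applicability of \cref{rrule:isbd_many_confs} alone already supplies such a chunk, because for a free component that rule's condition holds vacuously.
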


\subsubsection{Bounding the size of the lowering trees}

The last set of reduction rules is devoted to limiting the size of a lowering tree of a connected component.
This is done in three steps: first, we bound the diameter of the tree, then bound its maximum degree, and finally its number of leaves.
Intuitively, we look for simple objects, called \emph{$T$-structures}, inside a lowering tree $T$ of a connected component $R'$ of $R$ and locally alter them without changing too much how they behave towards the modulator $X$.
Once this is accomplished, we move $T$ to $X$ and recursively kernelize the resulting instance.
We rely on the following definitions, first given in~\cite{bridgedepth}, for the remainder of our rules.

\begin{definition}
    \label{def:isbd_vertex_typing}
    Let $R' \subseteq R$ be a connected component of $G \setminus X$, $T \subseteq R$ be a tree-of-bridges of $R'$, and $R^\star = R' \setminus E(T)$.
    The \emph{pending component} of $v \in V(R')$, denoted by $H^T_v$ or $H_v$ when $T$ is clear from the context, is the connected component of $R^\star$ that contains $v$.
    We say that $v$ is the \emph{root} of $H_v$ and  that:
    \begin{itemize}
        \item vertex $v$ has \emph{type $A$} in $T$ if $\alpha(R'[V(H_v)]) = \alpha(R'[V(H_v) \setminus \{v\}])$;
        \item vertex $v$ has \emph{type $B$} in $T$ if $\alpha(R'[V(H_v)]) = \alpha(R'[V(H_v) \setminus \{v\}]) + 1$.
    \end{itemize}
    To simplify our notation, we use $H_v$ for both the induced subgraph and for the vertex set, when no ambiguity is present.
\end{definition}

\begin{definition}
    \label{def:isbd_struct_typing}
    Let $R' \subseteq R$ be a connected component, and $T \subseteq R$ be a tree-of-bridges of $R'$.
    The set of vertices $C \subseteq V(R')$ is a \emph{$T$-conflict structure}:
    \begin{itemize}
        \item of \emph{type 1} if $C = H_{v_1} \cup H_{v_2}$ where $v_1v_2 \in E(T)$ and at least one of $v_1,v_2$ is of type $A$.
        \item of \emph{type 2} if $C = H_{v_1} \cup H_{v_2}$ and:
            \begin{itemize}
                \item there exists $u_1,u_2 \in V(T)$ such that $\angled{u_2,v_1,v_2,u_1}$ is a path in $T$;
                \item $\deg_T(v_1) = \deg_T(v_2) = 2$; and
                \item both $v_1,v_2$ are of type $B$ in $T$.
            \end{itemize}
        \item of \emph{type 3} if $C = H_u$ and $u$ is a leaf of $T$ of type $B$.
        \item of \emph{type 4} if $C = H_{v_1} \cup H_{v_2}$ and:
            \begin{itemize}
                \item there exists $u \in V(T)$ such that $\angled{v_1,v_2,u}$ is a path in $T$;
                \item $v_1$ is a leaf in $T$, $\deg_T(v_2) = 2$; and
                \item both $v_1,v_2$ have type $B$ in $T$.
            \end{itemize}
    \end{itemize}
\end{definition}

Note that all the properties listed in \cref{def:isbd_vertex_typing,def:isbd_struct_typing} can be computed in polynomial time as $R$ has constant bridgedepth; by \cref{lem:bd_properties}, $R$ has constant treewidth, and an \FPT\ algoritm for \pname{Maximum Independent Set} parameterized by treewidth is a textbook example.
We opt to not state the more technical lemmas of~\cite{bridgedepth} as they are used in the formal proofs of the kernelization size bounds, which we use as a black box; intuitively, they use the fact that if a chunk has a lot of conflicts, then it must interact with $T$-conflict structures, and so the latter must be $y$-almost-free for $y \geq |X| + \Theta(1)$, and if no such chunk exists then $T$ is of bounded size.
The next two rules are used to bound the diameter of a tree-of-bridges.

\begin{isbdrule}
    \label{rrule:isbd_type1}
    Let $T$ be a tree-of-bridges of a connected component $R' \subseteq R$.
    If $T$ contains a $T$-conflict structure $C = H_{v_1} \cup H_{v_2}$ of type 1 that is $(|X| + 2)$-almost-free, then remove edge ${v_1}{v_2}$ from $G$.
\end{isbdrule}

\begin{invproof}{\cref{rrule:isbd_type1}}
    It is immediate to verify that, for every $S \in \Sol(G, X, t)$ is also in $\Sol(G', X, t)$ as $G' = G \setminus \{{v_1}{v_2}\}$ is a subgraph of $G$ on the same vertex set; consequently, if $S_X 
    = S \cap X$ is in a solution to $(G, X, t)$, then it is in a solution to $(G', X, t)$, safeguarding \cref{cond:good_decision_kernel}.
    For \cref{inv:isbd_controlled_bd}, note that $G[R] \setminus \{v_1v_2\}$ is a minor of $G[R]$.
    Towards proving that \cref{inv:isbd_no_lonely_r_is} holds, suppose that this is not that case, so take $S' \in \Sol(G', X, t)$ with $S' \subseteq V(R)$.
    If at most one of ${v_1},{v_2} \in S'$, then $S' \in \Sol(G, X, t)$, so assume now that both are in $S'$ and moreover, $v_1$ is of type $A$.
    That is, there is some independent set $Z$ of $H_{v_1}$ of size $\alpha(G[H_{v_1}]$, and we can set $S = (S' \setminus H_{v_1}) \cup Z$; observe that $S$ is independent as $H_{v_1}$ is a pending component, and $|S| \geq |S'| > t$, which contradicts the hypothesis that \cref{inv:isbd_no_lonely_r_is} held for $(G, X, t)$.
\end{invproof}

\begin{isbdrule}
    \label{rrule:isbd_type2}
    Let $T$ be a tree-of-bridges of a connected component $R' \subseteq R$.
    If $T$ contains a $T$-conflict structure $C = H_{v_1} \cup H_{v_2}$ of type 2 where $\angled{u_2,v_1,v_2,u_1}$ is the corresponding path and $C$ is $(|X| + 1)$-almost-free, then identify $v_i$ into $u_i$, for $i \in \{1,2\}$, and set $t \gets t - 1$.
\end{isbdrule}

\begin{invproof}{\cref{rrule:isbd_type2}}
    Let $(G', X, t-1)$ be the instance obtained after applying the rule and define $U = \{u_1,u_2\}$.
    To show that \cref{cond:good_decision_kernel} holds, take $S \in \Sol(G, X, t)$; we denote by $S_W = S \cap W$ for $W \subseteq V(G)$.
    We branch the analysis as follows:
    \begin{enumerate}
        \item If $S \cap U = \emptyset$, $|S \setminus \{v_1,v_2\}| \geq t - 1$ and we are done.
        \item Suppose that $S \cap U = \{u_i\}$. If $v_i \in S$, then $S' = S \setminus \{v_i\}$ is a solution to $(G', X, t-1)$; if $v_i \notin S$, then $S' = S \setminus \{u_i\}$ is again a solution to the output instance.
        \item Otherwise, we have $U \subseteq S$. We once again branch the analysis:
        \begin{enumerate}
            \item If $\conf{C}{S_X} = 0$, then there is some independent set $Z \subseteq C$ of size $\alpha(G[C])$ and $Z \cup S_X$ is independent in $G$.
            As both $v_i$'s have type $B$, we have that $|Z| = \alpha(G[H_{v_1}]) + \alpha(G[H_{v_2}]) - 1$, as only one of $v_1,v_2$, say $v_1$, is in $Z$.
            As $v_1$ is the only vertex in $Z$ that has a neighbor outside of $C$, which is precisely $u_2$, it follows that $S^\star = Z \cup S_X \cup S_{R \setminus R'} \cup  (S_{R' \setminus C} \setminus \{u_2\})$ is independent and of size at least $t$.
            The latter follows from the fact that $U \subseteq S$ implies $v_1,v_2 \notin S$, which in turn means that $|S_C| < \alpha(G[C]) = |Z|$ since $C$ is a type 2 $T$-structure; consequently, by replacing $S_C \cup \{u_2\}$ with $Z$ to obtain $S^\star$, we essentially at least replaced $u_2$ with $v_1$.
            Finally, $S' = S^\star \setminus \{v_1\}$ is a solution to $(G', X, t-1)$.
            A symmetric analysis can be made if $v_2 \in Z$.
            \item If $\conf{C}{S_X} > 0$, then there is some chunk $Y \subseteq S_X$ and, since the rule is applicable, we have $\conf{C}{Y} \geq |X|+1 \geq |Y|$. By \cref{lem:isbd_large_alpha_r}, this implies that $\alpha(R) \geq t$, which contradicts the fact that \cref{rrule:isbd_easy} is not applicable.
        \end{enumerate}
    \end{enumerate}
    As the above analysis never modified $S_X$, \cref{cond:good_decision_kernel} is satisfied.
    Towards proving that \cref{inv:isbd_controlled_bd} holds, first observe that there is a sequence of lowering trees $\T = \angled{T_1, \dots, T_\ell}$ that witnesses that $R'$ has bridgedepth $\ell \leq c$; moreover, every vertex of $R'$ must be in exactly one $T_i$ and, w.l.o.g., we may assume that every $T_i$ is inclusion-wise maximal (cf. \cite[proof of Proposition 3.2]{bridgedepth}).
    Thus, as $T$ is a tree-of-bridges of $R'$, there exists $T_i \in \T$ that contains $T$; let $R'_i = R' \setminus \bigcup_{j = 1}^{i - 1} V(T_i)$, $T''_i$ be obtained from $T_i$ by applying the operations given in the statement of the rule to $\angled{u_2,v_1,v_2,u_1}$, and $R''$, $R''_i$ be defined analogously.
    Now, as $R'_i \setminus V(T_i) = R''_i \setminus V(T''_i)$ and $T_i$ is a lowering tree of $R'_i$, it follows that $\bd{R''_i} \leq 1 + \bd{R''_i \setminus T''_i} = 1 + \bd{R'_i \setminus T_i} = \bd{R'_i}$.
    Thus, by replacing $T_i$ with $T''_i$ in $\T$, we obtain a sequence of lowering trees that witnesses that $\bd{G'} \leq \bd{G}$, as $\angled{T_1, \dots, T_{i-1}}$ is also a sequence of lowering trees of $G'$ and $R''_j = R'_j$ for every $j > i$.
    This concludes the proof that \cref{inv:isbd_controlled_bd} is satisfied.

    Let us prove that \cref{inv:isbd_no_lonely_r_is} holds along with the converse direction.
    To this end, take $S' \in \Sol(G', X, t-1)$ and recall that $u_1,u_2$ are adjacent in $G'$.
    As such, if $S' \cap U = \{u_1\}$, then $S = S' \cup \{v_1\}$ is a solution to $(G, X, t)$: no vertex of $N_G(v_1)$ is in $S$ as $N_G(v_1) \setminus \{v_2\} \subseteq N_{G'}(v_1)$, $v_2 \notin S$, and $|S| = |S'| + 1 \geq t$; note that, $S' \cap X \neq \emptyset$, otherwise we would have $\alpha(R) \geq t$ and contradict \cref{rrule:isbd_easy}.

    For the remainder, we have $S' \cap U = \emptyset$ and let us define $H'_{v_i} = H_{v_i}  \setminus \{v_1\}$, $V' = R' \setminus (C \cup U)$, i.e., $N_G(V') \cap C = \emptyset$, and partition $S' = S'_X \cup S'_{H'_{v_1}} \cup S'_{H'_{v_2}} \cup S'_{V'} \cup S'_{R \setminus R'}$.
    As both $v_i$'s are of type $B$ and $S' \cap U = \emptyset$, it follows that $|S'_{H'_{v_1}} \cup S'_{H'_{v_2}}| \leq \alpha(G[C]) - 1$.
    We again have two cases:
    \begin{enumerate}
        \item If $\confg{C}{S'_X}{G} = 0$, then we can take an independent set $Z$ of $G[C]$ with $|Z| = \alpha(G[C]) > |S'_{H'_{v_1}} \cup S'_{H'_{v_2}}|$, which implies that $S = S'_X \cup Z \cup S_{V'} \cup S_{R \setminus R'} = (S' \setminus S'_{H'_{v_1}} \cup S'_{H'_{v_2}}) \cup Z$ is independent and so is a solution to $(G, X, t)$. Once again, $S'_X \neq \emptyset$, as otherwise we would have $S_X = \emptyset$ and $\alpha(G[R]) \geq t$, so \cref{inv:isbd_no_lonely_r_is} holds in this case.
        \item If $\confg{C}{S'_X}{G} \neq 0$, then by \cref{lem:isbd_contained_chunk} there is a chunk $Y \subseteq S'_X$ and, by the applicability of the rule, $\confg{R}{Y}{G} \geq |X|+1$, which implies that the condition of \cref{lem:isbd_large_alpha_r} is satisfied, in turn contradicting that \cref{rrule:isbd_easy} is not applicable, so it follows that \cref{inv:isbd_no_lonely_r_is} is preserved. \qedhere
    \end{enumerate}
\end{invproof}

\begin{lemma}{\cite[Lemma 6.13]{bridgedepth}}
    \label{lem:isbd_bounded_diam}
    Let $R'$ be a connected component of $R$, $T$ be a lowering tree of $R'$, and $T'$ be a longest path of $T$.
    If \cref{rrule:isbd_type1,rrule:isbd_type2} are not applicable to $(G,X,t)$ when taking $T'$ as the tree-of-bridges, then $V(T') \leq \bigO{|\X| \cdot |X|}$ and, consequently, the diameter of $T$ has the same bound.
\end{lemma}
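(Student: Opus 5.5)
The plan is a charging argument along the longest path $T'$. Each vertex of $T'$ should be forced, by the inapplicability of the two rules, into a $T$-conflict structure that is \emph{not} almost-free. Such a structure needs a chunk with conflicts on it, and each chunk can only account for a bounded number of them.

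First I would cover $T'$ by structures. Write $T' = \angled{w_1, \dots, w_m}$ and view it as a tree-of-bridges of $R'$, with pending components $H_{w_i}$ taken with respect to $T'$. For every edge $w_iw_{i+1}$, either one endpoint has type $A$, which gives a type 1 structure, or both have type $B$. In the second case, whenever $2 \le i$ and $i+1 \le m-1$, the vertices $w_i, w_{i+1}$ are internal to the path, so they have degree $2$ in $T'$, and $\angled{w_{i-1},w_i,w_{i+1},w_{i+2}}$ gives a type 2 structure. So every consecutive pair of $T'$, apart from the two boundary pairs, spans a $T'$-conflict structure of type 1 or 2. I would take every other pair, $\{w_1,w_2\},\{w_3,w_4\},\dots$, to get about $m/2$ structures $C_1,\dots,C_p$ that are pairwise vertex-disjoint; their pending components are disjoint because the $H_{w_i}$ partition $V(R')$.

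Second, use the inapplicability of the rules. Since neither rule fires, each $C_j$ fails to be $(|X|+2)$-almost-free (type 1) or $(|X|+1)$-almost-free (type 2). So there is a chunk $Y_j$ with $\conf{C_j}{Y_j} > 0$ and $\conf{R}{Y_j} \le |X|+1$. Charge $C_j$ to $Y_j$.

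Third, bound the load on a single chunk $Y$. The key inequality I would prove is superadditivity of conflicts over disjoint pending pieces joined only by bridges: for disjoint structures $C_{j_1},\dots,C_{j_q}$ of $R'$, $\conf{R}{Y} \ge \conf{R'}{Y} \ge \sum_{r} \conf{C_{j_r}}{Y} - \bigO{q'}$. Here the loss $q'$ comes from the bridges of $T'$ between consecutive structures. Each bridge can lower $\alpha$ by at most one when pieces are glued, and the type $A$/$B$ structure should be what controls this. I expect this step to be the main obstacle. A naive gluing loses one unit per bridge, which cancels the gain of one conflict per structure. So I would follow the exchange arguments in the safeness proofs of Rules~\ref{rrule:isbd_type1} and~\ref{rrule:isbd_type2}: type 1 uses a type $A$ vertex to absorb the bridge, and type 2 uses two type $B$ vertices to reroute a maximum independent set through $u_2$. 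That shows each structure hit by $Y$ forces a genuine net loss of at least one in $\alpha(R'\setminus N(Y))$ versus $\alpha(R')$, possibly after keeping only every second or third structure so that they are separated by at least one free bridge. This gives that $Y$ is charged by at most $\bigO{|X|}$ structures.

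Finally, counting gives $m/2 - \bigO{1} \le p \le \bigO{|X|} \cdot |\X|$, so $|V(T')| = \bigO{|\X|\cdot|X|}$. Since $T'$ is a longest path of $T$, the diameter of $T$ has the same bound.
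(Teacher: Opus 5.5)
\textbf{There is a genuine gap in your third step.} Your first two steps are fine. The trouble is the step you yourself flagged as the main obstacle: the inequality it aims for is false for the family you build, and thinning cannot repair it.

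Take $T'=\angled{a_1,b_1,b'_1,a_2,b_2,b'_2,\dots,a_m,b_m,b'_m}$ with the following pending components:
\begin{itemize}
\item each $H_{a_i}$ is a triangle through $a_i$, so $a_i$ has type $A$;
\item $H_{b_i}=\{b_i\}$ and $H_{b'_i}=\{b'_i\}$, so $b_i$ and $b'_i$ have type $B$.
\end{itemize}
Removing this path leaves a matching, so it is a lowering tree of $R'$. Let $y\in X$ be adjacent exactly to $b_1,\dots,b_m$. Then $\alpha(R')=2m=\alpha(R'\setminus N(y))$, since each $b_i$ can be traded for $b'_i$, so $\conf{R}{\{y\}}=0$. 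Yet $\conf{C}{\{y\}}=1$ for every type 1 structure $C=H_{a_i}\cup H_{b_i}$, and half of these are among your pairs $\{w_1,w_2\},\{w_3,w_4\},\dots$.

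The quantity $\conf{R}{\{y\}}$ does not depend on which structures you keep. So keeping every second or third one changes nothing: a single chunk of zero global conflict is charged $\Theta(m)$ times. The same happens for type 2 structures at even positions of an even run of type-$B$ vertices. For example, a chunk killing $b_2,b_3$ in $\angled{a,b_1,b_2,b_3,b_4,a'}$ (all $H_{b_j}$ singletons) has no conflict on $R'$.

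The exchange arguments in the safeness proofs of \cref{rrule:isbd_type1,rrule:isbd_type2} do not help. They preserve solutions using almost-freeness and $\alpha(R)<t$; they do not lower-bound $\conf{R'}{Y}$. In the example above the lemma still holds only because the type 2 structures $H_{b_i}\cup H_{b'_i}$ become almost-free, and your charging never uses that.

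\textbf{What is missing is a rule for choosing the structures.} Write $B$ for the type-$B$ vertices of $T'$. Then
\[
\alpha(R')=\sum_i\alpha(H_{w_i}-w_i)+\alpha(T'[B]),
\]
and $\alpha(T'[B])$ is the sum of $\lceil \ell/2\rceil$ over the maximal runs of type-$B$ vertices, $\ell$ being the run length. Choose the pairs aligned with these runs, for instance:
\begin{itemize}
\item type 2 pairs starting at odd positions of each run;
\item type 1 pairs on two type-$A$ vertices;
\item type 1 pairs on a type-$A$ vertex together with a whole type-$B$ run of length one.
\end{itemize}
The resulting family $\mathcal{C}$ is then \emph{separable}: $\alpha(R')=\alpha(R'\setminus\bigcup\mathcal{C})+\sum_{C\in\mathcal{C}}\alpha(C)$. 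Subadditivity of $\alpha$ over a vertex partition immediately gives $\conf{R'}{Y}\ge\sum_{C\in\mathcal{C}}\conf{C}{Y}$ with no loss. Such a family still covers a constant fraction of $V(T')$, up to $\bigO{1}$ vertices at the ends, since roughly one vertex per maximal run is left over. With that, your final counting goes through: each chunk $Y$ with $\conf{R}{Y}\le|X|+1$ is charged at most $|X|+1$ times.
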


We now move on to rules that will allow us to (polynomially) bound the maximum degree and number of leaves of lowering trees.
We accomplish the first with the following rule.

\begin{isbdrule}
    \label{rrule:isbd_back-n-forth}    
    Let $T$ be a tree-of-bridges of $R' \subseteq R$.
    If $T$ contains a vertex $v$ with $\deg_T(v) > 3|\X| \cdot |X|$ then apply \cref{rrule:isbd_comp_removal,rrule:isbd_many_confs} exhaustively to $(G, X \cup \{v\}, t)$, obtaining $(G', X \cup \{v\}, t')$, and return $(G', X, t')$.
\end{isbdrule}

\begin{invproof}{\cref{rrule:isbd_back-n-forth}}
    The safeness of the rule follows immediately from the safeness of \cref{rrule:isbd_comp_removal,rrule:isbd_many_confs}; for more details, we refer to \cite[Lemma 6.14]{bridgedepth}.
    What is missing from the original proof are the guarantees that \cref{cond:good_decision_kernel,inv:isbd_no_lonely_r_is,inv:isbd_controlled_bd} are maintained.
    Define $X^\star = X \cup \{v\}$.
    By the proofs of \cref{rrule:isbd_comp_removal,rrule:istd_bad_chunk}, we know that for every $S \in \Sol(G, X, t)$, $S_{X^\star} = S \cap X^\star$ is also in at least one solution $S'$ of $(G', X^\star, t')$, implying that the same is true for  $S_X = S_{X^\star} \setminus \{v\}$ with respect to the instance $(G', X, t')$, so \cref{cond:good_decision_kernel} is preserved.
    \cref{inv:isbd_controlled_bd} also holds: as $G'$ is a minor of $G$ and bridgedepth is minor-closed, $\bd{G' \setminus X} \leq \bd{G \setminus X}$.
    
    Towards a contradiction, let $R' = G' \setminus X$, and suppose that $\alpha(R') \geq t'$.
    If $\alpha(R' \setminus \{v\}) \geq t'$, then by  the proofs of \cref{rrule:isbd_comp_removal,rrule:istd_bad_chunk}, we would also have that $\alpha(R) \geq t$, a contradiction to the inapplicability of \cref{rrule:isbd_easy}.
    If, however, $\alpha(R' \setminus \{v\}) \leq t' - 1$ but $\alpha(R') \geq t'$, then it must be the case that $\alpha(R') = t'$, i.e., there is some solution $S' \in \Sol(G', X, t)$ such that $S' \cap X^\star = \{v\}$; by the proofs of \cref{rrule:isbd_comp_removal,rrule:istd_bad_chunk}, there is some solution $S$ of $(G, X, t)$ that has $S \cap X = S' \cap X$, but $S' \cap X = \emptyset$, so $S \subseteq R$ and so $\alpha(R) \geq t$, and we have our contradiction, implying that \cref{inv:isbd_no_lonely_r_is} is maintained.
\end{invproof}

\begin{observation}
    \label{obs:isbd_t_deg_bound}
    If $T$ is a tree-of-bridges to which \cref{rrule:isbd_back-n-forth} is not applicable, then $T$ has maximum degree at most $3|\X|\cdot|X|$.
\end{observation}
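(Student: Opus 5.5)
The statement is \cref{obs:isbd_t_deg_bound}: if \cref{rrule:isbd_back-n-forth} is not applicable to a tree-of-bridges $T$, then $T$ has maximum degree at most $3|\X|\cdot|X|$. The plan is to prove it by contraposition, reading it off directly from the trigger condition of \cref{rrule:isbd_back-n-forth}.

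Suppose, towards a contradiction, that $T$ is a tree-of-bridges of a component $R'\subseteq R$ with some vertex $v$ satisfying $\deg_T(v)>3|\X|\cdot|X|$. The hypothesis of \cref{rrule:isbd_back-n-forth} asks for exactly this: a tree-of-bridges of a component of $R$ containing a vertex of $T$-degree exceeding $3|\X|\cdot|X|$. Hence the rule is applicable to $T$, contradicting the assumption. Therefore every $v\in V(T)$ has $\deg_T(v)\le 3|\X|\cdot|X|$.

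The one point needing care is what "applicable" means. The rule's action (moving $v$ into the modulator, running \cref{rrule:isbd_comp_removal,rrule:isbd_many_confs} exhaustively, then returning $v$ to $R$) might leave the instance unchanged. Then the rule would technically fire without progress, and "not applicable" could be read as "fires without change." I would avoid this by fixing the convention that applicability means the rule's precondition holds, which is how the observation is phrased. Under that convention the observation is immediate.

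Whether an exhaustive application of \cref{rrule:isbd_back-n-forth} actually reduces high degrees is a different question, relevant only to the kernel size bound. That question is handled by \cite[Lemma 6.14]{bridgedepth}, which the paper uses as a black box, so it is not needed here.
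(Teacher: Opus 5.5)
Your argument is correct and matches the paper's reasoning. The paper states this observation without proof, precisely because it is the negation of the precondition of \cref{rrule:isbd_back-n-forth}. Reading ``applicable'' as ``the precondition holds'' is also how the paper uses it, and whether the rule makes progress is a separate matter deferred to \cite[Lemma 6.14]{bridgedepth}.
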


In the following, we assume that all involved trees have at least three vertices and, consequently, at least one non-leaf vertex, as otherwise they already have bounded size. Moreover, we assume they are rooted at a non-leaf vertex; this can be done arbitrarily.

\begin{isbdrule}
    \label{rrule:isbd_type3}
    Let $T$ be a tree-of-bridges of a connected component $R' \subseteq R$.
    If $T$ contains a $T$-conflict structure $C = H_u$ of type 3, i.e., $u$ is a leaf of $T$ and of type $B$, and $C$ is $(|X| + 2)$-almost-free, then remove edge $uv$, remove its parent $v$ in $T$ from $G$, and set $t \gets t - 1$.
\end{isbdrule}

\begin{invproof}{\cref{rrule:isbd_type3}}
    Let $(G', X, t-1)$ be the instance obtained after applying the rule.
    Towards proving \cref{cond:good_decision_kernel}, take any $S \in \Sol(G, X, t)$ and note that it satisfies $|S \setminus \{u,v\}| \geq |S| - 1$ as $uv \in E(G)$; as $S \cap X$ remains unaltered, we are done.
    Furthermore, $G'$ is a minor of $G$, so \cref{inv:isbd_controlled_bd} holds.
    To prove that \cref{inv:isbd_no_lonely_r_is} holds, take $S' \in \Sol(G', X, t-1)$, assume that $S' \cap X = \emptyset$ and define $S'_W = S' \cap W$ for $W \subseteq V(G)$.
    As $S'_X = \emptyset$ and $\confg{C}{S'_X}{G} = 0$, we may take an independent set $Z \subseteq H_u$ of size $\alpha(G[C])$, which satisfies $u \in Z$ and $|Z| > |S'_C|$ as $u$ has type $B$.
    As $v \notin S'$ is the sole neighbor of $H_u$ in $R$, $Z$ is independent in $G$ and, by our previous observation, $|Z| \geq |S'| + 1 \geq t$, implying that $\alpha(R) \geq t$ and contradicting the hypothesis that \cref{inv:isbd_no_lonely_r_is} holds for the input instance.
\end{invproof}

\begin{isbdrule}
    \label{rrule:isbd_type4}
    Let $T$ be a tree-of-bridges of a connected component $R' \subseteq R$.
    If $T$ contains a $T$-conflict structure $C = H_{v_1} \cup H_{v_2}$ of type 4, i.e., $v_1$ is a leaf of $T$, $v_2$ is its parent, both are of type $B$, $N_T(v_2) = \{v_1, u\}$, and $C$ is $(|X| + 1)$-almost-free, then remove $v_2$, identify $v_1$ into $u$, and set $t \gets t - 1$.
\end{isbdrule}

\begin{invproof}{\cref{rrule:isbd_type4}}
    Let $(G', X, t-1)$ be the instance resulting from the application of the rule.
    As in \cite[Meta-Rule 5]{bridgedepth}, we will use the rule for type 2 $T$-conflict structures in our proof; we refer to \cref{fig:isbd_type4} for a visual aid on the graphs and operations involved in the proof.
    To this end, let $x,y$ be new vertices, and $G_1$ be such that $V(G_1) = V(G) \cup \{x,y\}$, $E(G_1) = E(G) \cup \{xy, yv_1\}$, and $R'_1 = R' \cup \{x,y\}$.
    Note that $xy,yv_1$ are bridges of $R_1$, so $T_1 = T \cup \{xy,yv_1\}$ is a tree-of-bridges of $R_1$.
    Now, observe that $v_1, v_2$ still have type $B$ since every connected component of $R \setminus E(T)$ is also a connected component of $R_1 \setminus E(T_1)$.
    Moreover, $\deg_{T_1}(v_1) = \deg_{T_1}(v_2)$, and so it follows that $C$ is a type 2 $T_1$-conflict structure and remains $(|X|+1)$-almost-free.
    As such, we can apply \cref{rrule:isbd_type2} to $(G_1, X, t+1)$ using $T_1$ as our tree-of-bridges to obtain $(G_2,X,t)$; in particular, we identified $v_2$ into $y$ and $v_1$ into $u$ in $G_1$ to obtain $G_2$.
    By the soundness of \cref{rrule:isbd_type2}, the subsets of $X$ that are contained in solutions of $(G_1, X, t+1)$ are preserved in $(G_2, X, t)$, and $\alpha(G_2 \setminus X) < t$.
    Now, we remove $N_{G_2}[x] = \{x,y\}$ from $G_2$ to obtain the instance $(G_3, X, k-1)$, which is equivalent to $(G_2, X, t)$.
    Observe that these operations preserve \cref{cond:good_decision_kernel,inv:isbd_no_lonely_r_is,inv:isbd_controlled_bd}, the latter following by the fact that $G_3$ is a minor of $G_2$.
    Finally, observe that $G_3$ is isomorphic to $G'$ and the proof is complete.
\end{invproof}

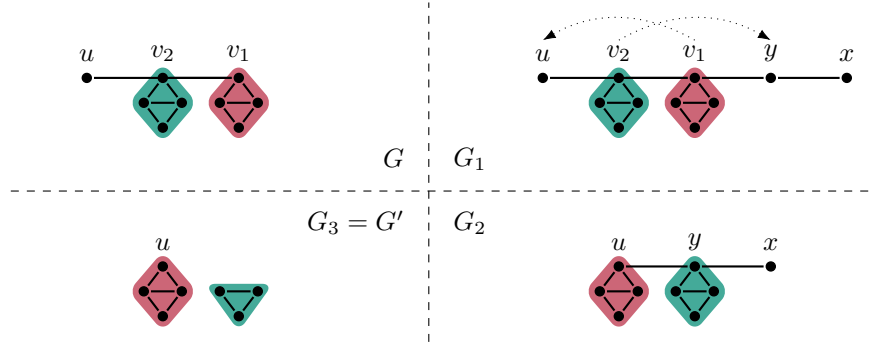
\begin{figure}[!htb]
    \centering
    \begin{tikzpicture}[scale=1]
      \GraphInit[unit=3,vstyle=Normal]
      \SetVertexNormal[Shape=circle, FillColor=black, MinSize=2pt]
      \tikzset{VertexStyle/.append style = {inner sep = \inners, outer sep = \outers}}
      \SetVertexLabelOut
      \newcommand{\gname}{g}

      \begin{scope}
          \Vertex[x=0,y=0,Math,Lpos=90,L={u}]{u_\gname}
          \renewcommand{\gname}{g}
          
          \foreach \i in {1,2} {
          \pgfmathsetmacro{\xv}{3-\i}
          \Vertex[x=\xv,y=0,Math,Lpos=90,L={v_\i}]{v\i_\gname}
            \foreach \j in {1,2,3} {
                \pgfmathsetmacro{\x}{\xv+(\j-2)*0.25}
                \pgfmathmod{\j}{2}
                \pgfmathsetmacro{\y}{-0.66+\pgfmathresult*0.33}
                \Vertex[x=\x,y=\y,NoLabel]{b\j_v\i_\gname}
            }
            \begin{scope}[on background layer]
                \pgfmathmod{\i}{2}
                \pgfmathsetmacro{\ccol}{ifthenelse(\pgfmathresult==1,"goodred","goodteal")}
                \fill[\ccol,rounded corners] ($(v\i_\gname) + (0, 0.2)$) -- ($(b1_v\i_\gname) + (-0.2,0)$) -- ($(b2_v\i_\gname) + (0, -0.2)$)  -- ($(b3_v\i_\gname) + (0.2,0)$) -- cycle;
            \end{scope}
            
            \Edges(b3_v\i_\gname,v\i_\gname,b1_v\i_\gname,b2_v\i_\gname,b3_v\i_\gname,b1_v\i_\gname)
          }
          \Edges(u_\gname,v1_\gname,v2_\gname)
      \end{scope}

      \begin{scope}[xshift=6cm]
          \renewcommand{\gname}{g1}
          \Vertex[x=0,y=0,Math,Lpos=90,L={u}]{u_\gname}
          
          \foreach \i in {1,2} {
          \pgfmathsetmacro{\xv}{3-\i}
          \Vertex[x=\xv,y=0,Math,Lpos=90,L={v_\i}]{v\i_\gname}
            \foreach \j in {1,2,3} {
                \pgfmathsetmacro{\x}{\xv+(\j-2)*0.25}
                \pgfmathmod{\j}{2}
                \pgfmathsetmacro{\y}{-0.66+\pgfmathresult*0.33}
                \Vertex[x=\x,y=\y,NoLabel]{b\j_v\i_\gname}
            }
            \begin{scope}[on background layer]
                \pgfmathmod{\i}{2}
                \pgfmathsetmacro{\ccol}{ifthenelse(\pgfmathresult==1,"goodred","goodteal")}
                \fill[\ccol,rounded corners] ($(v\i_\gname) + (0, 0.2)$) -- ($(b1_v\i_\gname) + (-0.2,0)$) -- ($(b2_v\i_\gname) + (0, -0.2)$)  -- ($(b3_v\i_\gname) + (0.2,0)$) -- cycle;
            \end{scope}
            
            \Edges(b3_v\i_\gname,v\i_\gname,b1_v\i_\gname,b2_v\i_\gname,b3_v\i_\gname,b1_v\i_\gname)
          }
          \Vertex[x=3,y=0,Math,Lpos=90,L={y}]{y_\gname}
          \Vertex[x=4,y=0,Math,Lpos=90,L={x}]{x_\gname}
          
          \Edges(u_\gname,v1_\gname,v2_\gname)
          \Edges(v1_\gname,y_\gname,x_\gname)
          \draw[Latex-,dotted] ($(y_\gname) +(0, 0.5)$) to [bend right] ($(v2_\gname) + (0,0.5)$);
          \draw[Latex-,dotted] ($(u_\gname) +(0, 0.5)$) to [bend left] ($(v1_\gname) + (0,0.5)$);
      \end{scope}
      
      \begin{scope}[xshift=6cm, yshift=-2.5cm]
          \renewcommand{\gname}{g2}
          
          \foreach \i in {1,2} {
          \pgfmathsetmacro{\xv}{3-\i}
          \pgfmathmod{\i}{2}
          \pgfmathsetmacro{\vname}{ifthenelse(\pgfmathresult==0,"u","y")}
          \Vertex[x=\xv,y=0,Math,Lpos=90,L={\vname}]{\vname_\gname}
            \foreach \j in {1,2,3} {
                \pgfmathsetmacro{\x}{\xv+(\j-2)*0.25}
                \pgfmathmod{\j}{2}
                \pgfmathsetmacro{\y}{-0.66+\pgfmathresult*0.33}
                \Vertex[x=\x,y=\y,NoLabel]{b\j_v\i_\gname}
            }
            \begin{scope}[on background layer]
                \pgfmathmod{\i}{2}
                \pgfmathsetmacro{\ccol}{ifthenelse(\pgfmathresult==0,"goodred","goodteal")}
                \fill[\ccol,rounded corners] ($(\vname_\gname) + (0, 0.2)$) -- ($(b1_v\i_\gname) + (-0.2,0)$) -- ($(b2_v\i_\gname) + (0, -0.2)$)  -- ($(b3_v\i_\gname) + (0.2,0)$) -- cycle;
            \end{scope}
            
            \Edges(b3_v\i_\gname,\vname_\gname,b1_v\i_\gname,b2_v\i_\gname,b3_v\i_\gname,b1_v\i_\gname)
          }
          \Edges(u_\gname,y_\gname)
          \Vertex[x=3,y=0,Math,Lpos=90,L={x}]{x_\gname}
          
          \Edges(y_\gname,x_\gname)
      \end{scope}
      
      \begin{scope}[yshift=-2.5cm]
          \renewcommand{\gname}{g3}
          
          \foreach \i in {1,2} {
            \pgfmathsetmacro{\xv}{3-\i}
            \foreach \j in {1,2,3} {
                \pgfmathsetmacro{\x}{\xv+(\j-2)*0.25}
                \pgfmathmod{\j}{2}
                \pgfmathsetmacro{\y}{-0.66+\pgfmathresult*0.33}
                \Vertex[x=\x,y=\y,NoLabel]{b\j_v\i_\gname}
            } 
            \Edges(b1_v\i_\gname,b2_v\i_\gname,b3_v\i_\gname,b1_v\i_\gname)
          }
          
          \foreach \i in {2} {
              \pgfmathsetmacro{\xv}{3-\i}
              \pgfmathmod{\i}{2}
              \pgfmathsetmacro{\vname}{ifthenelse(\pgfmathresult==0,"u","y")}
              \Vertex[x=\xv,y=0,Math,Lpos=90,L={\vname}]{\vname_\gname}
              \Edges(b3_v\i_\gname,u_\gname, b1_v\i_\gname)
          }
        \begin{scope}[on background layer]
        
          \pgfmathsetmacro{\i}{1}
          \fill[goodteal,rounded corners] ($(b1_v\i_\gname) + (-0.2,0.1)$) -- ($(b2_v\i_\gname) + (0, -0.2)$)  -- ($(b3_v\i_\gname) + (0.2,0.1)$) -- cycle;
          \pgfmathsetmacro{\i}{2}
          \fill[goodred,rounded corners] ($(u_\gname) + (0, 0.2)$) -- ($(b1_v\i_\gname) + (-0.2,0)$) -- ($(b2_v\i_\gname) + (0, -0.2)$)  -- ($(b3_v\i_\gname) + (0.2,0)$) -- cycle;
        \end{scope}
      \end{scope}

      \draw[dashed] (-1, -1.5) -- (10.5,-1.5);
      \draw[dashed] (4.5, -3.5) -- (4.5,1);

      \begin{scope}[xshift=4.5cm,yshift=-1.5cm]
          \node[anchor=south east] at (-0.2,0.2) {$G$};
          \node[anchor=south west] at (0.2,0.15) {$G_1$};
          \node[anchor=north west] at (0.2,-0.15) {$G_2$};
          \node[anchor=north east] at (-0.2,-0.13) {$G_3 = G'$};
      \end{scope}
    \end{tikzpicture}
    \caption{Example of the graphs employed in the safeness proof of \cref{rrule:isbd_type4}. Each differently shaded set of four vertices represents one $H_{v_i}$; note that each $v_i$ is of type $B$ as $H_{v_i} \setminus \{v_i\}$ is complete but $\alpha(G[H_{v_i}]) = 2$. The dotted arrows represent the identification operation performed by \cref{rrule:isbd_type2}, which is applied as a black box in the proof of \cref{rrule:isbd_type4}.\label{fig:isbd_type4}}
\end{figure}

\begin{lemma}[\cite{bridgedepth}, Lemma 6.21]
    \label{lem:isbd_bounded_tree}
    Let $T$ be a lowering tree of a connected component $R'$ of $R$. If \cref{rrule:isbd_type1,rrule:isbd_type2,rrule:isbd_type3,rrule:isbd_type4,rrule:isbd_back-n-forth} are not applicable to $T$, then $|V(T)| \in \bigO{|\X|^5\cdot|X|^5}$.
\end{lemma}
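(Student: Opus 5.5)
The plan is to bound $|V(T)|$ by a polynomial in the number of leaves of $T$, its maximum degree, and its diameter, rather than by the exponential bound $\Delta^{D}$ that a general tree would give. Write $L$ for the set of leaves of $T$, $B_3$ for the vertices of $T$-degree at least three, and $D_2$ for the vertices of degree two. Then $|B_3| < |L|$. Moreover, $D_2$ splits into at most $|L| + |B_3|$ maximal paths of degree-two vertices, and each such path has length at most the diameter of $T$. Since \cref{rrule:isbd_type1,rrule:isbd_type2} are not applicable, \cref{lem:isbd_bounded_diam} bounds the diameter by $\bigO{|\X|\cdot|X|}$, and \cref{obs:isbd_t_deg_bound} bounds the maximum degree by $3|\X|\cdot|X|$. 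It therefore remains to show $|L| \in \bigO{|\X|^{a}|X|^{a}}$ for a suitable constant $a$, giving
\[
  |V(T)| \le |L| + |B_3| + (|L|+|B_3|)\cdot \bigO{|\X||X|}.
\]
With room to spare in the exponents, the target $\bigO{|\X|^5|X|^5}$ should come out of this product.

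To bound $|L|$, I would use a charging argument. Take a leaf $u$ of $T$ with parent $v$.
\begin{itemize}
  \item If $u$ or $v$ has type $A$, then $H_u \cup H_v$ is a type 1 structure. Because \cref{rrule:isbd_type1} is not applicable, this structure is not $(|X|+2)$-almost-free.
  \item If $u$ has type $B$, then $H_u$ is a type 3 structure. Because \cref{rrule:isbd_type3} is not applicable, it is not $(|X|+2)$-almost-free.
\end{itemize}
(The type 4 structures, handled by \cref{rrule:isbd_type4}, cover the remaining leaf configurations where the parent has degree two.) In every case we obtain a chunk $Y$ with $\conf{C}{Y}>0$ on a structure $C$ containing $H_u$, and with $\conf{R}{Y}\le |X|+1$. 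Charge $u$ to such a chunk $Y$.

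The heart of the proof is a superadditivity claim. Suppose a fixed chunk $Y$ has positive conflicts on pairwise disjoint conflict structures $C_1,\dots,C_m$ hanging off the tree-of-bridges. Then I want $\conf{R}{Y} \ge m/c_0 - \bigO{1}$ for a constant $c_0$. The idea is that a maximum independent set of $R$ can be assembled structure by structure: pending components only touch $T$ through their roots, and type $A$/$B$ bookkeeping controls the interaction of each root with its tree neighbours. Hence each structure on which $Y$ creates a conflict loses at least one vertex that cannot be recovered elsewhere, up to a constant overlap factor coming from the tree edges shared between neighbouring structures. Given this claim, each chunk is charged at most $\bigO{|X|}$ times. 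Summing over chunks gives $|L| \in \bigO{|\X|\cdot|X|}$. Even if the argument only works after grouping leaves by their parent, the degree bound from \cref{obs:isbd_t_deg_bound} costs at most one extra factor of $|\X||X|$.

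I expect this superadditivity step to be the main obstacle. Conflicts are defined through the global parameter $\alpha$, and a chunk's conflict on one structure can in principle be compensated by a better choice in an adjacent part of $T$. To handle this, I would first select a subfamily of charged structures that are pairwise non-adjacent in $T$; in a tree this loses only a constant factor, for instance by keeping every other leaf-parent pair. For such a non-adjacent family the separating tree vertices let $\alpha$ decompose exactly, and I would verify the decomposition using the type definitions of \cref{def:isbd_vertex_typing}. The same charging idea, applied to consecutive degree-two vertices via type 1 and type 2 structures, would give an alternative bound on $|D_2|$ if the diameter-based count turns out too weak.
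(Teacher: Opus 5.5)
\textbf{There is a genuine gap.} Your reduction of $|V(T)|$ to the number of leaves, via \cref{lem:isbd_bounded_diam} and \cref{obs:isbd_t_deg_bound}, is fine. The bound on $|L|$, however, rests on a superadditivity claim that is false. Worse, no argument built only on the facts you use (the diameter bound, the degree bound, and inapplicability of the type-1 and type-3 rules) can bound $|L|$ polynomially.

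\emph{Counterexample.} Let $\tau$ be a tree whose internal vertices all have degree at least $3$. Let $R'=T$ be $\tau$ with every edge subdivided once, and let $x\in X$ be adjacent exactly to the leaf set $\Lambda$ of $T$.
\begin{itemize}
\item All pending components are single vertices, so every vertex has type~$B$. Hence there is no type-1 structure.
\item There is no type-2 structure either: the degree-2 vertices are the subdivision vertices, which are pairwise non-adjacent.
\item Every leaf $u$ gives a type-3 structure $\{u\}$ with $\conf{\{u\}}{\{x\}}=1$.
\item Nevertheless $\conf{R}{\{x\}}\le 1$. Indeed, $T$ has a matching saturating its $|V(\tau)|-1$ subdivision vertices, so $\alpha(T)\le|V(\tau)|$. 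Meanwhile the subdivision vertices form an independent set of $T\setminus\Lambda$.
\end{itemize}
So a single chunk with conflict at most $1<|X|+2$ witnesses that none of the $|\Lambda|$ type-3 structures is $(|X|+2)$-almost-free. These structures are pairwise disjoint and pairwise at distance at least $4$ in $T$.

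Now take $\tau$ binary-branching of depth about $|X|$, padding $X$ with isolated vertices. This gives maximum degree $3$, diameter $\bigO{|X|}$, and $2^{\Theta(|X|)}$ leaves. Every hypothesis your charging uses holds, yet $|L|$ is not polynomial. Your ``exact decomposition through separating tree vertices'' fails for a concrete reason: a lost leaf is repaid by switching its parent into the solution. This switch propagates up the subdivided tree at a total cost of one.

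\textbf{What is missing.} Conflicts on a structure $C$ are guaranteed to transfer to $R'$ when some maximum independent set of $C$ avoids every vertex of $C$ that has a tree-neighbour outside $C$. In that case $\alpha(R')=\alpha(C)+\alpha(R'\setminus C)$, and $\conf{R'}{Y}\ge\sum_i\conf{C_i}{Y}$ over disjoint such $C_i$.
\begin{itemize}
\item Type-4 pairs have this property: take the leaf $v_1$ and drop $v_2$.
\item A type-3 structure $H_u$ does not, since its optimum must use $u$, which is adjacent to the parent outside $H_u$.
\item A type-1 pair consisting of a type-$A$ leaf under a type-$B$ parent does not either. For example, let $H_\ell$ be a triangle through the leaf $\ell$, let $T$ be the path $\ell v w$ with $v,w$ single vertices, and let $x$ be adjacent to $v$. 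The conflict on $H_\ell\cup H_v$ is absorbed by $w$.
\end{itemize}

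This is why \cref{alg:isbd_compression} prunes $T$ to $T_1$ and tests \cref{rrule:isbd_type4}. You treat type~4 as redundant; by your own case split, types~1 and~3 already cover every leaf. In the example above, each leaf $u$ and its degree-2 parent form a type-4 structure on which $\{x\}$ has no conflict. So it is exactly this rule that excludes the example.

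Two further remarks. First, in the algorithm the leaf rules are only known to be inapplicable on $T_1$, and the type-1/2 rules on the longest path $T'$, not on $T$ itself. Second, the paper gives no proof of this lemma; it imports it from~\cite{bridgedepth}. A correct argument has to organise the leaf count around families where this decoupling holds, with the degree and diameter bounds absorbing the remaining leaves.
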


We compress our input instance with \cref{alg:isbd_compression}, which has been taken directly from~\cite{bridgedepth}.
Note that the modulator is enlarged at each recursive call. This is not a problem for our framework, as our core, the initial modulator $X$, is kept intact throughout the process. Moreover, by the discussion at the end of \cref{sec:framework}, specifically \cref{lem:condition_one_composes,obs:good_core_subset}, by proving that all traces of the enlarged modulator are preserved, immediately prove that all traces of our core are preserved.

\begin{algorithm}[!htb]
    \caption{Compression algorithm for \pname{Independent Set} parameterized by vertex-deletion distance to $c$-bridgedepth.}\label{alg:isbd_compression}
    \begin{algorithmic}[1]
        \Require An instance $(G, X, t)$ of \pname{Enum Independent Set} parameterized by $|X|$, with $X$ the modulator to $c$-bridgedepth, and a fixed integer $c$.
        \Ensure An equivalent instance $(G', X', t')$ of \pname{Enum Independent Set} with $|V(G')| \in \bigO{\poly(|X|)}$.
        
        \If{$c = 0$}
            \State \Return $(G, X, t)$.
        \EndIf
        \State If \cref{rrule:isbd_comp_removal} is applicable to $(G,X,t)$, apply it and go to Line 1.
        \State If \cref{rrule:isbd_many_confs} is applicable to $(G,X,t)$, apply it and go to Line 1.
        \State \Comment{$R$ now has a bounded number of connected components by \cref{lem:isbd_bounded_comp_count}.}
        \State Set $X_1 \gets \emptyset$.
        \ForEach{connected component $R'$ of $R$}
            \State Compute a lowering tree $T$ of $R'$.
            \State Compute a longest path $T'$ of $T$.
            \State If \cref{rrule:isbd_type1} is applicable to $((G,X,t), T')$, apply it and go to Line 1.
            \State If \cref{rrule:isbd_type2} is applicable to $((G,X,t), T')$, apply it and go to Line 1.
            \State \Comment{The diameter of $T$ is now bounded by \cref{lem:isbd_bounded_diam}.}
            \State If \cref{rrule:isbd_back-n-forth} is applicable to $((G,X,t), T)$, apply it and go to Line 1.
            \State \Comment{The maximum degree of $T$ is now bounded by \cref{obs:isbd_t_deg_bound}.}
            \State Obtain $T_1$ by removing from $T$ all type $A$ leaves with a type $B$ parent.
            \State If \cref{rrule:isbd_type1} is applicable to $((G,X,t), T_1)$, apply it and go to Line 1.
            \State If \cref{rrule:isbd_type3} is applicable to $((G,X,t), T_1)$, apply it and go to Line 1.
            \State If \cref{rrule:isbd_type4} is applicable to $((G,X,t), T_1)$, apply it and go to Line 1.
            \State \Comment{$|V(T)|$ is now bounded by \cref{lem:isbd_bounded_tree}.}
            \State $X_1 \gets X_1 \cup V(T)$.
        \EndFor
        \State \Comment{$X_1$ is bounded by \cref{lem:isbd_bounded_comp_count,lem:isbd_bounded_tree}.}
        \State \Return \cref{alg:isbd_compression}$(G, X \cup X_1, t, c-1)$.
    \end{algorithmic}
\end{algorithm}

\begin{lemma}[\cite{bridgedepth}, Theorem 6.22]
    \label{lem:isbd_compressed}
    Let $(G,X, t)$ and $(G', X', t)$ be the input and output instances of \cref{alg:isbd_compression}, respectively, with the input having total size $n$.
    It holds that $(G', X', t')$ can be computed in $(g_1(c)\cdot n^{g_2(c)})$-time and has size bounded by $\bigO{h(c)\cdot|X|^{2^{f(c)}}}$, with $f(c) \in \bigO{c^2}$ and $g_1,g_2,h$ being computable functions.
\end{lemma}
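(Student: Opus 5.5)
The running time and size bounds are exactly those of the decision version, so I would take them from~\cite[Theorem 6.22]{bridgedepth}. The modifications made here do not affect these bounds. \cref{rrule:isbd_easy} is applied only once, at the very start. If it fires, it returns $(G[X],X,0)$, whose size is polynomial in $|X|$. Every other rule of \cref{alg:isbd_compression} is literally the corresponding rule of~\cite{bridgedepth}, applied under the same applicability conditions. So the sequence of operations performed by the algorithm is identical to theirs, and the bounds carry over verbatim.

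The plan is an induction on $c$ that follows the recursion of \cref{alg:isbd_compression}.

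\textbf{Running time.} At a fixed level $c$, each rule either deletes vertices or edges, or identifies vertices. \cref{rrule:isbd_back-n-forth} is an exception, since it temporarily enlarges the modulator, but it still ends by removing vertices or edges. Hence the number of "go to Line 1" restarts is polynomial in $n$. Each rule's test runs in polynomial time: the $\Conf$ values and almost-freeness are computable by~\cite[Lemma 6.3]{bridgedepth}, and lowering trees and pending-component types are computable because $R$ has constant treewidth (\cref{lem:bd_properties}). This yields a $g_1(c)\cdot n^{g_2(c)}$ time bound once we account for the $c$ levels of recursion.

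\textbf{Size.} When the loop finishes, \cref{lem:isbd_bounded_comp_count} gives at most $|\X|\cdot|X|$ components, and \cref{lem:isbd_bounded_tree} gives $|V(T)|\in\bigO{|\X|^5|X|^5}$ for each chosen lowering tree. Since $|\X|\le |X|^{2^c}$, the new modulator satisfies
\[
|X\cup X_1|\le |X|+\bigO{|X|^{6\cdot 2^c+6}}.
\]
Moving each lowering tree into the modulator decreases the bridgedepth of the remainder by one, using \cref{lem:bd_properties}(iv) together with \cref{inv:isbd_controlled_bd}. So the recursive call is made with $c-1$. Unrolling the recursion $c$ times, the exponent multiplies at each level by roughly $2^{\bigO{c}}$, so the final exponent is $2^{\bigO{c^2}}$, which is the claimed $f(c)\in\bigO{c^2}$. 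At $c=0$ we have $V(G)=X'$, so the size is polynomial in $|X'|$.

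\textbf{Main obstacle.} The delicate step is the bridgedepth decrease after recursion. The rules act on $G\setminus X$ and must not increase its bridgedepth, which is exactly \cref{inv:isbd_controlled_bd}. We also need that the tree placed into $X_1$ is still a lowering tree of the final component. Both hold because every rule ends with a restart from Line 1, and the tree is recomputed on the current graph. I would defer the detailed accounting to~\cite{bridgedepth}.
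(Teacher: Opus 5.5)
Your proposal is correct and follows the paper's route. The paper gives no separate proof: it imports the bounds directly from \cite[Theorem 6.22]{bridgedepth}, since \cref{alg:isbd_compression} is taken verbatim from there and \cref{rrule:isbd_easy} acts only as a one-time preliminary check. Your accounting of the progress measure, the per-level blow-up $|X|^{\bigO{2^c}}$, and the resulting $2^{\bigO{c^2}}$ exponent is a faithful unpacking of that citation.
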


Stringing together the soundness proofs of each of our reduction rules, we immediately have the following lemma.

\begin{lemma}
    \label{lem:isbd_all_x_traces}
    \cref{cond:good_decision_kernel} holds:
    for every good trace $Y \subseteq X$, i.e., such that there is some $S \in \Sol(G, X, t)$ with $S \cap X = Y$, there exists $S' \in \Sol(G', X', t')$ with $S' \cap X = Y$.
\end{lemma}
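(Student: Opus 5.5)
The plan is to prove \cref{lem:isbd_all_x_traces} by induction along the execution of \cref{alg:isbd_compression}. We view the run as a sequence of elementary steps $(G_0,X_0,t_0)=(G,X,t),\dots,(G_m,X_m,t_m)=(G',X',t')$. Each step is one of three kinds: an application of one of \cref{rrule:isbd_comp_removal,rrule:isbd_many_confs,rrule:isbd_type1,rrule:isbd_type2,rrule:isbd_back-n-forth,rrule:isbd_type3,rrule:isbd_type4}; the enlargement $X_i \gets X_i \cup X_1$ performed before the recursive call; or the base case $c=0$, which returns its input unchanged. The invariant carried along is the following: for every $S \in \Sol(G_i,X_i,t_i)$ there is $S' \in \Sol(G_{i+1},X_{i+1},t_{i+1})$ with $S'\cap (X_i\cap X_{i+1}) = S \cap (X_i\cap X_{i+1})$. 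Since $X \subseteq X_i$ for all $i$, this always includes $S\cap X = S'\cap X$. This is exactly the soundness of each step with respect to the core $(X_i\cap X_{i+1},\id)$, as set up at the end of \cref{sec:framework}.

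The first step is to collect per-step soundness. For each reduction rule, the soundness proofs given above already establish precisely this forward statement with $X_{i+1}=X_i$: every solution maps to a solution with the same intersection with the modulator. For \cref{rrule:isbd_back-n-forth}, the soundness proof already handles the temporary enlargement to $X\cup\{v\}$ and its restriction back to $X$. For the enlargement step, the graph and $t$ are unchanged, so $\Sol$ is unchanged and we take $S'=S$; the trace on the old modulator $X_i = X_i\cap X_{i+1}$ is trivially preserved.

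The second step composes these. By \cref{obs:good_core_subset}, each step remains sound when its core is restricted to $(X,\id)$, because $X \subseteq X_i\cap X_{i+1}$ throughout. The algorithm never removes vertices of the original modulator except via \cref{rrule:isbd_many_confs}, and that rule only deletes edges. Repeated application of \cref{lem:condition_one_composes} then yields soundness of the whole composition with core $(X,\id)$. This is possible because the identity maps are compatible: the image of the core of each later step contains the core of the earlier one.

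The step needing most care is making sure the rules' soundness proofs, which sometimes rely on $\alpha(G_i[R_i])<t_i$, are applicable at every stage. This is handled by \cref{inv:isbd_no_lonely_r_is}, which each rule preserves. We also check that the enlargement step preserves it: $R$ only shrinks while $t$ is fixed, so $\alpha$ can only drop. Likewise, the recursion on $c-1$ inherits the invariant, so \cref{rrule:isbd_easy} is never triggered after the first step. If \cref{rrule:isbd_easy} does fire at the start, its soundness is immediate, since $S\cap X$ is independent in $G[X]$ and is therefore a solution of $(G[X],X,0)$.

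Combining these steps gives the lemma by induction on $m$.
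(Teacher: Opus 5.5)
Your proposal is correct and follows essentially the paper's route: the paper also obtains the lemma by stringing together the per-rule soundness proofs, via \cref{lem:condition_one_composes} and \cref{obs:good_core_subset}, with the original modulator $X$ as a common core kept intact while the modulator is enlarged, and with \cref{inv:isbd_no_lonely_r_is} keeping the rules' soundness arguments valid throughout. Your explicit handling of the enlargement step, the base case, and a first-step application of \cref{rrule:isbd_easy} simply spells out details the paper leaves implicit.
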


\subsection{Lifting algorithm}
\label{sec:isbd_lifting}

All that remains now is to devise a lifting algorithm for \pname{Enum Independent Set} that takes us from $\Sol(G', X, t')$ to $\Sol(G', X',  t')$ with polynomial delay.
We proceed much like in \cref{sec:isfvs_lex_lifting}, and so do not repeat the discussions given there.
At first, it seems that we must be more careful as we augmented $X$ in \cref{alg:isbd_compression} to obtain $X'$; however, the following observation tells us that this is not the case.

\begin{observation}
    \label{obs:isbd_bounded_x_bd}
    It holds that $G[X' \setminus X]$ has bridgedepth at most $c$.
\end{observation}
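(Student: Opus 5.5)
Here $X$ is the initial modulator (the core) and $X'$ is the modulator output by \cref{alg:isbd_compression}. The plan is to show that the vertices added to the modulator, namely $X' \setminus X$, induce a graph of bridgedepth at most $c$. I would argue by induction on the recursion depth of \cref{alg:isbd_compression}.

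Fix a recursive call on $(G_i, X_i, t_i)$, where $G_i \setminus X_i$ has bridgedepth at most $c_i$. Each vertex the call adds to the modulator lies in the set $X_1$, which is a union $\bigcup V(T)$. Here $T$ ranges over lowering trees, one per connected component $R'$ of $G_i \setminus X_i$. Two facts do most of the work:
\begin{itemize}
\item Distinct components $R'$ are pairwise non-adjacent in $G_i \setminus X_i$, so $G_i[X_1]$ is the disjoint union of the trees $G_i[V(T)]$.
\item Each $T$ is an induced tree-of-bridges, so $G_i[V(T)]$ is a forest. By \cref{lem:bd_properties}(i), every level therefore contributes an induced forest.
\end{itemize}

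To combine levels, I would use a sequence of lowering trees. Let $Z_j$ be the set added at recursion level $j$, for $j = 1, \dots, c$. For $j \geq 2$, $Z_j$ lies in $G_{j} \setminus X_{j} \subseteq G_{j-1} \setminus X_{j-1} \setminus Z_{j-1}$; the rules applied in between only delete vertices or edges, identify vertices, or remove structures, which is where \cref{inv:isbd_controlled_bd} enters. So every component of $G[\bigcup_{j \geq i} Z_j]$ can be peeled by removing the trees of $Z_i$ first, then those of $Z_{i+1}$, and so on. The induction hypothesis is that $\bd{G[\bigcup_{j\ge i} Z_j]} \le c-i+1$. For the step, the aim is to show that in each connected component $D$ of $G[\bigcup_{j \geq i} Z_j]$, the part $D \cap Z_i$ is contained in a single tree-of-bridges of $D$. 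It lies inside one lowering tree $T$ of a component of $G_i\setminus X_i$, and $D \setminus Z_i$ lies in the remainder of that component. Once this holds, \cref{lem:bd_properties}(iv) gives one unit of bridgedepth per level, and \cref{lem:bd_properties}(iii) handles the passage to subgraphs. Summing over the at most $c$ levels (the recursion stops at $c=0$) yields $\bd{G[X'\setminus X]} \le c$.

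The main obstacle is that the final graph $G$ differs from the intermediate $G_i$. Rules applied at deeper levels (type 2 and type 4 identifications, and \cref{rrule:isbd_back-n-forth}) may merge vertices, so an edge of $T$ need not remain a bridge in the final induced subgraph. My first attempt would be a cleaner fallback: $G[X' \setminus X]$ is (isomorphic to) a subgraph, hence a minor, of $G^\star \setminus X$, where $G^\star \setminus X$ is the input instance with bridgedepth at most $c$. To justify this, I would argue that every operation performed outside $X$ is a minor operation (deletion, contraction of an identified pair along a path, edge removal). This follows the same reasoning used for \cref{inv:isbd_controlled_bd} in the soundness proofs of \cref{rrule:isbd_type2,rrule:isbd_type4}: an identification $v_i \to u_i$ was realised there by modifying the lowering-tree sequence. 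Since $X$ itself is never modified outside the vertex removals of \cref{rrule:isbd_many_confs}-type steps, the restriction to $V(G)\setminus X$ is a minor of $G^\star\setminus X$. Minor-closedness (\cref{lem:bd_properties}(iii)) then finishes the proof.

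The delicate point in this fallback is that identifications in the type 2 rule are not literally contractions of an edge. I would handle it as in the \cref{inv:isbd_controlled_bd} argument: replace the affected lowering tree and show that the bridgedepth bound survives, rather than claiming a minor relation outright.
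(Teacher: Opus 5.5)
Your first route is the paper's route. Its proof just notes that $V(G')=X'$ and that, with \cref{inv:isbd_controlled_bd} in force, each pass of \cref{alg:isbd_compression} moves one lowering tree per component of $R$ into the modulator. Hence $G'[X'\setminus X]$ is a stack of at most $c$ layers of such trees.

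The gap is that you never establish the one step this needs, namely that every edge of a level-$i$ tree $T$ is still a bridge of the component $D$ of $G'[\bigcup_{j\ge i}Z_j]$ containing it. Instead you suggest it may fail. It does not fail, for the following reasons:
\begin{enumerate}
\item No rule applied after $T$ enters the modulator touches $G[X_{i+1}]$. Every rule deletes or identifies vertices of the current $R$, deletes edges inside $R$, or deletes edges between the modulator and $R$. This includes the temporary modulator $X_{i+1}\cup\{v\}$ of \cref{rrule:isbd_back-n-forth}.
\item The identifications of \cref{rrule:isbd_type2,rrule:isbd_type4} merge two vertices of one component of the current $R$, and no rule ever creates an edge between distinct components.
\item When $T$ is moved, each component $K$ of $R'\setminus V(T)$ lies in some $H_w\setminus\{w\}$. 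Since all edges of $T$ are bridges, each $H_w$ contains only the vertex $w$ of $T$, so $K$ has no neighbor in $V(T)\setminus\{w\}$. Deletions and intra-component identifications preserve this.
\end{enumerate}
Hence, in the final graph, if $D$ meets $Z_i$ then $D\cap Z_i=V(T)$ for a single tree $T$, and $G'[V(T)]=T$. Each component of $D\setminus V(T)$ attaches to $T$ at one vertex only. So $T$ is a tree-of-bridges of $D$, and \cref{lem:bd_properties} gives $\bd{D}\le 1+\bd{D\setminus V(T)}$, which completes your induction.

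The fallback should be discarded, for two reasons. First, \cref{rrule:isbd_type2} identifies $v_1$ into $u_1$ and $v_2$ into $u_2$ along $\langle u_2,v_1,v_2,u_1\rangle$. These are vertices at distance two, so the pending parts $H_{v_1}$ and $H_{v_2}$ change sides, and the output need not be a minor of the input, let alone a subgraph. Second, your repair, arguing ``as in \cref{inv:isbd_controlled_bd}'', does not help. That invariant compares $\bd{G\setminus X_i}$ for the modulator $X_i$ of the current recursion level, not $\bd{G\setminus X}$ for the original $X$. It therefore cannot be chained into $\bd{G'\setminus X}\le\bd{G^\star\setminus X}$. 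Controlling the vertices already moved into $X_i$ is exactly what the layered argument above does.
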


\begin{proof}
    Note that $V(G') = X'$ and that $G'[X' \setminus X]$ is a graph of bridgedepth at most $c$ by \cref{inv:isbd_controlled_bd} and, in each pass of Line 24 of \cref{alg:isbd_compression}, we essentially move to $X$ a lowering tree of each component of $R$.
\end{proof}

\begin{definition}
    \label{def:isbd_base_y}
    Let $\sigma$ be an arbitrary but fixed ordering of $V(G)$.
    Given a good trace $Y \subseteq X$, we say that $B_Y \in \Sol(G', X', t')$ is the \emph{canonical solution for $Y$} if $Y = B_Y \cap X$, and $S'_Y$ is the lexicographically smallest solution of $(G', X', t')$ with respect to $\sigma$ whose intersection with $X$ is $Y$.
\end{definition}

\begin{lemma}
    \label{lem:isbd_base_y_check}
    Our kernel respects \cref{cond:decidable_trace}:
    there is a polynomial-time algorithm that, given $(G, X, t)$, $(G', X', t')$, $Y \subseteq X$, and $\sigma$: (i) decides if $Y$ is a good trace, and (ii) computes the canonical solution $B_Y \in \Sol(G', X', t')$ for $Y$.
\end{lemma}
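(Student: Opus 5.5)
We follow the same two-step scheme as \cref{lem:fvs_lift_filter}. First, we treat the case where \cref{rrule:isbd_easy} was applied. Then $V(G') = X'$, $t' = 0$ and $X' = X$, so every independent $Y \subseteq X$ is the trace of exactly one solution of $(G', X', t')$, namely $Y$ itself. In this case $B_Y = Y$. Goodness of $Y$ amounts to asking whether $G \setminus (X \cup N_G(Y))$ has an independent set of size at least $t - |Y|$. The graph $G \setminus X$ has bridgedepth at most $c$, hence treewidth at most $c$ by \cref{lem:bd_properties}, and the class of graphs of bridgedepth at most $c$ is hereditary because bridgedepth is minor-closed. So \pname{Independent Set} is polynomial on it, and the check is a single polynomial-time call.

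Now suppose \cref{rrule:isbd_easy} was not applied. For item (i), $Y$ is good exactly when $Y$ is independent in $G$ and $\alpha(G \setminus (X \cup N_G(Y))) \geq t - |Y|$. Any solution $S$ with $S \cap X = Y$ has the form $Y \cup S_0$ with $S_0$ independent in $G \setminus (X \cup N_G(Y))$, and conversely every such $S_0$ of size at least $t-|Y|$ gives a solution. This is again a polynomial-time computation on a graph of constant treewidth, using \cref{lem:bd_properties} and \cref{thm:generic_lifting} (or just the treewidth dynamic program). The input $(G,X,t)$ is available to the algorithm, so no information from the compression is needed.

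For item (ii), we need the lexicographically smallest $S' \in \Sol(G', X', t')$ with $S' \cap X = Y$. Here $V(G') = X'$, and by \cref{obs:isbd_bounded_x_bd} the graph $G'[X' \setminus X]$ has bridgedepth at most $c$. Remove $X \cup N_{G'}(Y)$ from $G'$; the remaining graph is an induced subgraph of $G'[X' \setminus X]$, so it lies in the hereditary class of graphs of bridgedepth at most $c$, where \pname{Independent Set} is polynomial by bounded treewidth. We apply \cref{thm:generic_lifting} to $(G' \setminus (X \cup N_{G'}(Y)), t' - |Y|)$ with the ordering $\sigma$ restricted to this graph, and stop at the first output $Z$.

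We then set $B_Y = Y \cup Z$. By the lexicographic guarantee of \cref{thm:generic_lifting}, and since every solution with trace $Y$ restricts to $Y$ on $X$, $B_Y$ is the lexicographically smallest solution with trace $Y$. Comparing lexicographic order on $V(G')$ against order on $V(G') \setminus X$ is harmless because all candidates agree on $X$; we would double-check this point against the definition of the order, as in \cref{lem:fvs_lift_filter}. By \cref{lem:isbd_all_x_traces}, a solution with trace $Y$ exists whenever $Y$ is good, so $Z$ exists.

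The choosing algorithm $\A_c$ required by \cref{cond:decidable_trace} then works as follows on a given $S' \in \Sol(G', X', t')$:
\begin{enumerate}
\item It sets $Y = S' \cap X$.
\item It tests whether $Y$ is good.
\item It computes $B_Y$.
\item It answers \YES{} if and only if $S' = B_Y$.
\end{enumerate}
Uniqueness of the canonical solution follows directly from this construction.

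The main obstacle is making sure \cref{thm:generic_lifting} applies in $G'$, that is, that $G' \setminus X$ really lies in a hereditary class with polynomial-time \pname{Independent Set}. This relies on \cref{obs:isbd_bounded_x_bd}, whose justification comes from \cref{inv:isbd_controlled_bd}, together with bridgedepth bounding treewidth. One must also not confuse goodness with respect to $G$ with existence of a solution in $G'$; by \cref{lem:isbd_all_x_traces} the former implies the latter, which is all that is needed.
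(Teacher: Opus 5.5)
Your proposal follows the paper's proof. Goodness of $Y$ is decided by an \pname{Independent Set} computation on $G\setminus(X\cup N_G(Y))$ with threshold $t-|Y|$, and $B_Y$ is $Y$ together with the first output of \cref{thm:generic_lifting} on $G'\setminus(X\cup N(Y))$. Your additions make explicit what the paper leaves implicit: the case where \cref{rrule:isbd_easy} fired, the threshold $t'-|Y|$ where the paper writes $t-|Y|$, and the appeal to \cref{obs:isbd_bounded_x_bd} to justify running \cref{thm:generic_lifting} in $G'$.

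On the point you wanted to double-check, your claim that restricting the order to $V(G')\setminus X$ is harmless is false, because of the prefix clause of the lexicographic order. For example, take $\sigma=(1,2,5,\dots)$, $Y=\{5\}$, $t'-|Y|=1$, and suppose both $\{1,5\}$ and $\{1,2,5\}$ are solutions. The first output gives $Z=\{1\}$, yet $\{1,2,5\}$ precedes $\{1,5\}$ in the full order. The paper makes the same unqualified claim. This does not break \cref{cond:decidable_trace}: your $\A_c$ recomputes the same deterministic $B_Y$ and compares against it, so one can simply declare $Y\cup Z$ to be the canonical solution.
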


\begin{proof}
    For the first claim, using \cref{thm:generic_lifting} and the fact that $G \setminus X$ has bounded treewidth, we can check in polynomial time if $(G \setminus (X \cup N_{G}(Y)), t - |Y|)$ has a solution $S$ and, consequently, if $Y$ is viable.
    For the second claim, let us apply \cref{thm:generic_lifting} again, but this time to the instance $(G' \setminus (X \cup N(Y)), t - |Y|)$.
    Let $Z$ be its first output; observe that $B_Y = Y \cup Z$ is the lexicographically smallest solution of $(G', X', t')$ with $B_Y \cap X = Y$, and so it is the canonical solution for $Y$ solution.
\end{proof}

\begin{lemma}
    \label{lem:isbd_lex_lifting}
    \cref{cond:poly_delay} is satisfied:
    there is a polynomial-delay algorithm that, given $(G, X, t)$, $(G', X', t')$, and a good trace $Y \subseteq X$, outputs $\Lift(Y) = \{S \in \Sol(G, X, t) \mid S \cap X = Y\}$.
\end{lemma}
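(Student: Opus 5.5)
The plan is to enumerate $\Lift(Y)$ directly in the original instance $(G,X,t)$, in the same way as the feedback vertex set lifting of \cref{lem:fvs_lex_lifiting}. We do not replay the reduction rules of \cref{alg:isbd_compression}. This is possible because \cref{cond:poly_delay} only asks for the set $\{S \in \Sol(G,X,t) \mid S\cap X = Y\}$, and the compressed instance $(G',X',t')$ is only used to certify that $Y$ is good.

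First, given the good trace $Y\subseteq X$, I would set $R_Y = G\setminus (X \cup N_G(Y))$. Then
\[
\Lift(Y) = \{\, Y \cup Z \mid Z \text{ independent in } R_Y,\ |Z| \geq t-|Y| \,\},
\]
and this correspondence is a bijection. Indeed, any $S$ with $S\cap X = Y$ is $Y \cup (S\setminus X)$ with $S\setminus X$ avoiding $N(Y)$. Conversely, such a $Y\cup Z$ is independent, since $Y$ is independent (being a trace of a solution) and $Z$ avoids $N(Y)$ and $X$. Its size is at least $t$, and its intersection with $X$ is exactly $Y$. So the task reduces to enumerating \pname{Enum Independent Set} on the instance $(R_Y, t-|Y|)$.

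Second, I would invoke \cref{thm:generic_lifting} with $\mathcal G$ the class of graphs of bridgedepth at most $c$. This class is hereditary because bridgedepth is minor-closed (\cref{lem:bd_properties}(iii)). $R_Y$ is an induced subgraph of $G\setminus X$, which has bridgedepth at most $c$, so $R_Y\in\mathcal G$. By \cref{lem:bd_properties}(vi), graphs in $\mathcal G$ have treewidth at most $c$. Hence \pname{Independent Set} is solvable on $\mathcal G$ in time $f_{\mathcal G}(n)=2^{\bigO{c}}\cdot n$ via the textbook treewidth dynamic programming (a tree decomposition of width $\bigO{c}$ being computable in polynomial time for fixed $c$). \cref{thm:generic_lifting} then yields delay $f_{\mathcal G}(n)\cdot n$, which is polynomial. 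Prefixing each output with $Y$ costs $\bigO{|Y|}$ per solution.

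Third, I would handle the two degenerate cases. Since $Y$ is good, $\Lift(Y)\neq\emptyset$, so at least one solution is produced; by \cref{thm:generic_lifting} termination occurs within polynomial postcalculation time. If \cref{rrule:isbd_easy} was applied, the same argument works verbatim, since the original $(G,X,t)$ is unchanged and $G\setminus X$ still has bridgedepth at most $c$.

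The main obstacle I expect is the subtle point that $G\setminus X$ has bounded bridgedepth only for the \emph{original} modulator $X$, not for intermediate graphs. It is crucial to work in $G$ and not in $G'$ or in intermediate instances whose edge sets were modified, for example by \cref{rrule:isbd_many_confs}. In those modified instances the trace-to-solution correspondence could break, so the whole construction must be carried out in the original instance.
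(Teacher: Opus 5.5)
Your proposal is correct and follows the paper's proof: run \cref{thm:generic_lifting} on $(G \setminus (X \cup N_G(Y)), t-|Y|)$ and output $Y \cup Z$ for each $Z$ it produces. You also spell out details the paper leaves implicit, all of which are sound: the bijection with $\Lift(Y)$, the heredity of the bounded-bridgedepth class, and the treewidth-based \pname{Independent Set} subroutine.
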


\begin{proof}
    Define $\I_Y$ as the outputs of \cref{thm:generic_lifting} when applied to $(G \setminus (X \cup N_G(Y)), t - |Y|)$ and set $\Lift(Y) = \{Y \cup Z \mid Z \in \I_Y\}$; we can output the element $Z \cup Y$ of $\Lift(Y)$ as soon as \cref{thm:generic_lifting} outputs $Z$, so our algorithm has delay bounded by that of \cref{thm:generic_lifting}.
\end{proof}

\begin{theorem}
    \label{thm:isbd_kernel}
    There is a \pdkernel{} of polynomial size for \pname{Enum Independent Set} parameterized by the size of a given vertex-deletion distance to $c$-bridgedepth.
\end{theorem}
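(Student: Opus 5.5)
The plan is to obtain \cref{thm:isbd_kernel} as a direct instantiation of \cref{thm:framework_kernel}. We take as compression algorithm $\A_1$ the following procedure: first test \cref{rrule:isbd_easy} (possible in polynomial time, since $G \setminus X$ has bounded bridgedepth and hence bounded treewidth by \cref{lem:bd_properties}); if it does not apply, run \cref{alg:isbd_compression}. The core is $(\Lambda_H,\id)$, where $\Lambda_H$ is the initial modulator $X$ minus the vertices deleted by the rules; in fact only edge deletions touch $X$, so $\Lambda_H \subseteq V(G') \cap V(G)$.

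\textbf{Size and \cref{cond:good_decision_kernel}.} If \cref{rrule:isbd_easy} fires, the output $(G[X],X,0)$ has $|X|$ vertices. Otherwise, \cref{lem:isbd_compressed} gives the polynomial size bound for fixed $c$, and $|V(G')| \le |V(G)|$ because no rule adds vertices. For the core property we use \cref{lem:isbd_all_x_traces}: each rule is sound with respect to the current modulator, and the rule \cref{rrule:isbd_back-n-forth} preserves the traces of $X$ as well. Composing the rules through \cref{lem:condition_one_composes} and restricting the core through \cref{obs:good_core_subset} therefore shows that every good trace of $X$ survives. One point needs care here: \cref{alg:isbd_compression} enlarges the modulator at each recursive level. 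We must check that the invariants \cref{inv:isbd_no_lonely_r_is,inv:isbd_controlled_bd} are still available after $X$ is enlarged, so that every later soundness proof remains valid. The argument is that $\alpha$ of the remainder can only drop when vertices are moved into the modulator, while $t$ is unchanged.

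\textbf{\cref{cond:decidable_trace,cond:poly_delay}.} These follow from \cref{lem:isbd_base_y_check,lem:isbd_lex_lifting}. The choosing algorithm $\A_c$, given $S' \in \Sol(G',X',t')$, works as follows:
\begin{enumerate}
\item set $Y = S' \cap X$;
\item check that $Y$ is a good trace in $G$, via \cref{thm:generic_lifting} on $G \setminus X$, which has bounded treewidth;
\item compute the lexicographically smallest solution $B_Y$ of $G'$ with trace $Y$;
\item answer \YES{} exactly when $S' = B_Y$.
\end{enumerate}
Computing $B_Y$ in step 3 requires running \cref{thm:generic_lifting} on $G' \setminus (X \cup N(Y))$. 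This graph is not $G\setminus X$, so we need polynomial-time \pname{Independent Set} on the hereditary class containing $G'[X'\setminus X]$. That is supplied by \cref{obs:isbd_bounded_x_bd}: $G'[X' \setminus X]$ has bridgedepth at most $c$, so its treewidth is bounded by \cref{lem:bd_properties}. This hereditary-class check is the step I expect to be the main obstacle. A fallback exists if it fails: $|V(G')|$ is polynomial in $|X|$, and since step 4 only compares against $S'$, we can instead test whether a lexicographically smaller extension exists by a flashlight walk with a bounded-treewidth oracle. Uniqueness of the canonical solution is immediate from the lexicographic choice.

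\textbf{Trivial case and conclusion.} In the case where \cref{rrule:isbd_easy} fired, $V(H) = X$, so every solution of $(G[X],X,0)$ is its own trace. $\A_c$ then just checks goodness, and $\A_e$ is the enumeration given in the safeness proof of that rule. The theorem then follows from \cref{thm:framework_kernel}. Combined with the decision lower bound of~\cite{bridgedepth}, and with the reduction from \pname{Enum Vertex Cover} to \pname{Enum Independent Set} that preserves the modulator, this yields \cref{thm:dichotomy}.
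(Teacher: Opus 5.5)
Your proposal is correct and follows the paper's proof: the theorem comes from plugging \cref{lem:isbd_all_x_traces}, \cref{lem:isbd_base_y_check} and \cref{lem:isbd_lex_lifting} into \cref{thm:framework_kernel}, and \cref{obs:isbd_bounded_x_bd} is exactly what makes the flashlight search on $G'\setminus(X\cup N(Y))$ polynomial, so your fallback is unnecessary. One small correction: the identifications in \cref{rrule:isbd_type2,rrule:isbd_type4} can add edges incident to $X$, so ``only edge deletions touch $X$'' is not literally true, but what you actually need still holds, namely that no vertex of $X$ is ever deleted.
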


\begin{proof}
    Our result follows immediately from \cref{thm:framework_kernel,lem:isbd_all_x_traces,lem:isbd_base_y_check,lem:isbd_lex_lifting}.
\end{proof}

\begin{theorem}
    \label{thm:vcbd_kernel}
    There is a \pdkernel{} of polynomial size for \pname{Enum Vertex Cover} parameterized by the size of a given vertex-deletion distance to $c$-bridgedepth.
\end{theorem}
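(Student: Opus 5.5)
The plan is to transfer \cref{thm:isbd_kernel} from \pname{Enum Independent Set} to \pname{Enum Vertex Cover} through the complementation bijection noted at the start of \cref{sec:enum_is}, and to check that this bijection commutes with both the compression and the lifting algorithm.

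\textbf{Reduction of the input.} Given an instance $(G, X, k)$ of \pname{Enum Vertex Cover}, where $X$ is a modulator to $c$-bridgedepth, I form the instance $(G, X, t)$ of \pname{Enum Independent Set} with $t = |V(G)| - k$. The parameter $|X|$ is unchanged. The map $S \mapsto V(G) \setminus S$ is a bijection from $\Sol(G, X, t)$ to $\Sol(G, X, k)$.

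\textbf{Compression.} I run the compression algorithm $\A_1$ of \cref{thm:isbd_kernel} on $(G, X, t)$, obtaining $(G', X', t')$ with $|V(G')| \le \poly(|X|)$. I then output the \pname{Enum Vertex Cover} instance $(G', X', k')$ with $k' = |V(G')| - t'$. Two things need checking here.
\begin{itemize}
\item \emph{Equivalence.} Nonemptiness is preserved, since complementation is a bijection on both sides and $\A_1$ preserves nonemptiness.
\item \emph{Size.} The bound $k' \le |V(G')|$ holds whenever $t' \ge 0$. If $t' < 0$, or more generally if $k'$ falls outside $[0, |V(G')|]$, I clamp it, which does not change the solution set. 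So the output has polynomial size.
\end{itemize}

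\textbf{Lifting.} Given a vertex cover $Y' \in \Sol(G', X', k')$, I pass its complement $V(G') \setminus Y'$, which is an independent set of size at least $t'$, to the lifting algorithm $\A_2$ of \cref{thm:isbd_kernel}. I complement each output on the fly with respect to $V(G)$; this costs $O(|V(G)|)$ extra time per output, so polynomial delay is preserved. The required partition property transfers directly:
\begin{itemize}
\item complementation is a bijection on $\Sol(G', \cdot)$ and on $\Sol(G, \cdot)$;
\item the sets produced by $\A_2$ partition $\Sol(G, X, t)$;
\item hence their complements partition $\Sol(G, X, k)$;
\item empty outputs stay empty.
\end{itemize}

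\textbf{Main obstacle.} The only delicate point is the degenerate case where \cref{rrule:isbd_easy} fires and outputs $(G[X], X, 0)$. There $t' = 0$, so $k' = |X|$, and every subset of $X$ that is a vertex cover of $G[X]$ is a solution. The lifting still behaves correctly because it is defined on independent sets of $G[X]$, which are exactly the complements of those covers. I would also confirm that the lifting algorithm only needs the instances themselves and not any extra bookkeeping from the complementation; that holds, because $V(G)$ and $V(G')$ are known to the lifter.

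Alternatively, the whole argument can be phrased as a $0$-$\ePPT$ in each direction, after which the result follows by the composition argument of \cref{thm:bikernel_piping}.
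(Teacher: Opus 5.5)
Your proposal is correct and is the argument the paper relies on. The paper gives no separate proof of this theorem: it follows from \cref{thm:isbd_kernel} via the complementation $t = |V(G)| - k$ noted at the start of \cref{sec:enum_is}, which preserves the modulator and is a bijection between the solution sets of both the input and the compressed instance; your compression and lifting wrappers make this explicit. One nit: when $k' < 0$, output a trivial no-instance instead of raising $k'$ to $0$, since for an edgeless $G'$ that would make $\emptyset$ a solution of an instance that must have none.
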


\thmdichotomy*

\bibliography{refs}

@article{ABUKHZAM2010524,
title = {{A kernelization algorithm for $d$-Hitting Set}},
journal = {Journal of Computer and System Sciences},
volume = {76},
number = {7},
pages = {524-531},
year = {2010},
issn = {0022-0000},
doi = {https://doi.org/10.1016/j.jcss.2009.09.002},
author = {Faisal N. Abu-Khzam},
}

@book{murty,
    author = {Bondy, J.A. and Murty, U.S.R},
    title = {Graph Theory},
    year = {2008},
    isbn = {1846289696},
    publisher = {Springer Publishing Company, Incorporated},
    edition = {1st}
}

@book{cygan_parameterized,
  author    = {Marek Cygan and
               Fedor V. Fomin and
               Lukasz Kowalik and
               Daniel Lokshtanov and
               D{\'{a}}niel Marx and
               Marcin Pilipczuk and
               Michal Pilipczuk and
               Saket Saurabh},
  title     = {Parameterized Algorithms},
  publisher = {Springer},
  year      = {2015},
  doi       = {10.1007/978-3-319-21275-3}
}

@book{downey_fellows,
  title={Fundamentals of parameterized complexity},
  author={Downey, Rodney G and Fellows, Michael R},
  volume={4},
  year={2013},
  publisher={Springer}
}

@book{courcelle_book,
place={Cambridge},
series={Encyclopedia of Mathematics and its Applications},
title={Graph Structure and Monadic Second-Order Logic: A Language-Theoretic Approach},
DOI={10.1017/CBO9780511977619},
publisher={Cambridge University Press},
author={Courcelle, Bruno and Engelfriet, Joost},
year={2012},
collection={Encyclopedia of Mathematics and its Applications}}

@book{book_kernels,
    place={Cambridge},
    title={Kernelization: Theory of Parameterized Preprocessing},
    DOI={10.1017/9781107415157},
    publisher={Cambridge University Press},
    author={Fomin, Fedor V. and Lokshtanov, Daniel and Saurabh, Saket and Zehavi, Meirav},
    year={2019}
}

@article{golovach2022refined,
  title={Refined notions of parameterized enumeration kernels with applications to matching cut enumeration},
  author={Golovach, Petr A. and Komusiewicz, Christian and Kratsch, Dieter and Le, Van B.},
  journal={Journal of Computer and System Sciences},
  volume={123},
  pages={76--102},
  year={2022},
  publisher={Elsevier},
  doi = {10.1016/j.jcss.2021.07.005}
}

@Article{creignou2017enum,
author="Creignou, Nadia
and Meier, Arne
and M{\"u}ller, Julian-Steffen
and Schmidt, Johannes
and Vollmer, Heribert",
title="Paradigms for Parameterized Enumeration",
journal="Theory of Computing Systems",
year="2017",
month="May",
day="01",
volume="60",
number="4",
pages="737--758",
issn="1433-0490",
doi="10.1007/s00224-016-9702-4",
url="https://doi.org/10.1007/s00224-016-9702-4"
}

@article{bougeret_is_td,
	author = {Bougeret, Marin and Sau, Ignasi},
	date = {2019/10/01},
	doi = {10.1007/s00453-018-0468-8},
	id = {Bougeret2019},
	isbn = {1432-0541},
	journal = {Algorithmica},
	number = {10},
	pages = {4043--4068},
	title = {How Much Does a Treedepth Modulator Help to Obtain Polynomial Kernels Beyond Sparse Graphs?},
	url = {https://doi.org/10.1007/s00453-018-0468-8},
	volume = {81},
	year = {2019},
}

@article{td_modulator_approx,
title = {Kernelization using structural parameters on sparse graph classes},
journal = {Journal of Computer and System Sciences},
volume = {84},
pages = {219-242},
year = {2017},
issn = {0022-0000},
doi = {https://doi.org/10.1016/j.jcss.2016.09.002},
url = {https://www.sciencedirect.com/science/article/pii/S0022000016300812},
author = {Jakub Gajarský and Petr Hliněný and Jan Obdržálek and Sebastian Ordyniak and Felix Reidl and Peter Rossmanith and Fernando {Sánchez Villaamil} and Somnath Sikdar},
}

@article{creignou_hard,
    title = {A complexity theory for hard enumeration problems},
    journal = {Discrete Applied Mathematics},
    volume = {268},
    pages = {191-209},
    year = {2019},
    issn = {0166-218X},
    doi = {https://doi.org/10.1016/j.dam.2019.02.025},
    author = {Nadia Creignou and Markus Kröll and Reinhard Pichler and Sebastian Skritek and Heribert Vollmer},
}

@article{vc_fvs,
	author = {Jansen, Bart M. P. and Bodlaender, Hans L.},
	date = {2013/08/01},
	doi = {10.1007/s00224-012-9393-4},
	id = {Jansen2013},
	isbn = {1433-0490},
	journal = {Theory of Computing Systems},
	number = {2},
	pages = {263--299},
	title = {{Vertex Cover Kernelization Revisited - Upper and Lower Bounds for a Refined Parameter}},
	url = {https://doi.org/10.1007/s00224-012-9393-4},
	volume = {53},
	year = {2013},
}

@article{nemhauser_crown,
title = {Crown reductions for the Minimum Weighted Vertex Cover problem},
journal = {Discrete Applied Mathematics},
volume = {156},
number = {3},
pages = {292-312},
year = {2008},
note = {Combinatorial Optimization 2004},
issn = {0166-218X},
doi = {https://doi.org/10.1016/j.dam.2007.03.026},
url = {https://www.sciencedirect.com/science/article/pii/S0166218X07001308},
author = {Miroslav Chlebík and Janka Chlebíková},
}

@article{bafna_fvs_apx,
author = {Bafna, Vineet and Berman, Piotr and Fujito, Toshihiro},
title = {A 2-Approximation Algorithm for the Undirected Feedback Vertex Set Problem},
journal = {SIAM Journal on Discrete Mathematics},
volume = {12},
number = {3},
pages = {289-297},
year = {1999},
doi = {10.1137/S0895480196305124},
URL = {https://doi.org/10.1137/S0895480196305124},
eprint = {https://doi.org/10.1137/S0895480196305124},
}

@article{bridgedepth,
  author       = {Marin Bougeret and
                  Bart M. P. Jansen and
                  Ignasi Sau},
  title        = {Bridge-Depth Characterizes which Minor-Closed Structural Parameterizations
                  of Vertex Cover Admit a Polynomial Kernel},
  journal      = {{SIAM} Journal on Discrete Mathematics},
  volume       = {36},
  number       = {4},
  pages        = {2737--2773},
  year         = {2022},
  doi          = {10.1137/21M1400766},
}

@misc{enum_long_path,
      title={Polynomial-Size Enumeration Kernelizations for Long Path Enumeration}, 
      author={Christian Komusiewicz and Diptapriyo Majumdar and Frank Sommer},
      year={2025},
      eprint={2502.21164},
      archivePrefix={arXiv},
      primaryClass={cs.DM},
      url={https://arxiv.org/abs/2502.21164}, 
}

@InProceedings{bagan_delay,
    author="Bagan, Guillaume
    and Durand, Arnaud
    and Grandjean, Etienne",
    editor="Duparc, Jacques
    and Henzinger, Thomas A.",
    title="On Acyclic Conjunctive Queries and Constant Delay Enumeration",
    booktitle="Computer Science Logic",
    year="2007",
    publisher="Springer Berlin Heidelberg",
    address="Berlin, Heidelberg",
    pages="208--222",
    isbn="978-3-540-74915-8"
}

@inproceedings{triangle_incremental,
    author = {Carmeli, Nofar and Kenig, Batya and Kimelfeld, Benny},
    title = {Efficiently Enumerating Minimal Triangulations},
    year = {2017},
    isbn = {9781450341981},
    publisher = {Association for Computing Machinery},
    address = {New York, NY, USA},
    url = {https://doi.org/10.1145/3034786.3056109},
    doi = {10.1145/3034786.3056109},
    booktitle = {Proceedings of the 36th ACM SIGMOD-SIGACT-SIGAI Symposium on Principles of Database Systems},
    pages = {273–287},
    numpages = {15},
    location = {Chicago, Illinois, USA},
    series = {PODS '17}
}

@inproceedings{fo_query_delay,
    author = {Durand, Arnaud and Schweikardt, Nicole and Segoufin, Luc},
    title = {Enumerating answers to first-order queries over databases of low degree},
    year = {2014},
    isbn = {9781450323758},
    publisher = {Association for Computing Machinery},
    address = {New York, NY, USA},
    url = {https://doi.org/10.1145/2594538.2594539},
    doi = {10.1145/2594538.2594539},
    booktitle = {Proceedings of the 33rd ACM SIGMOD-SIGACT-SIGART Symposium on Principles of Database Systems},
    pages = {121–131},
    numpages = {11},
    location = {Snowbird, Utah, USA},
    series = {PODS '14}
}

@article{frequent_incremental,
    author = {Kimelfeld, Benny and Kolaitis, Phokion G.},
    title = {The Complexity of Mining Maximal Frequent Subgraphs},
    year = {2015},
    issue_date = {December 2014},
    publisher = {Association for Computing Machinery},
    address = {New York, NY, USA},
    volume = {39},
    number = {4},
    issn = {0362-5915},
    url = {https://doi.org/10.1145/2629550},
    doi = {10.1145/2629550},
    journal = {ACM Trans. Database Syst.},
    month = dec,
    articleno = {32},
    numpages = {33}
}

@PHDTHESIS{strozecki_thesis,
    url = "http://www.theses.fr/2010PA077178",
    title = "Enumeration complexity and matroid decomposition",
    author = "Strozecki, Yann",
    year = "2010",
    school = "Paris 7",
}

@article{maximal_ind_set_delay,
    title = {On generating all maximal independent sets},
    journal = {Information Processing Letters},
    volume = {27},
    number = {3},
    pages = {119-123},
    year = {1988},
    issn = {0020-0190},
    doi = {https://doi.org/10.1016/0020-0190(88)90065-8},
    url = {https://www.sciencedirect.com/science/article/pii/0020019088900658},
    author = {David S. Johnson and Mihalis Yannakakis and Christos H. Papadimitriou},
}

@article{polynomial_hierarchy,
    title = {The polynomial-time hierarchy},
    journal = {Theoretical Computer Science},
    volume = {3},
    number = {1},
    pages = {1-22},
    year = {1976},
    issn = {0304-3975},
    doi = {https://doi.org/10.1016/0304-3975(76)90061-X},
    url = {https://www.sciencedirect.com/science/article/pii/030439757690061X},
    author = {Larry J. Stockmeyer},
}

@InProceedings{fernau_param_enum,
    author="Fernau, Henning",
    editor="Ibarra, Oscar H.
    and Zhang, Louxin",
    title="On Parameterized Enumeration",
    booktitle="Computing and Combinatorics",
    year="2002",
    publisher="Springer Berlin Heidelberg",
    address="Berlin, Heidelberg",
    pages="564--573",
    isbn="978-3-540-45655-1"
}

@InProceedings{buss_kernel,
    author="Buss, Jonathan F.
    and Goldsmith, Judy",
    editor="Choffrut, Christian
    and Jantzen, Matthias",
    title="{Nondeterminism within P}",
    booktitle="STACS 91",
    year="1991",
    publisher="Springer Berlin Heidelberg",
    address="Berlin, Heidelberg",
    pages="348--359",
    isbn="978-3-540-47002-1"
}

@article{damaschke_full_kernels,
    title = {Parameterized enumeration, transversals, and imperfect phylogeny reconstruction},
    journal = {Theoretical Computer Science},
    volume = {351},
    number = {3},
    pages = {337-350},
    year = {2006},
    note = {Parameterized and Exact Computation},
    issn = {0304-3975},
    doi = {https://doi.org/10.1016/j.tcs.2005.10.004},
    author = {Peter Damaschke},
}

@article{damaschke_cluster_editing,
	author = {Damaschke, Peter},
	date = {2010/02/01},
	doi = {10.1007/s00224-008-9130-1},
	isbn = {1433-0490},
	journal = {Theory of Computing Systems},
	number = {2},
	pages = {261--283},
	title = {Fixed-Parameter Enumerability of Cluster Editing and Related Problems},
	url = {https://doi.org/10.1007/s00224-008-9130-1},
	volume = {46},
	year = {2010},
}

@article{meeks_oracle,
	author = {Meeks, Kitty},
	date = {2019/02/01},
	doi = {10.1007/s00453-018-0404-y},
	id = {Meeks2019},
	isbn = {1432-0541},
	journal = {Algorithmica},
	number = {2},
	pages = {519--540},
	title = {Randomised Enumeration of Small Witnesses Using a Decision Oracle},
	url = {https://doi.org/10.1007/s00453-018-0404-y},
	volume = {81},
	year = {2019}
}

@InProceedings{multicut_ipec,
  author =	{C. M. Gomes, Guilherme and Juliano, Emanuel and Martins, Gabriel and F. dos Santos, Vinicius},
  title =	{{Matching (Multi)Cut: Algorithms, Complexity, and Enumeration}},
  booktitle =	{19th International Symposium on Parameterized and Exact Computation (IPEC 2024)},
  pages =	{25:1--25:15},
  series =	{Leibniz International Proceedings in Informatics (LIPIcs)},
  ISBN =	{978-3-95977-353-9},
  ISSN =	{1868-8969},
  year =	{2024},
  volume =	{321},
  address =	{Dagstuhl, Germany},
  URN =		{urn:nbn:de:0030-drops-222514},
  doi =		{10.4230/LIPIcs.IPEC.2024.25},
}

@inproceedings{oscar_degeneracy,
    author = {Bartier, Valentin and Defrain, Oscar and Mc Inerney, Fionn},
    title = {Hypergraph Dualization with FPT-delay Parameterized by the Degeneracy and Dimension},
    year = {2024},
    isbn = {978-3-031-63020-0},
    publisher = {Springer-Verlag},
    address = {Berlin, Heidelberg},
    url = {https://doi.org/10.1007/978-3-031-63021-7_9},
    doi = {10.1007/978-3-031-63021-7_9},
    booktitle = {Combinatorial Algorithms: 35th International Workshop, IWOCA 2024, Ischia, Italy, July 1–3, 2024, Proceedings},
    pages = {111–125},
    numpages = {15},
    location = {Ischia, Italy}
}

@misc{enum_dcut,
      title={Enumeration Kernels of Polynomial Size for Cuts of Bounded Degree}, 
      author={Christian Komusiewicz and Diptapriyo Majumdar},
      year={2025},
      eprint={2308.01286},
      archivePrefix={arXiv},
      primaryClass={cs.DS},
      url={https://arxiv.org/abs/2308.01286}, 
}

@article{eiter2003new,
  title={New results on monotone dualization and generating hypergraph transversals},
  author={Eiter, Thomas and Gottlob, Georg and Makino, Kazuhisa},
  journal={SIAM Journal on Computing},
  volume={32},
  number={2},
  pages={514--537},
  year={2003},
  publisher={SIAM}
}

@article{cross_composition,
    author = {Bodlaender, Hans L. and Jansen, Bart M. P. and Kratsch, Stefan},
    title = {Kernelization Lower Bounds by Cross-Composition},
    journal = {SIAM Journal on Discrete Mathematics},
    volume = {28},
    number = {1},
    pages = {277-305},
    year = {2014},
    doi = {10.1137/120880240},
    URL = {https://doi.org/10.1137/120880240},
    eprint = {https://doi.org/10.1137/120880240}
}

@article{crown_decomp_fellows,
	author = {Abu-Khzam, Faisal N. and Fellows, Michael R. and Langston, Michael A. and Suters, W. Henry},
	doi = {10.1007/s00224-007-1328-0},
	id = {Abu-Khzam2007},
	isbn = {1433-0490},
	journal = {Theory of Computing Systems},
	number = {3},
	pages = {411--430},
	title = {Crown Structures for Vertex Cover Kernelization},
	url = {https://doi.org/10.1007/s00224-007-1328-0},
	volume = {41},
	year = {2007},
}

@article{branch_bound,
    ISSN = {00129682, 14680262},
    URL = {http://www.jstor.org/stable/1910129},
    author = {A. H. Land and A. G. Doig},
    journal = {Econometrica},
    number = {3},
    pages = {497--520},
    publisher = {[Wiley, Econometric Society]},
    title = {An Automatic Method of Solving Discrete Programming Problems},
    volume = {28},
    year = {1960}
}

@book{flum_grohe,
  author       = {J{\"{o}}rg Flum and
                  Martin Grohe},
  title        = {Parameterized Complexity Theory},
  series       = {Texts in Theoretical Computer Science. An {EATCS} Series},
  publisher    = {Springer},
  year         = {2006},
  url          = {https://doi.org/10.1007/3-540-29953-X},
  doi          = {10.1007/3-540-29953-X},
  isbn         = {978-3-540-29952-3},
}

@book{niedermeier_book,
    author = {Niedermeier, Rolf},
    title = {Invitation to Fixed-Parameter Algorithms},
    publisher = {Oxford University Press},
    year = {2006},
    month = {02},
    isbn = {9780198566076},
    doi = {10.1093/acprof:oso/9780198566076.001.0001},
    url = {https://doi.org/10.1093/acprof:oso/9780198566076.001.0001},
}

@InProceedings{original_crown,
author="Chor, Benny
and Fellows, Mike
and Juedes, David",
editor="Hromkovi{\v{c}}, Juraj
and Nagl, Manfred
and Westfechtel, Bernhard",
title="Linear Kernels in Linear Time, or How to Save {$k$} Colors in {$O(n^2)$} Steps",
booktitle="Graph-Theoretic Concepts in Computer Science",
  url          = {https://doi.org/10.1007/978-3-540-30559-0\_22},

year="2005",
publisher="Springer Berlin Heidelberg",
address="Berlin, Heidelberg",
pages="257--269",
isbn="978-3-540-30559-0"
}

@inproceedings{vc_fvs_strong_pd_kernel,
  author       = {Marin Bougeret and
                  Guilherme C. M. Gomes and
                  Vin{\'{\i}}cius Fernandes dos Santos and
                  Ignasi Sau},
  title        = {{Enumeration Kernels for Vertex Cover and Feedback Vertex Set}},
  booktitle    = {Proc. of the 20th International Symposium on Parameterized and Exact Computation (IPEC)},
  series       = {LIPIcs},
  volume       = {358},
  pages        = {23:1--23:18}, 
  year         = {2025}, 
  note = {The full version is available at \url{https://arxiv.org/pdf/2509.08475}},
  doi          = {10.4230/LIPICS.IPEC.2025.23}, 
}

@article{strozecki_eatcs,
  title={Enumeration complexity},
  author={Strozecki, Yann},
  journal={Bulletin of EATCS},
  volume={3},
  number={129},
  year={2019}
}

@article{MajumdarRR18,
  author    = {Diptapriyo Majumdar and
               Venkatesh Raman and
               Saket Saurabh},
  title     = {Polynomial Kernels for Vertex Cover Parameterized by Small Degree
               Modulators},
  journal   = {Theory of Computing Systems},
  volume    = {62},
  number    = {8},
  pages     = {1910--1951},
  year      = {2018},
  doi       = {10.1007/s00224-018-9858-1},
}

@inproceedings{FominS16,
  author    = {Fedor V. Fomin and
               Torstein J. F. Str{\o}mme},
  title     = {Vertex Cover Structural Parameterization Revisited},
  booktitle = {Proc. of the 42nd International
                  Workshop on Graph-Theoretic Concepts in Computer Science (WG)},
  series    = {LNCS},
  volume    = {9941},
  pages     = {171--182},
  year      = {2016},
  doi       = {10.1007/978-3-662-53536-3_15},
}

@inproceedings{HolsK17,
  author    = {Eva{-}Maria C. Hols and
               Stefan Kratsch},
  title     = {Smaller Parameters for Vertex Cover Kernelization},
  booktitle = {Proc. of the 12th International Symposium on Parameterized and Exact Computation (IPEC)},
  series    = {LIPIcs},
  volume    = {89},
  pages     = {20:1--20:12},
  year      = {2017},
  doi       = {10.4230/LIPIcs.IPEC.2017.20},
}

@article{CyganLPPS14,
  author       = {Marek Cygan and
                  Daniel Lokshtanov and
                  Marcin Pilipczuk and
                  Michal Pilipczuk and
                  Saket Saurabh},
  title        = {On the Hardness of Losing Width},
  journal      = {Theory of Computing Systems},
  volume       = {54},
  number       = {1},
  pages        = {73--82},
  year         = {2014},  
  doi          = {10.1007/S00224-013-9480-1}, 
}

\end{document}